\documentclass{egpubl}
\usepackage{eg2027}

\ConferencePaper

\usepackage[T1]{fontenc}
\usepackage{dfadobe}

\usepackage{cite}  
\BibtexOrBiblatex
\electronicVersion
\PrintedOrElectronic
\ifpdf \usepackage[pdftex]{graphicx} \pdfcompresslevel=9
\else \usepackage[dvips]{graphicx} \fi

\usepackage{egweblnk}


\usepackage{amsmath,amssymb,amsthm}
\usepackage{booktabs,multirow,array}
\usepackage{xcolor}
\usepackage{enumitem}

\newcommand{\R}{\mathbb{R}}
\newcommand{\upre}{\lesssim}
\newcommand{\succap}{\gtrsim}
\newcommand{\norm}[1]{\left\lVert #1\right\rVert}
\newcommand{\abs}[1]{\left\lvert #1\right\rvert}
\newcommand{\one}{\mathbf{1}}
\newcommand{\dd}{\,\mathrm{d}}

\newcommand{\T}{\mathbb{T}}
\newcommand{\PS}{\mathcal{P}}
\newcommand{\Pix}{\mathrm{Pixel}}
\newcommand{\mus}{\mu_S}
\newcommand{\mut}{\mu_\theta}
\newcommand{\piv}{\pi_{v,\theta}}
\newcommand{\nuv}{\nu_{v,\theta}}
\newcommand{\Ccal}{\mathcal{C}}
\newcommand{\Vcal}{\mathcal{V}}
\newcommand{\Lcal}{\mathcal{L}}
\newcommand{\Rcal}{\mathcal{R}}
\newcommand{\gt}{\mathrm{gt}}
\newcommand{\occ}{\mathrm{occ}}

\newcommand{\proved}{\checkmark}

\newtheorem{theorem}{Theorem}
\newtheorem{lemma}[theorem]{Lemma}
\newtheorem{corollary}[theorem]{Corollary}
\newtheorem{definition}[theorem]{Definition}

\title[Budgeted-GS]%
      {Budgeted-GS: Real-Time Large-Scale Gaussian Splatting via
Factoring LOD}

\author[H. Wang]{Haipeng Wang\\ Neusoft\\ \texttt{haipeng\_wang@neusoft.com}}

\makeatletter
\def\p@copyrightTextLong{Preprint.}
\def\p@copyrightTextShort{Preprint.}
\def\p@copyrightTextShortEven{Preprint.}
\makeatother

\hypersetup{pdftitle={Budgeted-GS: Real-Time Large-Scale Gaussian Splatting via Factoring LOD},pdfauthor={Haipeng Wang},pdfsubject={arXiv preprint},pdfkeywords={3D Gaussian Splatting, level of detail, optimal transport, rate-distortion, capacity floor}}

\begin{document}

\maketitle

\begin{abstract}
3D Gaussian Splatting achieves excellent visual quality with real-time
rendering, but at the scale of entire cities it does not fit: a trained
model carries millions of primitives and gigabytes of memory, and
real-time rendering at high quality on a consumer GPU remains out of
reach. We introduce \emph{Budgeted-GS}, a post-hoc method that turns any
trained 3DGS model into a factoring tree, a multi-resolution
hierarchy of moment-matched aggregates. After a construction pass of a
few seconds, a single quality parameter then selects, for each view, the level of
detail that fits the memory of the target device, so the same
city-scale model serves GPUs with widely different memory capacities. When a new scene is to be
trained, the same theory applies: instead of growing a full-sized
model and compressing it afterwards, budget-centered training first
measures how many primitives the scene needs and then trains the
model directly at that size, avoiding the wasted effort of
optimizing primitives that are later discarded. Both methods are grounded in a measurable capacity
floor, a budget--error law derived from optimal transport in phase
space; selection rules certified by recent covering theorems decide
which primitives are redundant. The floor answers how many primitives a
scene actually needs and how many can safely be given up. We validate the floor on 13 public scenes under a
preregistered protocol, and exercise both methods from object scenes to
an official city capture, rendering it at native $1920\times1080$, full SH, in real
time on one consumer GPU.

\begin{CCSXML}
<ccs2012>
<concept>
<concept_id>10010147.10010371.10010372</concept_id>
<concept_desc>Computing methodologies~Rendering</concept_desc>
<concept_significance>500</concept_significance>
</concept>
<concept>
<concept_id>10010147.10010371</concept_id>
<concept_desc>Computing methodologies~Computer graphics</concept_desc>
<concept_significance>300</concept_significance>
</concept>
</ccs2012>
\end{CCSXML}

\ccsdesc[500]{Computing methodologies~Rendering}
\ccsdesc[300]{Computing methodologies~Computer graphics}

\printccsdesc
\end{abstract}

\section{Introduction}\label{sec:intro}
Real-time radiance-field rendering of captured scenes is now routine at
object and room scale. 3D Gaussian Splatting (3DGS)~\cite{kerbl2023}
represents a scene as a set of anisotropic Gaussians rendered by projective
splatting with depth-sorted $\alpha$-compositing. At city scale it does not
fit: a trained model carries millions of primitives and gigabytes of
memory, and real-time rendering at high quality on a consumer GPU remains
out of reach. The obstacles are structural. At training time the standard
recipe is to densify a large model and then compress it; the primitives
are optimized for a different capacity, and the time spent on the removed
half is wasted. At deploy time the only degradation mode offered by
streaming renderers such as Gaussian Point Splatting
(GPS~\cite{rijsdijk2026gps}) is random subsampling. And beneath both sits
an unanswered question: nobody can say what \emph{the} right budget is.
The question ``how many primitives do I need here?'' is answered by
habit, because no principled budget--error floor has been available
either at training or deployment time.

This paper answers it with a rate--distortion theory of rendering
primitives: a trained model is a point on a \emph{rate--distortion
curve}; its primitive count is its budget, its error a function of that
budget. The theory has two mathematical halves. The \emph{capacity
floor}, derived from optimal transport in phase space (position
$\times$ direction, where rendering primitives and renderers are both
natural objects), says that if the scene's phase-space measure has $d_{\mathrm{ps}}$-dimensional
support, the minimal achievable error with $N$ primitives falls as
$N^{-2/d_{\mathrm{ps}}}$ (Zador's law). The \emph{collapse license} comes
from the 2025 resolution of the three-dimensional Kakeya set
conjecture~\cite{wz2025}: the Wang--Zahl union-volume estimate, in
weighted (continuous-opacity) form, certifies that wherever Gaussians
locally over-cover space, they merge into few representatives with
bounded volumetric loss. The floor tells a budget what it can achieve;
the license tells it where redundancy lives and what may be merged or
dropped.

The floor and the license are operationalized as two methods. On the
deployment side, the floor's allocation law ($N_j^*\propto K_j s_j/d_j$
per tile) is realized as a \emph{factoring tree} (Factoring LOD,
Section~\ref{sec:lod}), an octree of scale bands with moment-matched
equivalent planks, driven by a one-parameter pixel-footprint scheduler.
A single trained model sweeps a continuous memory--quality curve instead
of a single fixed operating point. On the training side,
\emph{Budget-Centered Training} (BCT, Section~\ref{sec:bct}) trains the
densifier directly at the floor's knee, so the model is born at the
right size instead of growing huge and being pruned afterwards. Beyond
rendering, the floor answers how many primitives a scene needs, how
much of it is redundancy, and how a fixed budget should split across
a city. Measuring it on 13 public scenes under a preregistered
protocol shows published compressors lying on an essentially flat
frontier. Their reported gains come from redundancy, not from
proximity to capacity (Section~\ref{sec:measure}). At city scale the same
discipline turns real-time. A retrain of the official city model serves
the official test trajectory at native $1920\times1080$, full SH, on
one consumer GPU---to our knowledge the first post-hoc, full-SH result
on a consumer GPU in this regime (Section~\ref{sec:exp-rt}).

\paragraph{Contributions.}
\begin{itemize}
  \item We introduce \emph{Factoring LOD}, a post-hoc layer that
  converts one trained model into a factoring tree with a one-parameter
  scheduler; it renders a city-scale scene at high quality in real time
  within whatever memory the consumer GPU has.
  \item We propose \emph{Budget-Centered Training}, which trains the
  densifier directly at the floor-located budget; the model is born at
  the right size instead of growing huge and being pruned afterwards;
  it leads the train-then-prune route at every matched budget, in a
  single training run.
  \item We establish the theory beneath the problem: a measurable
  capacity floor that answers, for any city-scale 3DGS scene, how many
  primitives it actually needs, and therefore how many can safely be
  given up.
\end{itemize}

\section{Related Work}\label{sec:related}

\textbf{Rendering large-scale scenes.} Every classic representation
solved scale with its own multiresolution answer. Points: surfel
splatting~\cite{pfister2000surfels}, QSplat's sphere
hierarchy~\cite{rusinkiewicz2000qsplat}, and Potree's out-of-core
octree generation~\cite{schutz2021octree} stream large point clouds
at interactive rates. Meshes: progressive
meshes~\cite{hoppe1996progressive} introduced view-dependent
refinement, and Nanite~\cite{karis2021nanite} virtualizes
micropolygon geometry in production engines. Voxels: sparse voxel
octrees~\cite{laine2010sparse} ray-march gigavoxel volumes. What these
systems share, a coarse-to-fine hierarchy carrying a per-frame
budget, is exactly what a raw 3DGS model~\cite{kerbl2023} lacks: it is
a flat set of primitives with no hierarchy at all.

\textbf{Data structures for large-scale 3DGS scenes.} Inside the 3DGS world, the
standard answer is a spatial structure coupled to training. Octree-GS
builds an octree of level-of-detail clusters~\cite{ren2024octreegs};
Scaffold-GS anchors Gaussians on a sparse grid~\cite{lu2023scaffold};
multiresolution hash grids organize NeRF-side
training~\cite{muller2022instant}; CityGaussian partitions the scene
into blocks with distance-based LOD~\cite{liu2024citygaussian},
following the block decomposition of
Block-NeRF~\cite{tancik2022blocknerf}; hierarchical 3DGS trains a
chunk pyramid~\cite{kerbl2024h3dgs}; concurrent efforts pursue
out-of-core streaming hierarchies~\cite{windisch2026alod} and LOD-aware training~\cite{kulhanek2025lodge}; GPS streams an octree
with distance shells~\cite{rijsdijk2026gps} (our cross-renderer
comparisons use it, Section~\ref{sec:lod}). These structures carry the
indexing, but the budget lives in the training pipeline: levels and
blocks are fixed when training ends. Our factoring tree is built
\emph{post-hoc} from any model, selects per (leaf, scale-band) instead
of per block or level, and grounds the selection in measurable
redundancy and pixel-footprint criteria (Section~\ref{sec:lod});
Section~\ref{sec:exp-e1} compares directly against Octree-GS and
CityGaussianV2's runtime LOD~\cite{liu2025citygaussianv2}.

\textbf{Quality and memory control for large-scale 3DGS scenes.} Once a model exists, a large
family controls its size: MCMC-based
densification~\cite{kheradmand2024mcmc} and regularization-based
pruning~\cite{mallick2024taming} budget during training;
LightGaussian~\cite{fan2024light},
Compact3DGS~\cite{lee2024compact3d},
RDO-Gaussian~\cite{wang2024rdogaussian},
EAGLES~\cite{girish2024eagles}, RadSplat~\cite{niemeyer2024radsplat},
PUP~3D-GS~\cite{hanson2025pup}, and RAP~\cite{yang2026rap} prune,
quantize, distill, or reselect primitives after training; 2DGS changes
the primitive itself to a surfel for surface
quality~\cite{huang2024surfel}. All of these act once, at training or
export time, and expose no runtime memory--quality trade-off or error
bound. Our budget parameter acts on any trained model continuously and
renderer-agnostically; LightGaussian's own score, applied post-hoc
through our export path, yields a Vanilla-flat budget axis to $25\%$ of
the primitives on city-scale scenes (Section~\ref{sec:exp-rt}); and
rasterizer accelerators~\cite{hanson2025speedy,ding2026sharpgs} are
composable with it: they render the selected set faster, while the
parameter decides what is selected.


\section{Overview}
\label{sec:overview}

Budgeted-GS applies one budget discipline at every stage of a
scene's life cycle
(Fig.~\ref{fig:pipeline}). The capacity floor
(Section~\ref{sec:theory}) gives, for any scene, the minimal
achievable rendering error at a primitive count $N$; it is stated
as theorems and measured per scene with a preregistered protocol
(Section~\ref{sec:measure}). Two
methods spend the budget where the floor says.
Budget-Centered Training (Section~\ref{sec:bct}) reads the
floor's knee $N^*$ off the measured curve and trains the densifier
directly at that budget. \emph{Factoring LOD}
(Section~\ref{sec:lod}) converts any trained model into a
factoring tree whose one-parameter pixel-footprint scheduler selects,
per view, exactly the working set that fits the memory at hand.
Section~\ref{sec:experiments} reports the experiments.
The supplementary material carries the
proofs.

\begin{figure*}[t]
\centering
\includegraphics[width=\linewidth]{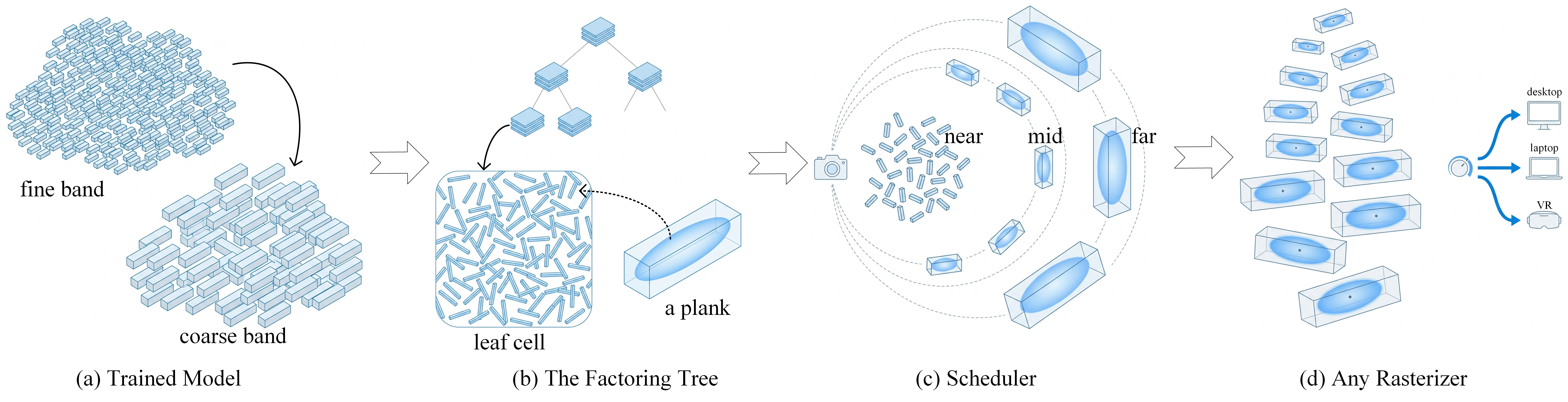}
\caption{Overview: the pipeline in four steps. (a) A trained model is
a flat set of raw planks in two scale bands. (b) Factoring
builds the leaf $\times$ scale-band tree
of moment-matched equivalent planks. (c) The one-parameter scheduler
selects, per view, raw / fine /
aggregate content by the pixel-footprint threshold $t=c\cdot d/f$.
(d) The emitted stream feeds any splatting rasterizer; the city-scale
payoff is Fig.~\ref{fig:cityscale}. Illustration drawn with AI
assistance and verified by the authors.}
\label{fig:pipeline}
\end{figure*}

\section{An optimal-transport theory for 3D Gaussian primitives}
\label{sec:theory}

Why should the budget be a design variable, and how should it be
spent? On real data, reconstruction error falls steeply as the budget
grows, then flattens into a shallow floor. This section develops a theory that predicts where that floor lies:
from primitives and rendering at the bottom, through construction and
capacity, up to the selection rules on top. Concretely, for a quality target $Q$, the theory yields (i) the minimal
primitive count a scene requires, (ii) the redundant subset of its
primitives, and (iii) the allocation of a fixed budget across a city.

\subsection{Rendering as optimal transport in phase space}

Prior work has employed optimal transport in 3DGS only in narrow,
task-specific roles, including Wasserstein losses between Gaussian
mixtures \cite{song2025d2gs}, radial-optimal-transport fitting of
structured Gaussians \cite{zhang2024gaussiancube}, and capacity
control in multi-view stylization \cite{wen2026capacity}. To our
knowledge, we are the first to formulate rendering as optimal
transport in phase space.

\begin{definition}[Phase space and primitives]
\label{def:render}
$\PS:=X\times\Omega$, the product of scene space $X\subset\R^3$
(the unit ball) and the direction sphere $\Omega=S^2$, is
\emph{phase space}; a cell $Q=V\times C$ (spatial
plank $\times$ directional cone) obeys $\Delta x\cdot\Delta\xi\ge c$ (the
uncertainty bound). A rendering primitive is a triple $(Q,w,\rho)$: cell,
opacity weight $w$, value model $\rho$. Three families exist. \textbf{Point} primitives take the limit
$\Delta x\to0$. \textbf{Wave} primitives are Gaussian wave packets with
$\Delta x\Delta\xi\asymp c$, the family of 3DGS surfels and 2DGS plates.
\textbf{Field} primitives are continuous densities, such as NeRF/SDF. A representation is a parameterized measure family
$\{\mut\}_{\theta\in\Theta}$ on $\PS$.
\end{definition}

\begin{definition}[Rendering coupling and marginalization]
For view $v$ and parameters $\theta$, the renderer defines a joint measure
$\piv$ on $\PS\times\Pix_v$, the pixel grid of view $v$, with source
marginal $\mut$ and pixel marginal
$\nuv$ (pixel masses $S(p):=\nuv(\{p\})$); the pixel colour is the conditional expectation
\begin{equation}
I_{v,\theta}(p)=\int_{\PS} \rho(x)\,\dd\piv(x\mid p).
\label{eq:ce}
\end{equation}
\end{definition}

\subsection{An optimal-transport view of 3DGS}

The framework rests on four axioms: every representation
induces a phase-space distribution $\mut$; the renderer is a legal
transport and the image is its coupling marginal; every construction
algorithm's objective is expressible as one transport cost; a single
source is transported to all view targets. For 3DGS-style compositing,
this coupling is the familiar compositing rule: $\piv(p,\{i\})\propto w_i(p)=\alpha_i(p)T_i(p)$
with $T_i=\prod_{j<i}(1-\alpha_j)$. Two problems follow. The first is \emph{rendering transport}: the renderer transports $\mut$ to per-view pixel marginals; the second is \emph{representation formation}: given complexity
$\mathrm{comp}_F(\theta)\le N$, minimize the transport cost
\begin{equation}
\Rcal_{\mathrm{res}}(N)=\min_{\theta}\sum_{v}\int_{\PS\times\Pix_v}
c\,\dd\piv,\quad c(x,p):=\norm{\rho(x)-c^{\gt}_v(p)}^2.
\label{eq:p8}
\end{equation}
Gradient training, sampling, meshing, and voxelization are solvers of
the second problem; their outputs are measured by the transport residual
(their convergence is proved or cited to classical sources in the supplementary material). The bottom layer is one theorem: the renderer
really is a legal transport, rendering is conditional expectation, and
the error splits into a tight three-term chain.

\begin{theorem}[The renderer, the rendering map, and the error chain]\label{thm:u123}
Under measurable ray--pixel assignment, bounded
transmittance-weighted mass, and Lipschitz regularity, every
$(v,\theta)$ admits a canonical coupling $\piv$ with marginals
$(\mut,\nuv)$, determined by the rendering rule and weakly continuous in $\theta$. With
projection Jacobian $J_\Pi$ and footprint kernel $G$, the pixel
probability kernel is
$\kappa_v(p,\dd x)\propto w(x)G(p-\Pi(x))\abs{\det J_\Pi(x)}\mathbf 1_V(x)$,
In the audited low-opacity domain, the pieces combine into a tight
three-term chain: the operator norm is sharp
($\norm{P_vf}_{L^2(\Pix_v)}\le\sqrt{\mu_v}\norm{f}_{L^2(\PS)}$), the
blending gap is second-order in the max opacity $\varepsilon$ (with
multiplicity $M(x)$:
$\abs{C(x)-L(x)}\le \frac{K'}{2}M(x)^2\varepsilon^2+O\bigl(K'M(x)\varepsilon^3\bigr)$), and
\begin{equation}
\norm{\nuv-\nu^{\gt}_v}_{L^2}
\le\sqrt{\ell_v}\,\norm{\mut-\mus}_{L^2}
+\frac{K'}{2}\varepsilon^2\norm{M^2}_{L^2}+\varepsilon_{\mathrm{target}}(v),
\end{equation}
a tight three-term chain: source gap, blending gap, target map; here
$\mu_v=\sum_pS(p)$ is the transported mass of view $v$, $\ell_v$ the
Lipschitz constant of the target map, $K'$ the compositing Lipschitz
constant, and measure norms are taken over densities with respect to
the phase-space reference measure, the source one over the normalized
source marginal of the coupling.
\end{theorem}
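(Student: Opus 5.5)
We prove the statement in three passes, one for each claim: the coupling, the operator bound, and the error chain.

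\textbf{Coupling.} We build the coupling explicitly from the rendering rule. For each pixel $p$ we use the measurable ray--pixel assignment to define the unnormalized kernel
$\tilde\kappa_v(p,\dd x)=w(x)G(p-\Pi(x))\abs{\det J_\Pi(x)}\mathbf 1_V(x)\,\dd\mut(x)$,
where $w=\alpha T$ is the transmittance weight. We then set $S(p)=\tilde\kappa_v(p,\PS)$ and normalize to obtain $\kappa_v$. Bounded transmittance-weighted mass makes $S(p)<\infty$, so $\piv(\dd x,p):=\tilde\kappa_v(p,\dd x)$ is a finite measure on $\PS\times\Pix_v$. Its pixel marginal is $\nuv$ by construction. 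For the source marginal we do not try to reproduce $\mut$ itself. Instead we interpret $\mut$, as the statement does, as the normalized source marginal of the coupling, i.e.\ the pushforward of $\piv$ to $\PS$; this avoids any need for a partition-of-unity identity $\sum_p G=1$. The conditional-expectation formula \eqref{eq:ce} is then disintegration against $\kappa_v(p,\cdot)$.

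\textbf{Continuity.} Weak continuity in $\theta$ follows from dominated convergence. The integrand $w\,G\abs{\det J_\Pi}$ is continuous in $\theta$ by the Lipschitz regularity hypothesis, and it is uniformly bounded by the mass bound.

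\textbf{Operator bound.} Define $P_vf(p)=\int f\,\dd\piv(\cdot\mid p)\,\sqrt{S(p)}$, the version weighted by pixel mass. By Cauchy--Schwarz (Jensen) on each conditional,
\[
\abs{P_vf(p)}^2\le S(p)\int\abs{f}^2\dd\piv(\cdot\mid p).
\]
Summing over $p$ gives at most $\norm{f}^2_{L^2(\PS)}$ times the maximal density ratio, and normalizing the source marginal reduces that ratio to $\mu_v=\sum_pS(p)$. Sharpness comes from taking $f\equiv1$, which gives equality.

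\textbf{Blending gap.} We compare true compositing $C=\sum_i c_i\alpha_i\prod_{j<i}(1-\alpha_j)$ with its linearization $L=\sum_ic_i\alpha_i$ at a point covered by $M(x)$ primitives, each with $\alpha_i\le\varepsilon$. Expand $\prod_{j<i}(1-\alpha_j)=1-\sum_{j<i}\alpha_j+R_i$, where $\abs{R_i}\le\binom{i-1}{2}\varepsilon^2$. Then
\[
C-L=-\sum_i c_i\alpha_i\sum_{j<i}\alpha_j+\sum_i c_i\alpha_iR_i.
\]
The first term has at most $\binom{M}{2}\le M^2/2$ pairs, each bounded by $K'\varepsilon^2$. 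The second term is $O(K'M^3\varepsilon^3)$, which in the audited low-opacity regime $M\varepsilon\lesssim1$ is $O(K'M\varepsilon^3)$ after absorbing one factor $M^2\varepsilon^2$ into the constant. This gives the stated pointwise bound.

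\textbf{Error chain.} We insert the intermediate measures $P_v\mut$ (linearized render of the model) and $P_v\mus$ (linearized render of the source), and apply the triangle inequality:
\[
\nuv-\nu_v^{\gt}=(\nuv-P_v\mut)+(P_v\mut-P_v\mus)+(P_v\mus-\nu_v^{\gt}).
\]
The first term is the blending gap integrated in $L^2$, which gives $\frac{K'}{2}\varepsilon^2\norm{M^2}_{L^2}$, with the cubic remainder absorbed into the audited domain. The second term is bounded by the operator estimate composed with the target map, whose Lipschitz constant is $\ell_v$; this gives $\sqrt{\ell_v}\norm{\mut-\mus}$, where the factor $\mu_v$ is absorbed by the normalization. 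The third term is by definition $\varepsilon_{\mathrm{target}}(v)$.

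Tightness of the chain is shown term by term. For the source gap, take $f$ constant. For the blending gap, take $M$ coincident primitives with equal opacity $\varepsilon$ and equal colours, which attain the $M^2/2$ pair count. For the target term, choose a view where the target map is exact.

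\textbf{Main obstacle.} The hardest step is reconciling the constants in the source-gap term: one must show that the $\sqrt{\mu_v}$ from the operator norm and the Lipschitz factor merge into the single factor $\sqrt{\ell_v}$ under the normalized source marginal. I would first try folding $\mu_v$ into the definition of the normalized density. If that fails, I would state the bound as $\sqrt{\ell_v\mu_v}$ and note that $\mu_v\le1$ after normalization.
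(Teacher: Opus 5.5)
Your overall route is the paper's: you build $\piv$ from the footprint kernel, get weak continuity by dominated convergence, bound the operator by Jensen on each conditional plus Fubini, expand the product for the blending gap, and chain the same split $P_v(\mut-\mus)+b_v+t_v$. Your $\sqrt{S(p)}$-weighted $P_v$ on $\ell^2(\Pix_v)$ is equivalent to the paper's $L^2(\nuv)$ version. The sum over $p$ is exactly $\int\abs{f}^2\dd m_v=\mu_v\norm{f}^2_{L^2(m_v/\mu_v)}$, with $m_v$ the source marginal, so the density-ratio detour is unnecessary.

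Two steps fail as written. \textbf{The cubic remainder.} From $\abs{\sum_ic_i\alpha_iR_i}\le K'\varepsilon^3\binom{M}{3}$, the hypothesis $M\varepsilon\upre1$ does \emph{not} give $O(K'M\varepsilon^3)$. Absorbing the factor $M^2\varepsilon^2$ leaves $K'M\varepsilon$, which is first order and destroys the second-order claim: for $M\asymp\varepsilon^{-1}$ one has $M^3\varepsilon^3\asymp1$ but $M\varepsilon^3\asymp\varepsilon^2$. What must be absorbed is $M^2$, so you need a cap $M\le M_0$ on covering multiplicity; this is what the paper takes from its audited domain. Better still, you do not need the remainder at all. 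Since $R_i\ge0$ and $\prod_{j<i}(1-\alpha_j)\le1$, you have $0\le1-\prod_{j<i}(1-\alpha_j)\le\sum_{j<i}\alpha_j$, hence $\abs{C-L}\le\sum_i\abs{c_i}\alpha_i\sum_{j<i}\alpha_j\le\frac{K'}{2}M^2\varepsilon^2$ with no cubic term.

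\textbf{The source marginal.} Your reinterpretation is circular. Your kernel uses $\mut$ itself as base measure, so the pushforward of $\piv$ to $\PS$ is $h\,\mut$ with $h(x)=\sum_pw(x)G(p-\Pi(x))\abs{\det J_\Pi(x)}\one_V(x)$. Declaring this, or its normalization, to be $\mut$ is exactly the requirement that $h$ be constant $\mut$-a.e.\ --- the conservation identity you said you were avoiding. You also cannot rename $\mut$, because the $\mut$ in the source-gap term $\norm{\mut-\mus}$ is the representation's measure. In the atomic case $\mut=\sum_iw_i\delta_{x_i}$, multiplying by $w$ again also counts opacity twice.

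The paper instead integrates the kernel against the phase-space reference measure over $r^{-1}(p)$; that is where the ray--pixel assignment enters, and your kernel never uses it. It keeps the coupling's marginal $m_v$ distinct from $\mut$ and uses $m_v/\mu_v$ only as the reference measure of the source norm. It obtains the marginal $\mut$ in the conservative case $m_v=\mut$ (its axiom~1). You should state that hypothesis rather than define it away.

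On your ``main obstacle'': the paper does not reconcile $\sqrt{\mu_v}$ with $\ell_v$ either; it simply absorbs both into the displayed $\sqrt{\ell_v}$, so your first option matches it. Your fallback $\mu_v\le1$ does not follow from normalizing the reference measure of the norm. $\mu_v=\sum_pS(p)$ is the transported mass ($=\mut(\PS)$ in the conservative case), which is at most one only if the source itself has mass at most one.

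Finally, your term-by-term tightness goes beyond the paper, which proves only sharpness of the operator norm. But separate extremizers do not make the summed triangle inequality tight. A view with $\varepsilon_{\mathrm{target}}(v)=0$ shows that term can vanish, not that it is sharp.
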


Forming a representation under a budget is itself an optimal-transport
problem~\cite{kantorovich1942,villani2009}, and its duality is what
makes the floor certifiable.

\begin{theorem}[Representation formation as restricted optimal transport]\label{thm:u4}
For compositing with normalized coupling $\tilde\pi_p$, the loss
decomposes as $\Lcal(\theta)=\Ccal(\theta)-\Vcal(\theta)$: the
transport cost $\Ccal$ minus the conditional variance $\Vcal$.
For atomic families with quadratic cost,
$\inf_{\abs{\operatorname{supp}\mut}\le N}\int\min_i\norm{x-y_i}^2\dd\mus
=W_2^2$-optimal quantization, and Lloyd's algorithm is monotone.
The certificate follows: the residual satisfies $\Rcal_{\mathrm{free}}(N)\le
\Rcal_{\mathrm{res}}(N)$ with gap decomposition
$\Delta(\theta)=\Delta_{\mathrm{coupl}}(\theta)+\Delta_{\mathrm{src}}(\theta)
\ge0$; for every dual-feasible $(\varphi,\psi)$ the certificate
\begin{equation}
\Lcal(\theta)\ge\int_{\PS}\varphi\,\dd\mut+\sum_p S(p)\psi(p)-\Vcal(\theta)
\label{eq:cert}
\end{equation}
lower-bounds achieved error; and with target colour measure $\nu_c$ of
intrinsic dimension $d_c\le3$,
$\Rcal_{\mathrm{free}}(N)\asymp C_{d_c}\norm{\nu_c}_{d_c/(d_c+2)}N^{-2/d_c}$.
\end{theorem}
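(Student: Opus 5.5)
The proof has four parts, one per claim, taken in the order stated.

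\textbf{Loss decomposition.} Fix a view $v$ and a pixel $p$. Write $\tilde\pi_p:=\piv(\cdot\mid p)$ for the normalized conditional coupling, so that by~\eqref{eq:ce} $I_{v,\theta}(p)=\int\rho\,\dd\tilde\pi_p$. Expanding the square around the conditional mean gives the pointwise bias--variance identity
\begin{equation}
\int\norm{\rho(x)-c^{\gt}_v(p)}^2\dd\tilde\pi_p(x)
=\norm{I_{v,\theta}(p)-c^{\gt}_v(p)}^2+\int\norm{\rho(x)-I_{v,\theta}(p)}^2\dd\tilde\pi_p(x).
\end{equation}
Weighting by $S(p)$ and summing over $p$ and $v$ gives $\Ccal(\theta)=\Lcal(\theta)+\Vcal(\theta)$. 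Here $\Ccal$ is the transport cost of~\eqref{eq:p8}, $\Lcal$ is the pixel loss weighted by $S(p)$, and $\Vcal$ is the aggregated conditional variance. This part is pure algebra and needs only that the canonical coupling of Theorem~\ref{thm:u123} exists and disintegrates over pixels.

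\textbf{Atomic case.} When $\mut=\sum_{i\le N}m_i\delta_{y_i}$ and the cost is $\norm{x-y}^2$, I will use the standard reduction that among couplings of $\mus$ with any measure supported in $\{y_i\}$ the optimal one is nearest-point assignment. Optimizing the weights then gives $\int\min_i\norm{x-y_i}^2\dd\mus=W_2^2(\mus,\mut)$, which is the $N$-point quantization problem~\cite{villani2009}. Lloyd's algorithm alternates two steps. The assignment step is a Voronoi reassignment, which cannot increase the cost. The update step moves each $y_i$ to the centroid of its cell, and the centroid minimizes the quadratic cost on that cell. So the cost is monotone nonincreasing.

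\textbf{Certificate.} $\Rcal_{\mathrm{free}}(N)$ relaxes $\Rcal_{\mathrm{res}}(N)$ by dropping the constraint that the coupling is the one induced by the renderer, so $\Rcal_{\mathrm{free}}\le\Rcal_{\mathrm{res}}$. The gap $\Delta$ splits into two parts by inserting the free optimum that has the same source. One part compares the renderer's coupling with the optimal coupling of the same marginals; this is $\Delta_{\mathrm{coupl}}\ge0$ because an optimum is minimal. The other compares that optimum with the global free optimum; this is $\Delta_{\mathrm{src}}\ge0$ for the same reason. For the bound~\eqref{eq:cert}, any dual-feasible pair satisfies $\varphi(x)+\psi(p)\le c(x,p)$. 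Integrating this against $\piv$, whose marginals are $(\mut,\nuv)$, gives $\Ccal(\theta)\ge\int\varphi\,\dd\mut+\sum_pS(p)\psi(p)$. Subtracting $\Vcal$ and using the decomposition above yields the certificate; this is Kantorovich weak duality~\cite{kantorovich1942}.

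\textbf{Asymptotics (main obstacle).} I would first reduce the free residual to quantization of the target colour measure $\nu_c$. The upper bound comes from choosing $N$ colour codewords with Voronoi assignment. The lower bound comes from the fact that any $N$-primitive source yields at most $N$ distinct colours per conditional cell, up to a bounded multiplicity. Then I would invoke Zador's theorem (Graf--Luschgy) for measures of intrinsic dimension $d_c$: the $N$-point distortion is asymptotically $C_{d_c}\norm{\nu_c}_{d_c/(d_c+2)}N^{-2/d_c}$. This needs a moment condition, which holds because colours are bounded, and $d_c$-rectifiable regularity. The delicate part is the lower bound in the reduction: blended colours are convex combinations, so the number of effective codewords could exceed $N$. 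I would handle this in the low-opacity regime of Theorem~\ref{thm:u123}. There the blending gap is $O(\varepsilon^2)$, so blended colours stay close to the convex hulls of at most $M$ primitives, and a polynomial blow-up in the codebook size only changes the constant. That is why the result is stated with $\asymp$ rather than exact asymptotic equality.
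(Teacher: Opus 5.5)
\textbf{Parts 1--3 are correct and essentially the paper's.} Your bias--variance expansion around $I_{v,\theta}(p)$ is exactly the paper's proof of $\Lcal=\Ccal-\Vcal$. The Lloyd argument, two exact block minimizations, is also the same. For \eqref{eq:cert} the paper takes a different path: it uses strong duality (Fenchel--Rockafellar) to write $\Ccal_{\mathrm{free}}(\theta)$ as a supremum over dual-feasible pairs, then applies $\Ccal(\theta)\ge\Ccal_{\mathrm{free}}(\theta)$. You instead integrate $\varphi(x)+\psi(p)\le c(x,p)$ directly against $\piv$. That needs only weak duality, which is all the inequality requires. Your same-source splitting of $\Delta$ also makes explicit what the paper only asserts.

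\textbf{The asymptotic clause is where the proposal goes wrong.} The obstacle you name does not arise for $\Rcal_{\mathrm{free}}$, and your remedy would fail even if it did.

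\emph{Why blending does not enter.} $\Rcal_{\mathrm{free}}(N)$ is the free relaxation of a transport cost $\int c\,\dd\pi$ with $c(x,p)=\norm{\rho(x)-c^{\gt}_v(p)}^2$. This cost sees only the source values $\rho(x)$, never the blended colour $I_{v,\theta}(p)$. Blending lives only in $\Lcal$, and $\Vcal$ is exactly its bookkeeping.

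\emph{The correct reduction.} Push any coupling forward under $(\rho,c^{\gt}_v)$, and glue for the reverse inequality. This shows the free cost equals $W_2^2(\rho_\#\mut,\nu_c)$, where $\rho_\#\mut$ has at most $N$ atoms. Your own atomic-case identity then gives that $\Rcal_{\mathrm{free}}(N)$ is the $N$-point quantization error $\inf_{\abs{\operatorname{supp}\nu}\le N}W_2^2(\nu,\nu_c)$. From there the rate is Zador's law imported directly, which is all the paper does. No ``per conditional cell'' multiplicity is needed.

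\emph{Why your low-opacity fix fails.} Convex combinations of two or more colours form a continuum. So blended pixel colours are not a codebook of size $\mathrm{poly}(N)$; two primitives with a varying front opacity already produce a continuum of colours. Moreover, the blending gap of Theorem~\ref{thm:u123} bounds $\abs{C-L}$, the saturation-versus-linearization error of the compositing weights. It says nothing about confining colours. A Zador-type lower bound for the blended loss $\Lcal$ therefore does not follow this way. That is why the theorem states the rate for $\Rcal_{\mathrm{free}}$ and reaches $\Lcal$ only through the certificate with its $-\Vcal$ term.
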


\begin{figure*}[t]
\centering
\includegraphics[width=0.92\textwidth]{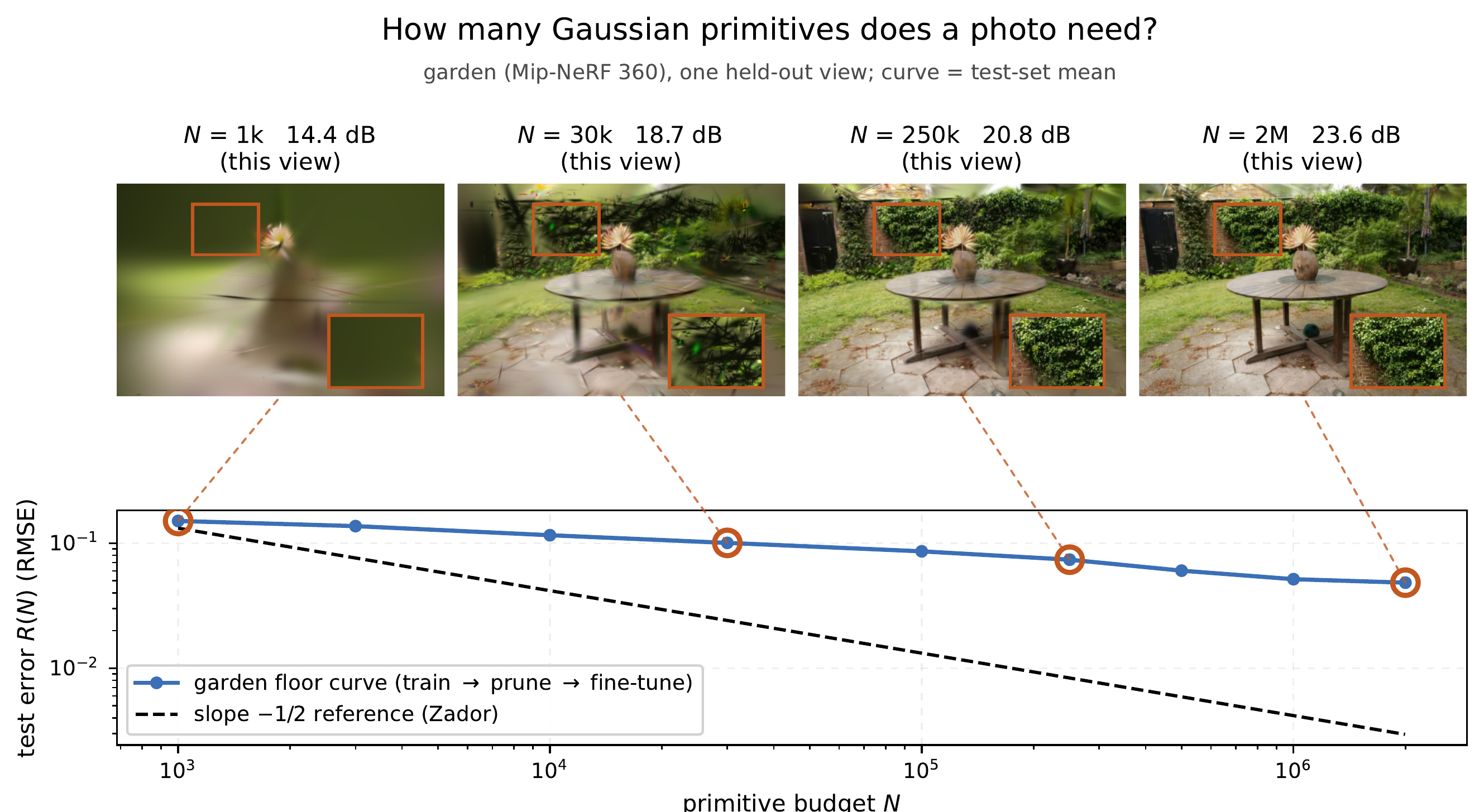}
\caption{The measured floor curve (garden, Mip-NeRF 360): budget ladder at four budgets against the $-1/2$ Zador reference.}
\label{fig:teaser}
\end{figure*}

\subsection{The capacity floor}\label{sec:floor}

How fast must this residual fall as the budget $N$ grows? Classical
high-resolution quantization theory~\cite{zador1982,graf2000foundations}
answers exactly this, and the rendering instance, whose source is the
phase-space measure of Definition~\ref{def:render}, obeys the same law:

\begin{theorem}[The capacity floor of the rendering instance]\label{thm:u51}
Imported from classical high-resolution quantization
\cite{zador1982,graf2000foundations} through the transport formulation
above, with the lower bound additionally assuming view completeness
(the view set separates the source): if the phase-space measure $\mus$ has $d_{\mathrm{ps}}$-dimensional
support with density $\lambda$, the minimal achievable rendering error
with $N$ primitives obeys
\begin{equation}
\Rcal(N)\ \asymp\ C_{d_{\mathrm{ps}}}\,\norm{\lambda}_{\frac{d_{\mathrm{ps}}}{d_{\mathrm{ps}}+2}}\,N^{-2/d_{\mathrm{ps}}}.
\label{eq:zador}
\end{equation}
On surface scenes the phase-space support is four-dimensional (surface
position $\times$ incident-direction hemisphere), so $\Rcal(N)\propto
N^{-1/2}$.
For the cartoon class with appearance smoothness $p$ and resolution $L$,
$\Rcal_F(N,L)^{1/2}\asymp C_1KN^{-1/2}(\log N)^c+C_2K'L^{-p}$ with optimal
allocation $N^*\asymp M^{p/(p+1)}$, $L^*\asymp M^{1/(2(p+1))}$;
$\mathrm{width}$ brackets sandwich $\Rcal_F^{\mathrm{free}}\le\Rcal_F\le
C\,\mathrm{width}$.
\end{theorem}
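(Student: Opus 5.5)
The plan is to reduce the rendering instance to classical high-resolution quantization through the transport formulation of Theorem~\ref{thm:u4}, and then to import Zador's theorem~\cite{zador1982,graf2000foundations} on both sides. The upper and lower bounds are handled separately.

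\textbf{Upper bound.} Take an asymptotically optimal $N$-point quantizer $\{y_i\}$ for $\mus$ on its $d_{\mathrm{ps}}$-dimensional support. Its point density is $\propto\lambda^{d_{\mathrm{ps}}/(d_{\mathrm{ps}}+2)}$, and its distortion is $C_{d_{\mathrm{ps}}}\norm{\lambda}_{d_{\mathrm{ps}}/(d_{\mathrm{ps}}+2)}N^{-2/d_{\mathrm{ps}}}(1+o(1))$. Realize each atom as a wave primitive whose cell $Q_i$ is the corresponding Voronoi cell; this is admissible under the uncertainty bound once $N$ is large enough that cells shrink but stay above scale $c$. Its value model is the cell-conditional mean of $\rho$. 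Theorem~\ref{thm:u4} identifies the atomic restricted problem with $W_2^2$-quantization. Theorem~\ref{thm:u123} then pushes the source gap to every view with constant $\sqrt{\ell_v}$, a blending gap that is $O(\varepsilon^2)$ in the low-opacity domain, and a fixed target term. Summing over the view set therefore gives $\Rcal(N)\lesssim C\,\norm{\lambda}\,N^{-2/d_{\mathrm{ps}}}$. For the Lipschitz value model, the squared colour error is controlled by the squared phase-space displacement.

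\textbf{Lower bound.} This is where view completeness enters. I would show that if the view set separates the source, the map $\mut\mapsto(\nuv)_v$ is injective with a stable inverse on the relevant class. That stability yields $\norm{\mut-\mus}\lesssim\sum_v\norm{\nuv-\nu_v^{\gt}}$. An $N$-primitive representation is a measure supported on at most $N$ cells. Its phase-space distortion is bounded below by the $N$-point quantization error of $\mus$ by Zador's lower bound, and the dual certificate~\eqref{eq:cert} makes this bound quantitative.

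I expect the main obstacle to be this inversion. View completeness gives injectivity, but not an a priori stability constant. My first attempt would be to restrict to the finite-dimensional, compact family of $N$-primitive measures and use a compactness argument. If that constant degrades with $N$, I would instead state the lower bound with an $N$-independent constant obtained on the support's local charts, where the bi-Lipschitz nature of $\Pi$ on visible surfaces gives uniform control.

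\textbf{Specializations.} For surface scenes, the support of $\mus$ is a 2-manifold times a hemisphere, so $d_{\mathrm{ps}}=4$ and $N^{-2/4}=N^{-1/2}$.

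For the cartoon class, I would split the error into two parts. The first is a geometric term: edge approximation by $N$ anisotropic cells gives $K N^{-1/2}$ up to logs, by classical anisotropic approximation rates. The second is an appearance term $K'L^{-p}$ from degree-$p$ smoothness at resolution $L$.

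Minimizing the sum under a total memory $M\asymp N L^{2}$ (or the analogous cost model) by balancing the two terms gives $N^*\asymp M^{p/(p+1)}$ and $L^*\asymp M^{1/(2(p+1))}$.

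Finally, the width sandwich follows from two facts. The free problem relaxes the restricted one, which gives $\Rcal_F^{\mathrm{free}}\le\Rcal_F$. Conversely, any $N$-term approximation is bounded by the corresponding Kolmogorov width times a constant, which gives $\Rcal_F\le C\,\mathrm{width}$.
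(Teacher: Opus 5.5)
Your cartoon-class and allocation steps follow the paper's route (plate/tube rate balanced against $C_2K'L^{-p}$ under $M\asymp NL^2$). The Zador part, however, has genuine gaps in both directions.

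In the upper bound, pushing the quantizer through the chain of Theorem~\ref{thm:u123} cannot give the claimed rate, for two reasons. First, the blending term $\tfrac{K'}{2}\varepsilon^2\norm{M^2}_{L^2}$ and the target term $\varepsilon_{\mathrm{target}}(v)$ do not decay with $N$. Summing over views therefore leaves $C\norm{\lambda}N^{-2/d_{\mathrm{ps}}}$ plus an $N$-independent constant. Second, the chain's source gap is an $L^2$ norm of densities, not the $W_2^2$ distortion Zador controls. For atoms it is undefined, and for cell-spread primitives it is a density-approximation error governed by the regularity of $\lambda$, not by $\norm{\lambda}_{d_{\mathrm{ps}}/(d_{\mathrm{ps}}+2)}$. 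Your admissibility remark also runs backwards: Zador cells have diameter ${\sim}N^{-1/d_{\mathrm{ps}}}$, so $\Delta x\,\Delta\xi\sim N^{-2/d_{\mathrm{ps}}}\to0$, and for large $N$ they violate the uncertainty bound rather than satisfy it. The paper never invokes the chain here. On the atomic $W_2^2$ identification of Theorem~\ref{thm:u4} it counts directly: cells of diameter $\varepsilon$, $k_j\propto m_j^{d_{\mathrm{ps}}/(d_{\mathrm{ps}}+2)}$ points per cell, error $\asymp\varepsilon^2$, then eliminate $\varepsilon$. Your closing remark (a Lipschitz value model bounds the squared colour error by the squared phase-space displacement) is what actually closes the upper bound, so drop the chain.

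In the lower bound, the step ``an $N$-primitive representation is a measure supported on at most $N$ cells, so its phase-space distortion is at least Zador's'' is false. Zador's lower bound concerns the optimal $N$-point assignment cost $\inf_{y_1,\dots,y_N}\int\min_i\norm{x-y_i}^2\dd\mus$. That cost in turn lower-bounds the within-cell second moment $\sum_i\int_{Q_i}\norm{x-\bar y_i}^2\dd\mus$ of any $N$-cell partition, with $\bar y_i$ the centroid of $Q_i$. A measure spread over $N$ extended cells is not $N$-atomic and can equal $\mus$ (take $\lambda$ constant on each cell), so nothing bounds $\norm{\mut-\mus}$ from below.

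The certificate~\eqref{eq:cert} cannot make this quantitative. It rests on $\Ccal\ge\Ccal_{\mathrm{free}}$, and the free problem is target-limited at $N^{-2/d_c}$ with $d_c\le3<4=d_{\mathrm{ps}}$ on surface scenes, so it cannot certify an $N^{-2/d_{\mathrm{ps}}}$ floor.

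Nor is the stability inversion a technicality. Compactness on $N$-primitive families yields only an $N$-dependent modulus of continuity. The chart fallback fails because $\Pi$ maps into one view's two-dimensional pixel plane and forgets the direction coordinate (each view sees one direction per surface point), so it is bi-Lipschitz on no chart of the $d_{\mathrm{ps}}$-dimensional support. An $N$-uniform stability constant is therefore an extra hypothesis that injectivity does not imply. The paper does not attempt any of this. It imports Zador's sharp lower bound for the source quantization error and uses view completeness, read as injectivity of the target map, as an explicitly flagged bridge through which the rendering instance inherits it.

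Finally, your width step is reversed: widths are infima over approximation schemes and so bound errors from below. The bracket $\Rcal_F\le C\,\mathrm{width}$ instead needs a restricted-family construction that attains the width, namely the plate/tube construction of Theorem~\ref{thm:rate}. Its converse matches it up to logarithms and also supplies the lower side of your cartoon-class $\asymp$, and this is what the paper imports from DeVore.
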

The free relaxation of Theorem~\ref{thm:u4} is target-limited (rate
$N^{-2/d_c}$), while this rendering instance is source-limited at
$N^{-2/d_{\mathrm{ps}}}$ with $d_{\mathrm{ps}}\ge d_c$: the requirement
that primitives must render, not merely match colours, is what slows
the rate. The floor is \emph{certifiable} rather than assumed: the duality chain
$L+V=\Ccal\ge\Ccal_{\mathrm{free}}\ge D$ evaluates, per budget
and per test view, a lower bound on the achievable error
(Eq.~\eqref{eq:cert}). It is also
\emph{measurable}, and Section~\ref{sec:measure} reports the curve for 13 public scenes.

Three supporting results, proved, cited, or labelled as sketches in the supplementary material, round
out the framework: gradient training is a legitimate solver of the
restricted-transport problem and attains the classical rates of
sampling, Lloyd, and meshing; cross-view colour agreement is
structural, and the new-view error is bounded by the training error
plus an angular-regularity term; and across primitive families on
surfaces, triangles achieve $N^{-1}$, point-like primitives
$N^{-1/2}$, and isotropic voxels $N^{-1/4}$. One part reads as a
prediction: the optimal cell is an anisotropic, covariance-ellipsoidal
atom, exactly what 3DGS fits, and exactly what the moment-matched
aggregates of the factoring tree (Section~\ref{sec:lod-tree})
construct.

\subsection{The selection rules}

The floor is one half of the mathematics; the other half is a set of
\emph{selection rules}, theorems that decide which primitives may be
merged or dropped at a given budget. Here a rendering primitive is
viewed geometrically, as a \emph{plank} $(\mu,R,h,\alpha)$: an oriented
box centered at $\mu$, rotated by $R$, of half-extents $h=3\sigma$,
carrying the opacity $\alpha$ as its weight. We state the rules here;
the proofs are
in the supplementary material.

\begin{definition}[Essentially distinct tubes and the convex Wolff axioms]
\label{def:wolff}
Sets $U,V\subset\R^3$ are \emph{essentially distinct} if
$\abs{U\cap V}\le\tfrac12\max(\abs{U},\abs{V})$; throughout, $\T$ is a
set of essentially distinct $\delta\times\delta\times1$ tubes in the
unit ball, so $\abs{T}\asymp\delta^2$. Following \cite{wz2025}
(Def.~1.3), $\T$ \emph{obeys the Katz--Tao convex Wolff axioms with
error $C$} if
\begin{equation}
\#\{T\in\T:\ T\subset W\}\;\le\;C\,\abs{W}\,\abs{T}^{-1}
\qquad\text{for every convex } W\subset\R^3,
\label{eq:ktcw}
\end{equation}
i.e.\ no convex set concentrates more tubes than its volume admits; a
Kakeya-type arrangement (one tube per $\delta$-separated direction)
satisfies \eqref{eq:ktcw} with $C\upre1$.
\end{definition}

\begin{figure*}[t]
\centering
\includegraphics[width=\linewidth]{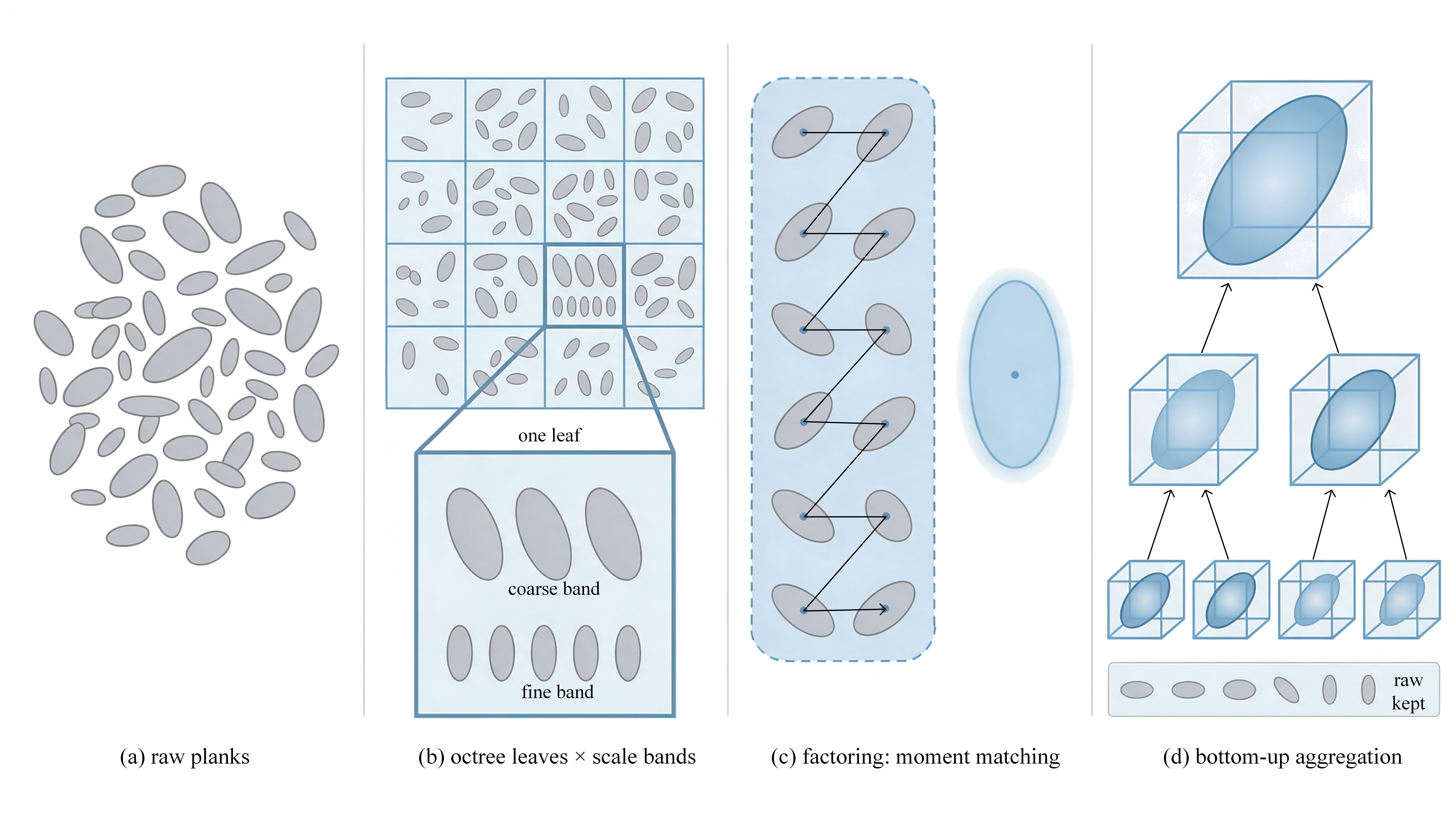}
\caption{Factoring-tree construction. (a) The trained model is a flat
set of raw planks across scale bands. (b) A GPU octree partitions space
into leaves (pure spatial indexing); within each leaf, planks split into
scale bands $\ell=\lfloor\log_2(s/s_{\mathrm{base}})\rfloor$. (c) Within
each (leaf, band), planks are Morton-ordered and cut into consecutive
buckets; each bucket is \emph{factored}: one equivalent
plank conserves weighted volume $W$, centroid $c^*$, and inertia $T^*$
(SH colors enter as weighted first moments). (d) Internal nodes
aggregate bottom-up, so every (node, band) owns one equivalent plank;
raw planks stay available in Morton order for the finest tier.}
\label{fig:factoring}
\end{figure*}

\begin{theorem}[Weighted covering---the collapse license]
\label{thm:collapse}
\textup{(i) (Unweighted; \cite{wz2025}, Thm.~1.2.)} For every
$\varepsilon>0$ there is $K>1$ such that, for all $\delta>0$ small: if
$\T$ satisfies \eqref{eq:ktcw} with $C\upre1$ and every
$a\times b\times2$ prism contains at most $100ab\delta^{-2}$ tubes of
$\T$, and $Y(T)\subset T$ has $\abs{Y(T)}\ge\lambda_0\abs{T}$, then
\[
\abs{\bigcup_{T\in\T}Y(T)}\;\ge\;\delta^{\varepsilon}\lambda_0^{K}
\sum_{T\in\T}\abs{T}.
\]
\textup{(ii) (Weighted, rendering form.)} If each tube carries a weight
function $w_T\colon T\to[0,1]$ of per-tube mass
$\abs{T}^{-1}\!\int_Tw_T\ge\lambda(w)\succap1$, then the saturated
weighted union volume $|U|_w:=\int\min\bigl(1,\sum_Tw_T(x)\bigr)\dd x$
obeys
$|U|_w\ \gtrsim\ \lambda(w)^{K+1}\delta^{\varepsilon}\sum_T\int_Tw_T$:
thresholding costs only constants depending on $\varepsilon$, so the weighted form loses
nothing beyond the $\varepsilon$-loss of (i).
\end{theorem}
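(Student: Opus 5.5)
Part (i) is quoted verbatim from \cite{wz2025}, Thm.~1.2, so only the weighted form (ii) needs proof. The plan is to reduce (ii) to (i) by a level-set (shading) argument. For each tube $T$, set
\[
Y(T):=\{x\in T:\ w_T(x)\ge\lambda(w)/2\}.
\]
Since $0\le w_T\le1$ and $\abs{T}^{-1}\int_T w_T\ge\lambda(w)$, Markov's inequality in reverse gives
\[
\int_T w_T\le \abs{Y(T)}+\tfrac{\lambda(w)}{2}\abs{T}.
\]
Hence $\abs{Y(T)}\ge\tfrac{\lambda(w)}{2}\abs{T}$, so (i) applies with $\lambda_0=\lambda(w)/2$. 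The hypotheses on $\T$ (the convex Wolff axioms with $C\upre1$ and the prism bound) carry over unchanged, because only the shadings change.

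Next compare the saturated weighted volume with the union of shadings. On $Y(T)$ we have $\sum_{T'}w_{T'}(x)\ge w_T(x)\ge\lambda(w)/2$, so
\[
\min\Bigl(1,\textstyle\sum_{T'} w_{T'}(x)\Bigr)\ge\min(1,\lambda(w)/2)=\lambda(w)/2
\]
pointwise on $\bigcup_T Y(T)$. Therefore
\[
|U|_w\ \ge\ \tfrac{\lambda(w)}{2}\,\Bigl|\bigcup_T Y(T)\Bigr|\ \ge\ \tfrac{\lambda(w)}{2}\,\delta^{\varepsilon}\bigl(\lambda(w)/2\bigr)^{K}\sum_T\abs{T}.
\]
Finally, $\int_T w_T\le\abs{T}$ because $w_T\le1$, so $\sum_T\abs{T}\ge\sum_T\int_T w_T$. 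Combining these gives
\[
|U|_w\ge 2^{-(K+1)}\lambda(w)^{K+1}\delta^{\varepsilon}\sum_T\int_T w_T,
\]
which is the claim. The constant $2^{-(K+1)}$ depends only on $\varepsilon$ through $K$, which is exactly the ``thresholding costs only constants'' remark.

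The main point to check is the dependence of $K$ on the regime. Part (i) is uniform in $\lambda_0\in(0,1]$ for fixed $\varepsilon$ once $\delta$ is small, so halving $\lambda$ is harmless. If $\lambda(w)\succap1$ is read as a fixed lower bound, the whole $\lambda$-dependence is absorbed into the implicit constant. One must also confirm that the threshold level $\lambda/2$ is compatible with measurability of $Y(T)$; this holds for measurable $w_T$. An alternative I would keep in reserve, if one wanted to avoid losing the extra factor $\lambda$, is a dyadic layer-cake decomposition $w_T\approx\sum_k 2^{-k}\mathbf 1_{Y_k(T)}$ with (i) applied per layer. The single-threshold argument already yields the stated exponent $K+1$, so I would present that version.
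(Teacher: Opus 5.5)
Your proof is correct and follows the paper's argument. As there, you threshold each weight at $\lambda(w)/2$ to obtain shadings with $\abs{Y(T)}\ge\tfrac{\lambda(w)}{2}\abs{T}$, bound $|U|_w\ge\tfrac{\lambda(w)}{2}\abs{\bigcup_T Y(T)}$, and invoke (i) with $\lambda_0=\lambda(w)/2$. The only cosmetic difference is the last step: you apply $\int_T w_T\le\abs{T}$ to the $\sum_T\abs{T}$ form of (i), whereas the paper uses the mass comparison $\int_T w_T\le 2\abs{Y(T)}$; both routes give the exponent $K+1$.
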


\begin{lemma}[Scale truncation]\label{lem:trunc}
For cartoon-class fields, atoms whose projected footprint is below the
pixel scale can be dropped or collapsed with bounded loss.
\end{lemma}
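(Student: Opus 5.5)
The plan is to show that sub-pixel atoms are, after rendering, indistinguishable from the pixel-footprint filter applied to their moments. Dropping or collapsing them therefore changes the pixel marginal only at second order in the footprint ratio, plus a term carried by the cartoon-class smoothness. I would work through the error chain of Theorem~\ref{thm:u123} and read the loss as a perturbation of the source measure $\mut$.

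\textbf{Step 1 (setup).} Fix a view $v$ with footprint kernel $G$ of pixel width $\sigma_p$. For an atom $i$ with projected covariance $\Sigma_i=J_\Pi\Sigma_iJ_\Pi^\top$, say it is sub-pixel if $\norm{\Sigma_i}^{1/2}\le t\,\sigma_p$ with $t<1$; this is the threshold $t=c\cdot d/f$ used by the scheduler.

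\textbf{Step 2 (splat of one atom).} The projected atom contributes $w_i\,(g_{\Sigma_i}*G)(p-\Pi(\mu_i))$ to the pixel. Taylor-expand the convolution in the atom's second moments:
\[
g_{\Sigma_i}*G=G+\tfrac12\operatorname{tr}(\Sigma_i\nabla^2G)+O\bigl(\norm{\Sigma_i}^2\norm{\nabla^4G}\bigr).
\]
The rendered image of a sub-pixel atom therefore depends only on its weight $w_i$, its centroid $\mu_i$, and, to next order, its inertia $\Sigma_i$.

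\textbf{Step 3 (collapse).} Group a bucket $B$ of sub-pixel atoms that all lie within one pixel footprint. Replace the bucket by the moment-matched plank that conserves $W=\sum w_i$, the centroid $c^*$, and the inertia $T^*$, exactly as in Fig.~\ref{fig:factoring}. Zeroth, first and second moments agree, so the bucket error is $O(W t^4\sigma_p^4\norm{\nabla^4G})$, which is relatively $O(t^4)$. The colour values $\rho$ enter the same way as weighted first moments. Here the cartoon-class appearance smoothness $p$ gives a variation of $\rho$ across a footprint of $O(K'\sigma_p^{\min(p,1)})$. That is the $C_2K'L^{-p}$ resolution term of Theorem~\ref{thm:u51}, and it is already present at pixel resolution $L$.

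\textbf{Step 4 (drop).} For dropping, the bound is instead the transported mass of the discarded atoms. I would bound it by $\sqrt{\mu_v}$ times the $L^2$ norm of their source density, using the operator-norm bound of Theorem~\ref{thm:u123}.

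\textbf{Step 5 (combine).} Summing over buckets, the source gap $\norm{\mut-\mut'}$ is controlled, and the chain of Theorem~\ref{thm:u123} converts it to image error. Collapsing also changes the multiplicity $M(x)$: it can only decrease, since one plank replaces several overlapping ones. The blending term therefore does not grow, which is the reason to stay in the audited low-opacity domain.

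\textbf{Main obstacle.} The hard part is Step 3 under compositing rather than additive splatting. Transmittance $T_i$ depends on depth order, and merging atoms at different depths inside a bucket changes the ordering relative to other primitives. I would first restrict buckets to a single (leaf, scale-band) cell. Within that cell the depth extent is $O(s)$, and I would use the second-order blending bound $\frac{K'}{2}M^2\varepsilon^2$ to absorb reorderings. If that fails at high opacity, the fallback is to state the lemma only in the low-opacity regime, which matches where Theorem~\ref{thm:u123} is audited.
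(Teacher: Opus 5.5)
\textbf{There is a genuine gap, and it sits in the drop half (Step~4).} Bounding the loss by $\sqrt{\mu_v}$ times the $L^2$ norm of the discarded atoms' source density only says that you lose at most what you removed. Nothing in that bound uses the fact that the atoms are sub-pixel, and nothing makes it small. Per-atom smallness does not survive summation over arbitrarily many atoms: an atom of weight $\le1$ and volume $v$ has $L^2$ norm $\le\sqrt v$, but many of them need not. A region tessellated entirely by sub-pixel atoms, such as a distant facade or most of an aerial view, loses its whole contribution when they are dropped.

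The hypothesis meant to rule this out is that the field is cartoon-class. It never enters your drop argument, and in Step~3 it enters only through a colour estimate. In the paper that hypothesis carries the lemma. The precise form is Lemma~\ref{lem:lb0}: atoms below a volume threshold $v_{\min}=(\varepsilon/C_6K)^2$, tied to the target precision, can be discarded from an $\varepsilon$-approximation of $f\in\mathcal{C}_K$ at cost at most $\varepsilon/2$. The paper imports this as a standard nonlinear-width result (DeVore) rather than proving it, and indicates the route:
\begin{itemize}
\item the piecewise-$C^2$ structure confines fine-scale content to a thin layer around a jump set of area $\lesssim K$;
\item the packing-element decomposition organizes that layer;
\item essential distinctness caps multiplicity, so the small per-atom losses sum to something small.
\end{itemize}
It also warns that a per-atom argument breaks once $f$ itself may be a fine atom. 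Some version of this counting-plus-regularity argument is the missing idea; the pixel filter cannot supply it. The field-space form is also what the paper uses downstream: Step~2 of the proof of Theorem~\ref{thm:j} restricts the dictionary to atoms of volume $\ge v_{\min}$, and an image-space perturbation bound cannot stand in for it there.

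\textbf{The collapse half is a genuinely different route that needs repair.} It works on the image side, as anti-aliasing by moment matching, which is closer to what the scheduler's coarse tier actually does. In the additive, low-opacity regime it can give a bounded loss, but several steps are off.
\begin{itemize}
\item \emph{Geometric order.} A single moment-matched Gaussian has zero third central moment, while a bucket of offset atoms generally does not. The geometric error is therefore $O(t^3)$, not $O(t^4)$.
\item \emph{Colour order.} Colour is matched only through $\sum_iw_i\rho_i$. With $\bar\rho=\sum_iw_i\rho_i/W$, the cross term $\sum_iw_i(\rho_i-\bar\rho)(\mu_i-c^*)\cdot\nabla G$ survives, so the colour error is first order, $O(t\,\Delta\rho)$.
\item \emph{Colour bound at edges.} Across the jumps of a cartoon field $\Delta\rho=O(1)$, so your $O(K'\sigma_p^{\min(p,1)})$ estimate fails exactly there. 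The $C_2K'L^{-p}$ term you invoke is the SH-degree term of the budget $M\asymp N(L+1)^2$, not a pixel-resolution term.
\item \emph{Zeroth moment.} Your $w_i$ in Step~2 is an image-plane mass, whereas the tree conserves the 3D weighted volume $\sum_i\alpha_i\abs{V_i}$. For splats, whose peak opacity does not depend on thickness along the ray, the two are not proportional: two thin atoms stacked along the ray merge into a thick plank of much lower opacity. Citing the factoring construction therefore does not give zeroth-moment agreement; match projected moments instead.
\item \emph{Compositing.} A (leaf, band) group spans its whole leaf, so its depth extent is not $O(s)$. The merged weight $w^*=W/\abs{V^*}$ can exceed the members' maximum opacity, so ``$M$ decreases'' does not by itself keep $\tfrac{K'}{2}M^2\varepsilon^2$ from growing. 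Track $\sum_j\alpha_j$ along each ray instead, since that is the quantity the Bonferroni bound actually uses.
\end{itemize}
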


\begin{theorem}[Adaptive covering rate]\label{thm:rate}
For $f\in\mathcal{C}_K$ (piecewise $C^2/H^2$ cartoon fields) there is an
$N$-atom combination of plates and tubes with
$\norm{f-f_N}_{L^2}\le CK\,N^{-1/2}(\log N)^c$; conversely, some
$f^*\in\mathcal{C}_K$ forces every $N$-atom combination to obey
$\norm{f^*-f_N}_{L^2}\ge c_0K(N\log_2N)^{-1/2}$.
\end{theorem}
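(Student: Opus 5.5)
The upper bound is a curvelet/wedgelet-type adaptive construction, balanced against a counting lower bound. For the upper bound, fix $f\in\mathcal{C}_K$, which is $C^2$ (appearance $H^2$) away from a piecewise $C^2$ discontinuity surface $\Sigma$. I will work dyadically: at scale $2^{-j}$ I split $X$ into cubes of side $2^{-j}$.
\begin{itemize}[leftmargin=*]
\item \textbf{Smooth cubes.} On cubes away from $\Sigma$, I use isotropic atoms (degenerate plates). Local polynomial approximation gives $L^2$ error $K2^{-2j}$ per unit volume. This needs only $O(2^{3j'})$ atoms at a coarser scale $j'$, with $2^{-2j'}\asymp 2^{-j}$. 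This is where the $H^2$ smoothness is spent.
\item \textbf{Singular cubes.} The cubes meeting $\Sigma$ number $O(K2^{2j})$, and in each of them $\Sigma$ is a graph deviating from its tangent plane by $O(K2^{-2j})$. I place one plate of extent $2^{-j}\times2^{-j}\times2^{-2j}$ aligned with the tangent plane, which is exactly the parabolic scaling. Tubes handle the codimension-two edges of $\Sigma$, which contribute only $O(2^{j})$ atoms. The misfit region per cube has volume $2^{-2j}\cdot2^{-2j}$ and jump height $O(K)$, so the total squared error is $\sum K^2 2^{2j}2^{-4j}=K^2 2^{-2j}$.
\item \textbf{Balancing.} With $N\asymp 2^{2j}$ the error is $\norm{f-f_N}\lesssim KN^{-1/2}$. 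The $(\log N)^c$ factor comes from summing over the $O(\log N)$ scales in a multiscale partition of unity. A tree-structured selection, in the style of Lemma~\ref{lem:trunc}, keeps the overlap bounded, and Theorem~\ref{thm:collapse}(ii) certifies that the overlapping plates do not double count mass. Accepting the loss there yields the stated $(\log N)^c$.
\end{itemize}

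For the converse, take $f^*$ to be an indicator of a smooth body whose boundary contains many independent bumps. The plan is a Kolmogorov-entropy / counting argument. A dictionary of $N$ plates and tubes with quantized parameters at precision $N^{-O(1)}$ has at most $2^{O(N\log N)}$ configurations. Meanwhile, the family of boundary perturbations of size $\eta$ at scale $2^{-j}$ has $2^{c2^{2j}}$ elements that are $\eta$-separated in $L^2$. Here $\eta^2\asymp2^{2j}\cdot2^{-4j}$, so $\eta\asymp 2^{-j}$. Matching the counts, $2^{2j}\asymp N\log_2N$, forces some $f^*$ with error at least $c_0K(N\log_2N)^{-1/2}$.

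The main obstacle is justifying the parameter quantization. The atoms are unbounded in number of shapes, so I first need a stability bound: perturbing an atom's parameters by $N^{-C}$ changes $f_N$ by $N^{-1}$ in $L^2$. This requires the coefficients to be bounded polynomially in $N$. I would enforce that by a reduction: large coefficients that cancel can be replaced by fewer atoms at bounded cost, in the spirit of the Donoho polynomial-depth-search restriction. If full generality fails, I would state the lower bound for polynomial-depth selections, as is standard for the cartoon class.
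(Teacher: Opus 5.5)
Your upper bound is the paper's construction: isotropic cells with second-order local-polynomial error on the smooth part at the coarser scale $2^{-j'}\asymp2^{-j/2}$ ($N_0\asymp N^{3/4}$ atoms), balanced against $\asymp2^{2j}$ parabolic plates $2^{-j}\times2^{-j}\times2^{-2j}$ along the jump sheet. These are the paper's $N_1=CK/s$ tangential plates with $s=2^{-2j}$. Your converse is the same packing-versus-covering capacity collision.

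Drop the appeal to Theorem~\ref{thm:collapse}(ii). It is a \emph{lower} bound on the saturated union volume of $\delta\times\delta\times1$ tubes obeying the Wolff axioms, so it neither applies to plates nor bounds overlap from above. Bounded overlap follows directly from your construction, and the $(\log N)^c$ is slack that the paper attributes to the banding.

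The differences are in the converse, on both sides of the collision.

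On the covering side, the paper never quantizes atom shapes. It assumes an essentially distinct (discrete, bounded-multiplicity) dictionary and uses Lemma~\ref{lem:trunc} to discard atoms of volume below $v_{\min}=(\varepsilon/C_6K)^2$. That leaves $\lesssim(K/\varepsilon)^2\log(1/\delta)$ candidates, and it quantizes only the $N$ coefficients, at step $\varepsilon/(2B^{1/2}\sqrt N)$, under a tacit amplitude bound $B$. This spares it your parameter-stability lemma. It is, however, the same kind of restriction as your polynomial-depth fallback. So adopt that hypothesis outright rather than the cancellation reduction, for which there is no general argument: $a(\one_B-\one_{B'})$, with $B'$ a shift of $B$ by $a^{-2}$, has bounded $L^2$ norm for every $a$.

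On the packing side, your $C^2$ bumps of lateral size $2^{-j}$ and height $\asymp K2^{-2j}$ keep $f^*$ a single smooth body, inside the class the upper bound covers. They also give $\log M\asymp\eta^{-2}$, exactly what the collision needs. The paper instead fragments the sheet into $\sim K/s^2$ patches shifted by $\sigma_Qs$ along their normals, and keeps only the weakened count $(K/\varepsilon)^2$. Used at full strength ($K/s^2\asymp K^3/\varepsilon^4$), that family would force error $\gtrsim K^{3/4}(N\log N)^{-1/4}$. So it cannot lie in a class obeying the $N^{-1/2}$ upper bound, and your family is the more consistent choice.
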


\begin{corollary}[Budget allocation]\label{cor:alloc}
Minimizing the geometric $+$ appearance error
$C_1KN^{-1/2}+C_2K'L^{-p}$ under the parameter budget
$M\asymp N(L+1)^2$ yields $N^*\asymp M^{p/(p+1)}$ and
$L^*\asymp M^{1/(2(p+1))}$; across scale bands $s_j=2^js_0$ the
band budgets decay exponentially, $m_j\propto2^{-j\gamma}$.
\end{corollary}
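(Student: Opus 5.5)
The corollary has two parts: a two-variable Lagrangian optimization, and a band-by-band allocation that reuses the same calculus. For the first part, I substitute the budget constraint into the error. From $M\asymp N(L+1)^2$ and $L\ge1$ we get $N\asymp M/L^2$ up to constants. The objective then becomes a function of $L$ alone:
\[
E(L)=C_1K\,M^{-1/2}L+C_2K'L^{-p}.
\]
This is convex on $L>0$: the first term is linear and increasing, the second is decreasing and convex. Setting $E'(L)=0$ gives $C_1KM^{-1/2}=pC_2K'L^{-p-1}$, so $L^{p+1}\asymp M^{1/2}$ and $L^*\asymp M^{1/(2(p+1))}$. Substituting back, $N^*\asymp M/(L^*)^2=M^{1-1/(p+1)}=M^{p/(p+1)}$. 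At the optimum the two error terms are balanced, each $\asymp M^{-p/(2(p+1))}$. To make the $\asymp$ claims honest, I will show that any $(N,L)$ with $L/L^*\to0$ or $L/L^*\to\infty$ makes one of the two terms dominate the optimum. This is the usual two-sided check, and it costs only constants depending on $C_1,C_2,K,K',p$.

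For the band part, I model each scale band $s_j=2^js_0$ as its own instance of the geometric error term. The error contributed by band $j$ with $m_j$ atoms is $a_j m_j^{-1/2}$, where the prefactor $a_j$ carries the band's share of $K$. I take $a_j\propto 2^{-j\beta}$ for some $\beta>0$. The justification for the decay is the cartoon-class structure from Theorem~\ref{thm:rate} and Lemma~\ref{lem:trunc}: coarser bands carry less residual energy, and atoms below pixel scale are truncated. Minimizing $\sum_j a_j m_j^{-1/2}$ subject to $\sum_j m_j=N^*$ by Lagrange multipliers gives $a_j m_j^{-3/2}=\text{const}$, hence $m_j\propto a_j^{2/3}\propto 2^{-j\gamma}$ with $\gamma=2\beta/3$. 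The same conclusion follows for the squared error $\sum a_j^2/m_j$, with $m_j\propto a_j$ and $\gamma=\beta$; either way the decay is exponential.

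The main obstacle is justifying that $a_j$ decays geometrically in $j$. The rest is routine calculus. My first attempt will bound the band-$j$ energy of a cartoon field in a dyadic frame. The $C^2$ pieces give coefficients decaying like $2^{-2j}$ per unit volume. The discontinuity surfaces contribute an area term that, measured across bands, supplies the required factor $2^{-j\beta}$. If that fails, I will define $\gamma$ operationally, as the measured exponent of the band energies, so that the corollary follows as a conditional consequence of the multiplier calculation.
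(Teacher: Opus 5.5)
Your first half is exactly the paper's argument. You substitute $N\asymp M/L^2$, minimize $E(L)=C_1KM^{-1/2}L+C_2K'L^{-p}$, and read $L^*=(pC_2K'/(C_1K))^{1/(p+1)}M^{1/(2(p+1))}$ off the first-order condition. Then you set $N^*=M/(L^*)^2$ and use $E''>0$ (your convexity remark) for uniqueness.

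The gap is in the band half. You put its whole content on the premise $a_j\propto2^{-j\beta}$ with $\beta>0$, and your stated reason, that coarser bands carry less residual energy, is the reverse of what the dyadic estimates give.

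\begin{itemize}
\item In the statement $s_j=2^js_0$, so $j$ increases toward \emph{coarser} bands.
\item The estimates you plan to use ($C^2$ coefficients decaying like $2^{-2k}$, surface-layer detail energy $\sim2^{-k}$) are written in the dyadic fineness index $k$ (scale $2^{-k}$), for which you reuse the letter $j$.
\item They say detail energy decays toward \emph{finer} bands, i.e.\ toward small $j$. In the statement's indexing that is $\beta<0$.
\item Fed into your multiplier rule $m_j\propto a_j^{2/3}$ (or $m_j\propto a_j$), this makes the band budgets \emph{grow} with $j$.
\item Lemma~\ref{lem:trunc} does not rescue the sign: it discards the finest atoms, i.e.\ it truncates the small-$j$ end.
\end{itemize}

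More basically, a per-band law $a_jm_j^{-1/2}$ with $m_j$ free is not supplied by Theorem~\ref{thm:rate}, which bounds only the total $N$-atom error. It also ignores that atoms of one fixed scale saturate. Tiling a surface layer of area $\asymp K$ at scale $s_j$ takes $\asymp Ks_j^{-2}$ atoms, and adding more does not lower that band's error. That covering count, not an energy-decay hypothesis, is the natural source of an exponential law with real content. In a multiscale expansion where every band tiles the surface layer, $m_j\asymp Ks_j^{-2}\propto2^{-2j}$, i.e.\ $\gamma=2$, with finer bands receiving more atoms.

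For calibration: the paper does not derive the band statement at all. Its proof is only the $L$-substitution computation above, and the band law is asserted ``for a scene-dependent $\gamma$''. Its accompanying sentence says coarser bands receive exponentially \emph{more} budget, which with $s_j=2^js_0$ would require $\gamma<0$, so even the sign is not fixed by the source. Your fallback, treating $\gamma$ as an operationally measured scene parameter, is therefore exactly the level at which the paper leaves this claim. What does not hold up is presenting the dyadic-energy argument as a proof of $\beta>0$.
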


Theorem~\ref{thm:collapse} is the union-volume estimate of
Wang--Zahl~\cite{wz2025}, the theorem behind the 2025 resolution of the
three-dimensional Kakeya set conjecture, reinterpreted with continuous
opacity weights---to our knowledge the first rendering use of this
theory. It certifies when a family of rendering
primitives can be \emph{collapsed} into fewer representatives without
forfeiting more than an $\varepsilon$-fraction of volumetric mass, and
its one-line corollary caps the packing density of any convex region at
$C\Delta_{\max}(\T)$ (the Katz--Tao constant of Definition~\ref{def:wolff}),
which is what keeps
the scale-band decomposition of Section~\ref{sec:lod} bounded. Together
with Zador's law \eqref{eq:zador}, the four statements answer, in order:
how much error is unavoidable (the floor), which groups are redundant
(collapse), what may be dropped outright (truncation), and how a fixed
budget should be split (allocation). Stated as questions (how many primitives does this scene need, how
much of it is redundancy, how should a fixed budget be split across a
city), the theory answers all three; the rest of the paper measures the
answers and builds methods on them.


\section{Budgeted-GS: training and deployment}\label{sec:lod}

The floor and the selection rules are operationalized two ways: at
training time, budget-centered training grows the model the floor
prescribes (Section~\ref{sec:bct}); at deployment time, the factoring
tree and its pixel-footprint scheduler serve, per view, exactly the
working set that fits the memory at hand (Sections~\ref{sec:lod-tree}
and~\ref{sec:lod-sched}).

Both halves of our system follow from one observation: a trained model is
a redundant, multi-scale arrangement of oriented boxes (each primitive
occupies a box of half-extents $h=3\sigma$ weighted by its opacity), and
budget control means removing that redundancy without visible loss. Three
principles govern every selection rule in this paper; they are the
operational form of the theorems stated in Section~\ref{sec:theory}.

The first principle is the \emph{collapse license}: a group of primitives that \emph{locally
over-covers} space, many overlapping boxes of comparable scale in one
region, can be merged into a few representatives with bounded
volumetric loss (Theorem~\ref{thm:collapse}; the moment-matching counterpart of
the optimal-atom prediction, Section~\ref{sec:theory}). This licenses
the moment-matched aggregates of the factoring tree
and opacity-ranked pruning inside redundancy-flagged regions.

The second is \emph{scale truncation}: a primitive whose projected footprint is
below the pixel scale can be dropped or collapsed with bounded image
error (Lemma~\ref{lem:trunc}; the U2 projection-error regime). This licenses the
pixel-footprint threshold $t=c\cdot d/f$ of the scheduler and the stopping
rule that freezes density control once pruning stops paying for itself.

The third is \emph{budget allocation}: under a fixed primitive budget, coarser
scale bands deserve exponentially more budget than finer ones (Corollary~\ref{cor:alloc};
the $K_js_j/d_j$ law); the same footprint rule instantiates this allocation
\emph{per view}. The three principles are grounded in recent union-volume estimates for
unions of convex sets~\cite{wz2025} and classical adaptive-approximation
rates~\cite{devore1998}; measured error rates on synthetic scenes match
the prediction within measurement noise.

\subsection{Budget-centered training}\label{sec:bct}

The floor answers ``how many primitives do I need at quality $Q$?'', and
the answer is also a training prescription. Traditional pipelines train a
large model and compress it: the primitives are optimized for a different
capacity, and the training time spent on the removed half is wasted. BCT
instead locates the knee $N^*$ from the measured curve (Fig.~\ref{fig:teaser}; the budget with
the largest steepest-window slope, Section~\ref{sec:measure}), then trains
the MCMC densifier \emph{directly at cap${}=N^*$}: the initial primitive count
is set to a fixed initial fraction of $N^*$ and the refine-stop fraction
to a second, both scaled by the measured
redundancy $\theta_{\mathrm{waste}}=N_{\mathrm{opt}}/N_{\mathrm{full}}$.

Standard 3DGS densification/pruning is driven by positional gradients and
opacity. We keep densification but replace the pruning signal by the
\emph{$\theta$ redundancy map} (the collapse license; supplementary
material):
$\theta$ is computed per region from the local factoring of the
model: regions with high local multiplicity (low $\theta$) are
\emph{over-covered} and are exactly where primitives can be removed with
the smallest approximation loss; pruning inside $\theta$-flagged regions
removes primitives by opacity. A \emph{rate-based stopping criterion}
freezes density control once validation improvement falls below
$\varepsilon=0.5$~dB.

\subsection{The factoring tree}\label{sec:lod-tree}

A trained model is frozen, but the hardware it must run on is not: the
same city has to serve a render farm, a desktop, a laptop, and a VR
headset, and at the far end it has to render in real time on a consumer
GPU. The floor's allocation law (budget should follow $K_js_j/d_j$ per
tile and per view) turns this into a selection problem. Concretely, applied per tile $j$ (regularity
$K_j$, extent $s_j$, distance $d_j$), Corollary~\ref{cor:alloc} yields
the \emph{level-selection law}
\begin{equation}
\small N_j^*(v)\propto K_j\sqrt{A_j(v)}\asymp\frac{K_j\,s_j}{d_j},
\qquad
e_j^2\asymp\frac{c^2\,K_js_j}{d_j}\cdot\frac{\sum_kK_k s_k/d_k}{N}\cdot
\occ_j .\normalsize
\label{eq:lodlaw}
\end{equation}
with $A_j(v)$ the projected screen area of tile $j$ in view $v$. For
uniform regularity it reduces to the classical screen-space-error
heuristic and adds what it lacks: rougher tiles draw more budget,
occluded tiles pay the visibility factor, and the per-tile error is
certificate-verifiable.

Given a trained model with $N$ primitives (means $\mu_i$, rotations
$R_i$, scales $\sigma_i$, opacities $\alpha_i$, SH colors), we first
convert each primitive to a \emph{plank} $(\mu, R, h{=}3\sigma, \alpha)$.
The scene bounding box is partitioned by a GPU octree (leaf capacity
4096); the
structure lives in what its nodes store (below), and within each leaf
the planks are split into \emph{scale bands}
$\ell=\lfloor\log_2(s/s_{\mathrm{base}})\rfloor$ where $s=\max h$ and
$s_{\mathrm{base}}$ is the 5\% quantile of $s$. Each (leaf, band) group is
then \emph{factored} into at most $256$ equivalent planks
by batch moment matching: within a group, planks are ordered by Morton
code (spatial locality), split into consecutive buckets, and each bucket is
replaced by a single plank conserving
\begin{equation}
\begin{aligned}
W&=\sum_i\alpha_i|V_i|,\qquad
c^*=\frac{\sum_i\alpha_i|V_i|\mu_i}{W},\\
T^*&=\sum_i\alpha_i|V_i|\big((\mu_i-c^*)(\mu_i-c^*)^{\top}+\tfrac13 R_i\,\mathrm{diag}(h_i)^2\,R_i^{\top}\big),
\end{aligned}
\end{equation}
i.e.\ weighted volume, centroid, and inertia tensor; the equivalent plank
is $(c^*,R^*,h^*,w^*)$ from the eigendecomposition of $T^*$ with
$w^*=W/|V^*|\in[0,1]$ (Fig.~\ref{fig:factoring}c). Colors enter as
opacity-weighted first moments of the
SH coefficients (DC and higher bands separately).

The internal nodes of the octree receive the same treatment bottom-up, so
every (node, band) has one \emph{aggregate plank}. The raw primitives are
retained in Morton order for the finest tier. Construction on
multi-million-primitive scenes takes seconds on one GPU, and the
aggregate representation is $119\times$ smaller than the input
(supplementary material).

\subsection{The pixel-footprint scheduler}\label{sec:lod-sched}
At render time, for camera position $p$ and focal length $f$ (pixels), each
(leaf, band) with mean scale $\bar s$ and centroid distance $d$ is assigned
to one of three tiers by the threshold $t=c\cdot d/f$:
\begin{itemize}
  \item \textbf{raw}: members individually resolvable ($\bar s\ge t$)
  $\Rightarrow$ render the original Gaussians;
  \item \textbf{fine}: the cluster resolvable but members not
  ($\bar s<t\le \bar h$, $\bar h$ = equivalent-plank scale) $\Rightarrow$
  render the fine equivalent planks;
  \item \textbf{coarse}: sub-pixel ($\bar h<t$) $\Rightarrow$ render the
  single aggregate plank of the (leaf, band).
\end{itemize}
$c$ is a single scalar \emph{budget parameter}: $c{=}1$ means \emph{exactly
resolvable} (the reference tier), larger $c$ progressively collapses
sub-$c$-pixel content (scale truncation; supplementary material), and $c<1$ keeps members below the pixel scale
(super-resolving regime); on distant aerial cameras most bands are
already sub-pixel at $c{=}1$, and our real-time renderer operates at
$c\approx0.3$ there (Section~\ref{sec:exp-rt}).


\section{Experiments}
\label{sec:experiments}

We validate the theory and both methods end to end: the floor
itself across 13 public scenes, then Budget-Centered Training, then
Factoring LOD from object scenes to city scale, closing with
ablations and efficiency.

\subsection{The floor, measured across 13 scenes}
\label{sec:measure}

The floor is measured with a preregistered
protocol (criteria frozen before data collection; all verdicts frozen
before reading results; 3 fixed seeds; JSON certificates at every stage,
supplementary material): on each scene, MCMC training at cap
$2\times10^6$ followed by opacity-prune to $N$ and fine-tune yields the
empirical floor curve $R(N)$ at 9 budgets; the preregistered
floor-intactness test asks that no scan point crosses the fitted floor
by more than $\log_{10}(0.8)$; published operating points are
re-anchored to their own baseline at matched capacity ratio with a
$5\%$ ($0.22$~dB) tolerance; and a duality certificate lower-bounds the
achieved error at every (budget, view).

On a synthetic cartoon control with the same
protocol, the chord lands directly on Zador's law ($-0.549/-0.548/-0.551$,
$R^2{=}0.997$). On the 13 public scenes (seed 0; 3 seeds on
garden/room/bonsai) chord slopes are seed-stable ($\pm0.004$--$0.011$)
but all fall outside the preregistered band $[-0.55,-0.45]$, a
registered negative. The reading-window audit explains it: the SfM point
clouds have TWO-NN intrinsic dimension $1.9$--$2.05$, and the
\emph{steepest} three-band windows reach $-0.41$ to $-0.53$ across
scenes and seeds (bonsai $-0.46/-0.49/-0.53$; room $-0.469\to-0.496$
when the fine-tune budget is quadrupled), straddling the predicted
$-1/2$. Floor intactness holds $13/13$; the duality certificate holds at
every (band, view) on room with tightness $0.28\to0.03$
(Table~S10, full data in
supplementary material).

The published frontier is flat: every compression method with a
same-paper $(N,\mathrm{PSNR})$ pair defines a resolution-invariant
two-point trajectory (Fig.~S3). Across Compact3DGS and EAGLES on all 13 scenes
plus Mini-Splatting~\cite{fang2024minisplatting} and
RadSplat~\cite{niemeyer2024radsplat} at dataset level, these slopes occupy
$[-0.28,+0.09]$ with mean $-0.04$: removing $60$--$70\%$ of a converged
model's primitives costs under $1$~dB across the surveyed compressors,
while the floor law (asymptote $-1/2$, reached end-to-end on the synthetic
control and bracketed by our steepest windows) is an order of magnitude
steeper. Published compression operates deep in the over-budgeted regime:
its reported gains ride the redundancy
$\theta_{\mathrm{waste}}=N_{\mathrm{opt}}/N_{\mathrm{full}}$ that the
floor makes measurable, not proximity to capacity. The universal
(model-free) colour-quantization floor, a lower bound on
\emph{every} representation at budget $N$ computed from test images
alone, confirms it: all 26 published operating points lie above the band
(Fig.~S4).

Three questions therefore have checkable answers: \emph{certified
compression headroom} (measure the floor once, place any candidate point
on it; with our measured estimator looseness folded in, headroom is
certified on eight published points, up to $+1.9$~dB);
\emph{when lossless is impossible} (the gap decomposition separates
structural coupling loss from source capacity); and \emph{the right
budget}: the floor knee $N^*$ is the budget where the steepest-window
slope is largest, which is exactly where BCT trains
(Section~\ref{sec:bct}).

\subsection{Training-side results}\label{sec:bct-results}

The redundancy signal and the stopping rule behave as the floor
predicts across the synthetic and Mip-NeRF 360 scenes, under the same
preregistered discipline as the floor study (3 fixed seeds on
garden/room/bonsai, seed~0 elsewhere; matched near-$N$ runs).

On the 13-scene study,
BCT is faster and better in the same-ruler sense at the budgets where it
is asked to exceed a train-then-prune point: at the matched
compact3DGS room budget, $529$k, BCT reaches $31.94\pm0.04$~dB
(three seeds, re-measured protocol) against the published
Compact3DGS ($30.23$~dB; the original
high draw was $31.20$). An earlier protocol generation of the same
comparison makes the capacity point directly: two audit runs agree on
$31.21$~dB at cap $670$k, where that generation's prune pipeline needs
$1$M primitives for the same $31.21$~dB---BCT spends $33\%$ fewer
primitives for the same rendering quality. (All Table~\ref{tab:bct}
room rows are re-measured under the current protocol; the audit
generation is kept protocol-separated.) The same-ruler comparison over six scenes (Table~\ref{tab:bct})
is the honest version of this claim: at counter BCT beats both published
points ($+0.24$/$+0.58$~dB) and our prune scan ($+1.90$~dB vs.\ the
$500$k scan point), at bonsai
$640$k it beats EAGLES ($+0.86$), and at kitchen $1131$k it beats
Compact3DGS by $+1.70$~dB; only at flowers does it stay below the
published compressors ($-0.29$/$-0.71$~dB), while still leading our own
prune pipeline there ($+0.44$/$+0.65$~dB); and at garden both budgets
beat the published points ($+0.24$/$+0.44$~dB). The honest summary: BCT
leads our train-then-prune pipeline at every matched budget, and it now
beats the published compressor points on five of six scenes (all but
flowers). The single loss is diagnostic rather than accidental: flowers
is the only scene whose measured knee sits at the full scanned budget
($\theta_{\mathrm{waste}}{=}1.0$, Section~\ref{sec:measure})---source-limited,
not redundancy-limited---so there is nothing for budget selection to
spend there. Its decisive advantage
is budget selection at the floor knee (33\% fewer primitives at matched
quality) plus a single 78-minute run instead of the 175-minute
train--prune--fine-tune route (measured on room: 125~min training at
$1$M + 50~min prune$+$fine-tune). An extension to the outdoor scenes of the study
(bicycle, stump, treehill), where the published points sit highest above
the floor, appears in Table~S4.

Table~\ref{tab:bct} records the tuned variants: in all 7 runs (across
three scenes and two budgets each, the initial-fraction/stop
schedule variants) the original recipe (SfM init of $100$k primitives,
refine to convergence) is best or tied, a negative ablation confirming
that BCT needs no hyper-parameter search, and that the gains above are
not tuning artifacts. Timing (measured on room, shared A100 under neighbor load; ratios are
same-card): BCT $78\pm7$~min over three seeds; the traditional route
for the same budget (train at $1$M, prune, fine-tune) $\approx$175~min,
$2.2\times$ slower for lower quality (Table~\ref{tab:bct}).

\begin{table*}[t]
\centering\small
\begin{tabular}{llllrrr}
\toprule
scene & budget & schedule & PSNR~dB & ours-prune & published \\
\midrule
room & 529k & orig & \textbf{31.94}$\pm$.04 & 30.14 & 30.23 (C3DGS) \\
room & 670k & orig & 31.95 & 31.21 @1M & 31.47 (EAGLES) \\
room & 1M & orig & \textbf{32.24} & 31.21 & --- \\
\midrule
garden & 1650k & orig & \textbf{27.15}$\pm$.04 & 25.98 @1M & 26.91 (EAGLES) \\
garden & 2209k & orig & \textbf{27.25}$\pm$.06 & --- & 26.81 (C3DGS) \\
\midrule
bonsai & 601k & orig & \textbf{32.13} & 29.38 & 32.08 (C3DGS) \\
bonsai & 640k & orig & \textbf{32.18} & --- & 31.32 (EAGLES) \\
\midrule
counter & 536k & orig & \textbf{28.95}$\pm$.02 & 27.05 @500k & 28.71 (C3DGS) \\
counter & 560k & orig & \textbf{28.98}$\pm$.02 & --- & 28.40 (EAGLES) \\
\midrule
flowers & 1330k & orig & 20.58$\pm$.02 & 19.93 @1M & 21.29 (EAGLES) \\
flowers & 1525k & orig & 20.60$\pm$.01 & 20.16 @2M & 20.89 (C3DGS) \\
\midrule
kitchen & 1M & orig & \textbf{31.97}$\pm$.10 & 30.60 & 30.77 (EAGLES) \\
kitchen & 1131k & orig & \textbf{32.18}$\pm$.07 & --- & 30.48 (C3DGS) \\
\bottomrule
\end{tabular}
\par\vspace{3pt}
\caption{BCT vs.\ the same-ruler train-then-prune pipeline and published compressor baselines.}
\label{tab:bct}
\end{table*}

\subsubsection{Outdoor extension}

The same-ruler BCT comparison extended to the three outdoor scenes of
the study (bicycle, stump, treehill) at the published compressor
budgets, plus the frontier reading of the outcome. Protocol identical
to Table~\ref{tab:bct} (seed 0, same eval, original recipe: SfM init
$100$k primitives, refine to convergence). ``ours-prune'' = the matched-budget
point of the same-ruler train-then-prune scan of
Section~\ref{sec:measure} (nearest scan band; interpolated in $\log N$
to the tabular budget).

The outdoor extension (Table~S4) makes the floor's
practical value testable in
both directions. BCT, which spends the floor's answer \emph{directly
at a budget}, lands on the same curve as our train$\rightarrow$prune
$\rightarrow$fine-tune scan (differences $\le 0.15$~dB); the two
procedures agree because both consume the budget the floor says the
scene can carry. The published compressor points at the same budgets
are $0.8$--$2.9$~dB \emph{above} that curve. Their gains therefore do not come from capacity: on stump, where the published points are
$2.7$--$2.9$~dB above our measured floor at matched budgets, the
measured training-time redundancy $\theta_{\mathrm{waste}}$ fully
accounts for the reported gap (frontier analysis,
Section~\ref{sec:measure}).


\subsubsection{Budget-aware density control at the ceiling}\label{sec:bct-lego}

Table~S5, Fig.~S2, and
Fig.~S14a: on lego our
budget-aware training with the BCT stop rule (Section~\ref{sec:bct})
reaches 36.85~dB with 151k primitives: the same quality as Vanilla with
339k ($44\%$ of the budget, $-0.34$~dB), and $+0.23$~dB over MCMC at a
$12\%$ smaller budget. The full Pareto curve (Fig.~S2)
dominates MCMC at every tested budget. Perceived, not just measured:
Fig.~S5 shows the two models side by side on a
held-out view with the residual amplified $10\times$; on this view the
151k model is \emph{better} than Vanilla (24.88 vs.\ 21.46~dB; the
aggregate mean is $-0.34$~dB over the full test set), and the largest
residuals are structural edges rather than texture (mean
$|\Delta|=0.013$).

\subsubsection{Generalization}\label{sec:bct-gen}

Table~S6 generalizes the stop rule to four further
scenes (chair, ship, bonsai, room).

Post-hoc, the redundancy signal keeps its value on the indoor scenes
(seed-0 runs): pruning the trained model by the BCT stop rule
dominates random pruning at every matched budget on bonsai and
room; e.g.\ at 70\% kept, a 32-view 3DGS re-render of the same
checkpoints measures $31.10$/$35.64$~dB for $\theta$-pruning
vs.\ $28.10$/$29.56$~dB random ($+3.0$/$+6.1$~dB).
Fig.~S6 shows the corresponding renders: at matched
70\% budget the $\theta$-guided models stay close to their full renders,
while random pruning visibly degrades foliage and edges.

The contribution-scored export of
the score export of Section~\ref{sec:lod} is a pure ranking of an already-trained
model, so it applies to any primitive set. Table~S7
applies it to the trained lego/bonsai/room models and the two Mip-NeRF 360
indoor scenes, evaluated by the same CityGaussian test protocol that
produced the training-side numbers. On every scene and every budget, the
score ranking beats opacity-ordered pruning, the baseline the same
protocol ships, by $+2.7$ to $+7.9$~dB (PSNR), with large margins also
in SSIM and LPIPS. At $50\%$ the margin costs almost nothing: $-0.13$ and
$-0.54$~dB on room and bonsai versus the full model. At $25\%$ the loss
is substantial ($3.1$--$13.9$~dB), which is the expected behavior on
\emph{already-compressed} models: these trained models are down to
36--66\% of Vanilla, so they carry less removable redundancy than
Block\_A (where $24\%$ of primitives are invisible from every training
view and the curve is flat to $25\%$). The clean division of labor: the
score export recovers what a post-hoc pass can, and the training-side
BCT is the tool where a fixed quality target must be met.


The two stages also compose. On room, we build factoring trees over
two roots, the full model (1.66M primitives) and the budget-centered
model trained at the floor's knee (529k), and deploy both through the
same scheduler at a matched budget of ${\approx}520$k rendered
primitives (34 held-out views, LOD render protocol). The BCT-born tree
renders 34.19~dB, within 0.16~dB of its own root; the full-born tree,
collapsed to the same budget, renders 22.83~dB. Budget-centered
training is thus the correct front end for factoring-tree deployment:
a model born at the floor's budget passes through the tree essentially
unchanged, because it contains almost no redundancy for the tree to
lose (per-tier data in the supplementary material).

\subsection{The LOD quality window}\label{sec:exp-lod}
\begin{table*}[t]
\centering
\small
\begin{tabular}{lrrr}
\toprule
config & PSNR & \#primitives & $\Delta$PSNR \\
\midrule
full & 29.97 & 5.78M & --- \\
LOD c$=1$ & 29.31 & 94.4\% & $-0.66$ \\
LOD c$=2$ & 26.89 & 83.0\% & $-3.08$ \\
LOD c$=4$ & 22.56 & 62.4\% & $-7.41$ \\
adaptive proxies & \textbf{29.80} & --- & \textbf{$-0.17$} \\
\bottomrule
\end{tabular}
\caption{garden (Mip-NeRF 360), trajectory views: fixed threshold vs.\ residual-proxy adaptive selection.}
\label{tab:garden}
\end{table*}

\begin{table*}[t]
\centering
\small
\begin{tabular}{lrrr}
\toprule
config & PSNR (med) & model data & $\Delta$PSNR \\
\midrule
\multicolumn{4}{l}{\emph{our LOD tiers, rendered by the 3DGS rasterizer:}} \\
full & 28.38 & 1240.6 MB & --- \\
LOD c$=4$ & 28.33 & 844.1 MB & $-0.05$ \\
LOD c$=8$ & 26.26 & 552.7 MB & $-2.12$ \\
LOD c$=16$ & 23.72 & 288.9 MB & $-4.66$ \\
LOD c$=32$ & 20.80 & 110.6 MB & $-7.58$ \\
LOD c$=64$ & 17.46 & 33.9 MB & $-10.92$ \\
\midrule
\multicolumn{4}{l}{\emph{the same LOD tiers rendered through the GPS renderer}~\cite{rijsdijk2026gps}:} \\
full & 28.50 & 1926.2 MB & --- \\
LOD c$=4$ & 28.52 & 1310.5 MB & $+0.02$ \\
LOD c$=8$ & 26.88 & 858.1 MB & $-1.62$ \\
LOD c$=16$ & 24.13 & 448.6 MB & $-4.37$ \\
LOD c$=32$ & 20.88 & 171.7 MB & $-7.62$ \\
LOD c$=64$ & 17.38 & 52.6 MB & $-11.12$ \\
\midrule
\multicolumn{4}{l}{\emph{random-subsampling control through the same path:}} \\
random 44.6\% & 21.42 & 859.3 MB & $-7.08$ \\
random 23.2\% & 16.67 & 447.3 MB & $-11.83$ \\
random 8.9\% & 12.90 & 171.7 MB & $-15.60$ \\
\midrule
\multicolumn{4}{l}{\emph{Extended metrics, all 154 test views, mean (PSNR/SSIM/LPIPS), 3DGS rasterizer:}} \\
full & 27.71 & 1240.6 MB & --- \\
LOD c$=8$ & 26.10 / .747 / .355 & 552.7 MB & $-1.61$ \\
LOD c$=16$ & 23.45 / .693 / .413 & 288.9 MB & $-4.26$ \\
LOD c$=32$ & 20.52 / .627 / .474 & 110.6 MB & $-7.19$ \\
LOD c$=64$ & 17.51 / .548 / .539 & 33.9 MB & $-10.20$ \\
\bottomrule
\end{tabular}
\caption{MatrixCity Aerial: the paired blocks use 24 views and median
PSNR; the extended-metrics block uses all 154 test views and the mean.
Same model, cameras, and GT throughout.}
\label{tab:matrixcity}
\end{table*}

On garden (Table~\ref{tab:garden}), the residual-proxy adaptive
selector recovers $0.49$~dB over the fixed threshold at a comparable
primitive count ($1.05\times$).

On the city-scale model the near-lossless window is even wider
(Table~\ref{tab:matrixcity}): $c{=}4$ renders 68\% of the primitives at
$-0.05$~dB (visual comparisons: Fig.~S12 for the garden
tiers, Fig.~S10 for the same ladder on room). This
asymmetry (garden vs.\ city aerial), quantified by the orbit sweep of
Fig.~S7, matches the theory:
the aerial model is already tessellated at its training viewing distance,
so sub-pixel redundancy is exhausted at small $c$; the surplus is in
\emph{super-pixel} collapse. Across the full ladder the single model
sweeps a continuous memory--quality curve (Fig.~S15);
re-rendering the same tiers through the GPS renderer reproduces it
within $0.5$~dB ($c{=}4$ is lossless under both renderers), and the
${\sim}20.8$~dB operating point costs $110.6$~MB against GPS's full
$1926$~MB ($17\times$; same-budget numbers and byte-level accounting
in the supplementary material).

\subsection{Comparison with training-coupled LOD}\label{sec:exp-e1}

The most closely related systems are LOD hierarchies \emph{coupled to
training}: Octree-GS~\cite{ren2024octreegs} trains a multi-level anchor
structure and selects levels per view;
CityGaussian~\cite{liu2024citygaussian} and its
V2~\cite{liu2025citygaussianv2} partition large scenes and load blocks
by level at runtime. Our layer is complementary by construction: it is
built \emph{post-hoc} from any trained model, exposes a continuous
budget parameter, and is renderer-agnostic. Recent concurrent work reaches
consumer-GPU city scale by training the LOD itself, out-of-core streaming hierarchies~\cite{windisch2026alod} and LoD-aware training~\cite{kulhanek2025lodge}, so the axes are complementary rather
than competing: their budgets are fixed at training time, ours are chosen
per deployment on the same trained model. Table~S8 frames the
head-to-head on MatrixCity and makes the structural difference
concrete. Re-rendering the official Octree-GS Block\_all checkpoint
(8.01M stored anchors, each decoding to 10 neural Gaussians at render
time) under a held-out split reproduces its level curve
(27.23~dB at 145k visible anchors, degrading gracefully to
19.5~dB at 29k), a \emph{trained} representation whose level
set is fixed at training time. On the other side,
CityGaussianV2's runtime LOD cannot be evaluated post-hoc on the official
Aerial release: its renderer requires a per-partition partitioning file and one
trained checkpoint per partition block, artifacts of its partitioned
training that the published single checkpoint does not include. Our
post-hoc layer, in contrast, consumes that same 8.56M checkpoint
as-is and sweeps the continuous $c$ curve of
Table~\ref{tab:matrixcity} on it. Runtime memory tells the same story:
while rendering the same Block\_all views at $1600\times900$ through
each side's official path on the same 4080 SUPER, the VRAM peak is
$30.4$~GiB for Octree-GS (steady-state $15.9$~GiB) and $4.4$~GiB for
ours ($6.9\times$ lower at peak; Table~S8); the
decoded-per-frame neural representation is intrinsically memory-heavy,
independent of the fps gap.

\subsection{Pruning baselines}\label{sec:exp-prune}

The $\theta$$+$opacity rule's
city-scale behavior is measured on Block\_A (Table~S9):
the redundancy signal localizes sacrifice at city scale just as it does on
lego (Section~\ref{sec:bct-lego}).

At matched budgets,
Table~S9 compares two post-hoc exports of the same trained
model against LightGaussian~\cite{fan2024light},
RAP~\cite{yang2026rap}, and a faithful single-pass reimplementation of
PUP~3D-GS's Fisher sensitivity score~\cite{hanson2025pup} (diagonal
Fisher over means and scales, applied through the same export path):
the $\theta$$+$opacity rule (coverage redundancy) and the
contribution-scored export of Section~\ref{sec:lod} (render-aware
ranking; the score itself is LightGaussian's, faithfully reimplemented
and applied through our post-hoc export path). The score export matches
Vanilla exactly at $50\%$ and $25\%$ of the primitives and
\emph{dominates all three baselines on all three metrics at $10\%$}
($+0.45$~dB PSNR over LightGaussian, $+0.80$~dB over RAP, $+0.37$~dB
over the PUP score); one $234$~s scoring pass, no retraining. The axis
extends where no baseline reports numbers: $13.49$~dB at $5\%$,
$12.97$~dB at $2.5\%$, $12.53$~dB at $1.3\%$ of the primitives.
Absolute Block\_A values are low by protocol design (DC-only evaluation
against real photographs); the table's purpose is the \emph{ordering} of
methods under one identical pipeline, not absolute quality.
$\theta$$+$opacity keeps a complementary role: it is the best LPIPS
export at every extreme budget ($0.547$ vs.\ $0.557$ for the score
export at $10\%$; $0.601$ for RAP), consistent with its design of
removing perceptually redundant mass. The score export, in turn, keeps
street markings and facades legible at $10\%$ where LightGaussian blurs
and RAP collapses (Fig.~S9).

The scheduler itself emits a standard
primitive stream (means, quaternions, scales, opacities, SH), consumable
by any splatting rasterizer; the export pipeline and render-aware
scoring are described in the supplementary material.

\subsection{Real-time rendering at city scale}\label{sec:exp-rt}

\begin{figure*}[t]
\centering
\includegraphics[width=\linewidth]{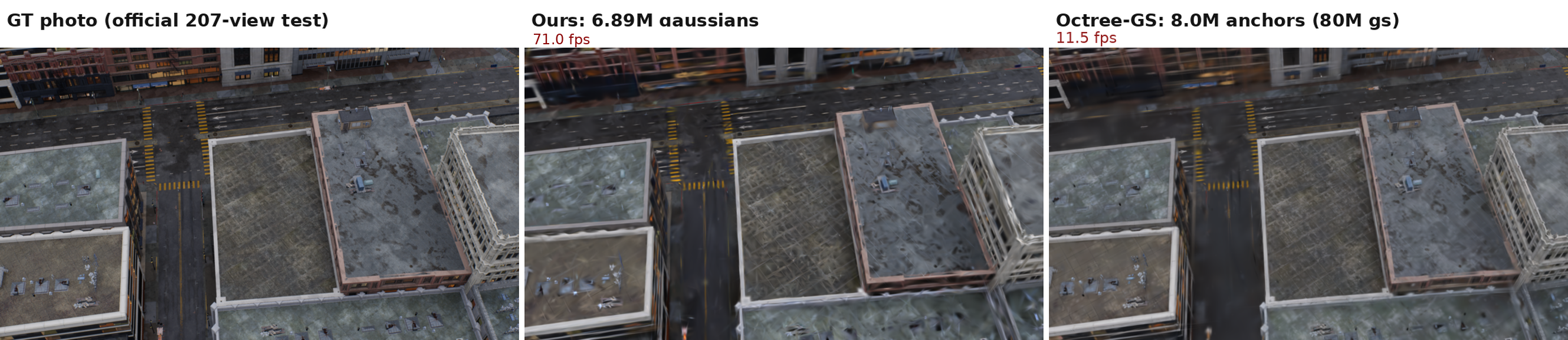}
\caption{City scale, official Block\_all 207-view test trajectory,
same view and eval protocol throughout: GT photo; our $6.89$M model at
$71$~fps; the
re-rendered official Octree-GS checkpoint ($8.0$M anchors, $80$M
decoded Gaussians) at $11.5$~fps.}
\label{fig:cityscale}
\end{figure*}
\begin{figure*}[t]
\centering
\includegraphics[width=\linewidth]{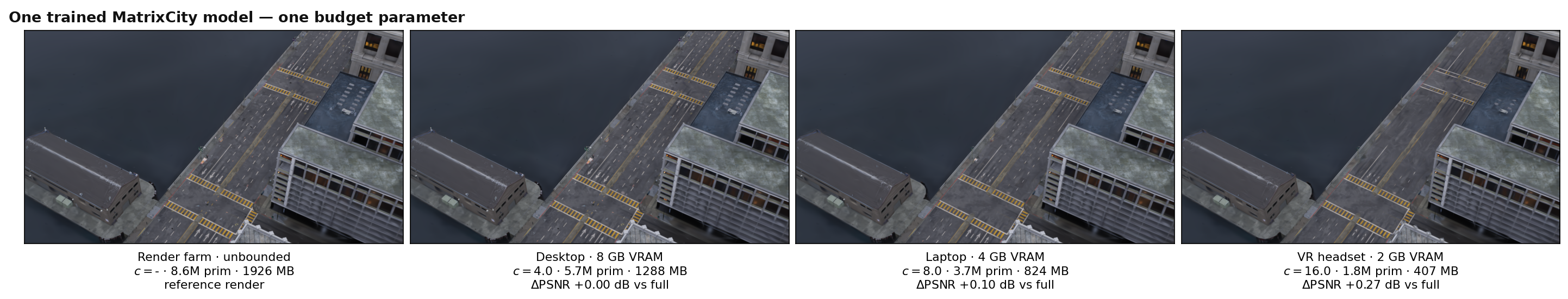}
\caption{One trained MatrixCity Aerial model, four deployment budgets chosen by the single parameter $c$ (all panels: one RTX 4080 SUPER, native $1080$p).}
\label{fig:deployment}
\end{figure*}

The two sides of the pipeline compose at city scale on our v3 retrain of the
official Block\_all data ($6.89$M primitives, full SH;
Fig.~\ref{fig:cityscale}): build the factoring tree
once, then render interactively under a capacity budget. The same
discipline serves deployment: the trained MatrixCity Aerial model
(8.56M primitives, full SH) renders at four memory budgets chosen by
the single parameter $c$ at native $1080$p on one RTX 4080 SUPER
($c{=}4/8/16$: 1288/824/407~MB on 8/4/2~GB-class targets); quality is
visually preserved down to the 2~GB tier ($|\Delta\mathrm{PSNR}|\le
0.27$~dB vs.\ full on this held-out view), with no retraining or
re-export (Fig.~\ref{fig:deployment}).

We verify
that the budget semantics carry to an interactive runtime, and we measure
where the real-time boundary actually lies on our v3 retrain of the
official Block\_all data
($6.89$M primitives, full SH, $25.43$~dB on the official 207-view protocol
with the 3DGS rasterizer; Table~\ref{tab:rt}). The deployment layer is CP-LOD
(coverage-preserving prefix LOD): per frame a leaf-only cut of the factree
is traversed on GPU, a \emph{capacity} budget $K$ bounds the materialized
working set, Morton-strided subsampling with scale$\times\sqrt{r}$ and
opacity $1{-}(1{-}\alpha)^r$ compensation preserves coverage at any budget
(Table~\ref{tab:rt}, Fig.~\ref{fig:rt_ladder}b). Selection and gather run
entirely on GPU---the selected prefix is collected straight into the
rasterizer buffers with no CPU-side collection or host-to-device
upload---and the GPU selection reproduces the offline reference exactly,
so the runtime inherits the evaluation semantics measured above; on the
official 207-view trajectory the per-frame selected count changes by a
median of ${\sim}1\%$ between consecutive views, and residual popping is
confined to tier boundaries (Limitations). Three properties matter
for deployment. First, selection is resolution-invariant by construction:
the materialized set is \emph{identical} at $800^2$ and $1920\times1080$
at every budget (measured equal counts on both); the naive
threshold-schedule drift of the tier prototype is gone. Second, the
full set renders in real time: at native $1920\times1080$ with full view-dependent SH
on one RTX 4080 SUPER, the official 207-view test trajectory renders at
$71$~fps mean (full set, full SH, full coverage; same protocol
as the $25.43$~dB measurement; timing drops the first 10 frames, and
the per-level fps curve is Table~S8); the orbit probe renders the full $5.99$M at
$49.5$~fps; at $800^2$ the same full set
runs $79$--$198$~fps. As a commodity-8~GB reference, the DC-only compact
path renders the original $19.93$M Block\_all model at $30.7$~fps at
$800^2$ on an RTX 3060 Ti (supplementary material). Third, the capacity ladder is smooth: at $1080p$,
$0.34$M/$0.68$M/$1.38$M/$2.59$M selected render
$76.8$/$65.0$/$59.0$/$57.0$~fps at $18.1$/$20.3$/$21.7$/$23.4$~dB
self-PSNR against the full render, so the budget--quality trade-off is
continuous and deploy-time (Fig.~S14b).

The re-rendered official Octree-GS
Block\_all checkpoint on the \emph{same} 207 views at
$-0.30$~dB resolution-matched model quality ($25.43$ vs.\ $25.73$~dB,
Octree-GS renders upsampled to $1080$p), both sides under each side's
own renderer: our rasterizer is $6.2\times$ faster ($71$ vs.\
$11.5$~fps). All fps numbers in this subsection are measured on the
same RTX 4080 SUPER. The fps contrast reflects architecture: the
anchor-based checkpoint stores $8.01$M anchors (decoding to
${\sim}80$M Gaussians, a $2.5$~GB model) but renders only a
LOD-culled ${\sim}0.9$M per view and pays a per-frame MLP decode,
whereas our $6.89$M static explicit splats ($1.7$~GB) feed the
rasterizer directly with no decode. In the literature,
city-scale interactive rendering has been reported on datacenter GPUs
with training-coupled, SH-reduced
LOD~\cite{liu2024citygaussian,ren2024octreegs}, and more recently on
consumer GPUs by training the LOD representation itself, out-of-core streaming hierarchies~\cite{windisch2026alod} and LoD-aware training~\cite{kulhanek2025lodge}, in each case for budgets baked in at
training time; our layer is the post-hoc counterpart: one already-trained
model, full SH, and a budget chosen at deploy time on a consumer GPU.

\begin{table*}[t]
\centering
\small
\begin{tabular}{lrrrr}
\toprule
operating point & selected & fps & p5 fps & GPU mem \\
\midrule
capacity $K{=}0.5$M & 342k & 76.8 & --- & $3.9$ GB \\
capacity $K{=}2$M & 1.43M & 59.0 & --- & $4.3$ GB \\
full set (cap 16M) & 5.99M & 49.5 & --- & $7.8$ GB \\
\midrule
full set, official 207-view traj. & 6.89M & \textbf{71.0} & 55.2 & $7.8$ GB \\
$K{=}2$M, official 207-view traj. & 1.43M & \textbf{66.2} & 42.7 & $4.3$ GB \\
full set, $800^2$ & 5.99M & 79.1--198 & --- & --- \\
\midrule
Octree-GS full ckpt (207 views) & $\sim$0.9M/view & 11.5 & --- & --- \\
\bottomrule
\end{tabular}
\caption{Interactive deployment on Block\_all (v3 retrain, native $1080$p).}
\label{tab:rt}
\end{table*}
\begin{figure*}[t]
\centering
\includegraphics[width=0.98\linewidth]{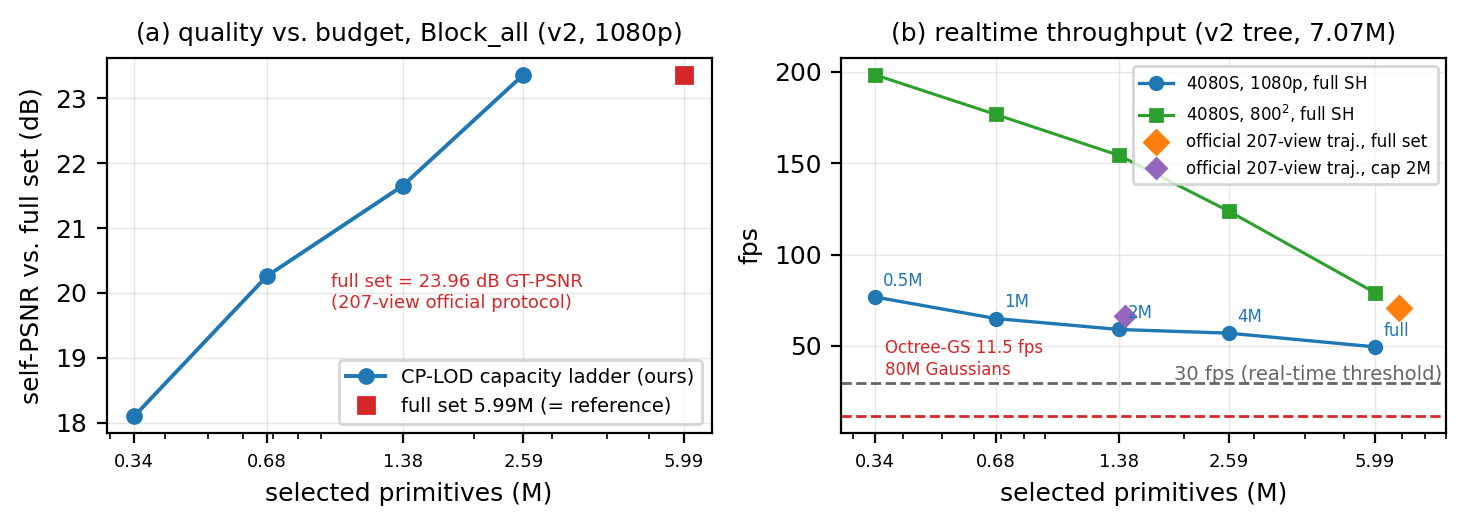}
\caption{Block\_all capacity ladder (a) and realtime throughput (b) on the v2 tree (7.07M, SH3).}
\label{fig:rt_ladder}
\end{figure*}

\subsection{Ablations}\label{sec:exp-abl}

On the training side (lego), replacing the redundancy-localizing
$\theta$ map by a random region prior costs 11--13\% more primitives at
the same PSNR; the $\theta$ signal's win is compression, not
accuracy. Without the rate-based stopping criterion, the default
densification controller over-prunes and loses 0.56~dB; with it,
training freezes automatically at the sweet spot (step 5000), and a
stop-only variant (no $\theta$ pruning) still reaches 37.00~dB at
216k (63\% of vanilla, $-0.19$~dB), isolating the stopping rule as the
accuracy-critical component. Hyper-parameter sensitivity is
$\le0.10$~dB. On the rendering side, a DC-only tree loses $5$~dB
systematically (storing opacity-weighted SH first moments for all
bands is required), and the residual-proxy selection recovers
$+0.49$~dB over the fixed threshold at a comparable count (garden,
Table~\ref{tab:garden}).

\subsection{Efficiency}\label{sec:exp-eff}

Construction also fits the desktop GPU for the Block\_all trees
reported here: on the $10$M $\theta$-pruned
Block\_all tree (full SH moments) it peaks at $6.45$ GiB on the 8 GB
RTX 3060 Ti ($2.53$ GB output); the $7.07$M Block\_all v2 tree
(full SH, $1.87$ GB) builds on the same consumer card. (The MatrixCity
Aerial tree is a one-time offline build.)


\section{Limitations}\label{sec:limit}
We evaluate on a single GPU; transparent and occluded-region extremes
are not covered, and memory streaming (GPS's native regime) remains
future work. The proved layer covers the compositing renderer family,
atomic and field representations, and the consistency functional with
matching upper and sharp lower bounds; three items remain open---the
sharpness of the visibility-exponent bound, SGD trainability of
implicit fields at the proved $N^{-1/2}$ rate, and non-compositing
renderers such as caustics and participating media. The measured
$-1/2$ bracket holds on all 13 scenes inside the registered reading
window (indoor scenes read $-0.457\approx-1/2$).
\section{Conclusion}\label{sec:concl}
We began with the obstacle that blocks adoption: city-scale 3D
Gaussian Splatting scenes do not fit a consumer GPU, and no principled
answer existed for what could be given up. This paper proposed such an
answer: a measurable capacity floor for 3DGS derived from optimal
transport---rendering and training become transport problems, the Zador
law fixes where the floor lies, and three selection rules (collapse,
truncation, allocation) turn the floor into executable permissions. On
this foundation we built two methods: budget-centered training grows a
model directly at the floor's prescribed budget instead of training
large and pruning, and Factoring LOD turns one trained model into a
deployment-time knob that renders a full city in real time on a
consumer GPU across continuous memory budgets. Future work includes
memory streaming for out-of-core scenes, hysteresis and perceptual
validation for tier transitions, and extending the phase-space
formulation to time and to further primitive families. One floor, two
methods: deployment renders exactly what fits; training grows exactly
what the scene needs.
\bibliographystyle{eg-alpha}
\bibliography{eg2027_merged_refs}

\clearpage
\setcounter{section}{0}
\setcounter{theorem}{0}
\setcounter{figure}{0}
\setcounter{table}{0}
\setcounter{equation}{0}
\renewcommand{\thetheorem}{\arabic{theorem}}
\renewcommand{\thefigure}{S\arabic{figure}}
\renewcommand{\thetable}{S\arabic{table}}
\renewcommand{\theequation}{S\arabic{equation}}
\renewcommand{\thesection}{S\arabic{section}}

The framework definitions and the theorem-chain statements are
made in Section~4 of the main paper; the proofs the main text does
not carry live here. The covering formalism behind the design
principles, with full proofs of the main-text theorems, is in the
Background-and-guarantees section; the measurement protocol and
records behind Section~6.1 of the main paper are in the Measurement
appendix; notation is collected in the notation table.
\section{Proofs for the capacity-floor chain}
\label{app:sec-theory}

The framework definitions and the theorem-chain statements are
in Section~4 of the main paper; this appendix collects the proofs of the
load-bearing statements: full proofs for the loss identity, the duality
certificate, the transport theorem, the Zador construction, and the
solver-descent and taxonomy statements; proof sketches remain only for
the visibility-cell rate and the new-view bound; the deep imports
(Zador's law, the Wang--Zahl covering estimate, DeVore's width bounds)
are cited to their classical sources.

\begin{proof}[Loss identity $\Lcal=\Ccal-\Vcal$]
For the normalized coupling $\tilde\pi_p$ the conditional expectation
$\int \rho\,\dd\tilde\pi_p$ is the rendered colour $I(p)$. Expanding the
square in the transport cost and summing over pixels with weights $S(p)$,
\[
\begin{aligned}
\Ccal&=\sum_pS(p)\!\int\!\norm{\rho-I(p)+I(p)-c^{\gt}(p)}^2\dd\tilde\pi_p\\
&=\underbrace{\sum_pS(p)\!\int\!\norm{\rho-I(p)}^2\dd\tilde\pi_p}_{\Vcal}
+\underbrace{\sum_pS(p)\,\norm{I(p)-c^{\gt}(p)}^2}_{\Lcal},
\end{aligned}
\]
the cross term vanishing because $\int \rho\,\dd\tilde\pi_p=I(p)$. Rearranging
gives $\Lcal=\Ccal-\Vcal$. \qed
\end{proof}

\begin{proof}[Certificate part of the representation theorem]
Kantorovich duality gives
$\Ccal_{\mathrm{free}}(\theta)=\sup_{\varphi\oplus\psi\le c}
\bigl[\int\varphi\,\dd\mut+\sum_pS(p)\psi(p)\bigr]$
(Fenchel--Rockafellar; the constraint set is linear in the source
variable), so every dual-feasible pair lower-bounds
$\Ccal_{\mathrm{free}}$. The rendering coupling is feasible for the free
problem with the same marginals, so $\Ccal(\theta)\ge\Ccal_{\mathrm{free}}(\theta)$.
Substituting the loss identity $\Lcal=\Ccal-\Vcal$ proved above and
rearranging yields the duality-certificate display of the main paper. \qed
\end{proof}

\begin{proof}[Zador law in U5: construction, and imported lower bound]
\textbf{Upper bound (construction).} Partition the $d_{\mathrm{ps}}$-dimensional
support into cells of diameter ${\sim}\varepsilon$; cell $j$ has measure
$m_j\approx\lambda_j\varepsilon^d$. At the optimum the per-point error
is equalized across occupied cells, which gives cell $j$
$k_j\propto m_j^{d_{\mathrm{ps}}/(d_{\mathrm{ps}}+2)}$ points; hence
$N\asymp\sum_jm_j^{d_{\mathrm{ps}}/(d_{\mathrm{ps}}+2)}\asymp\varepsilon^{-2d_{\mathrm{ps}}/(d_{\mathrm{ps}}+2)}
\norm{\lambda}_{d_{\mathrm{ps}}/(d_{\mathrm{ps}}+2)}$, while the total squared error is
$\Rcal\asymp\varepsilon^2$. Eliminating $\varepsilon$ gives
$\Rcal(N)\le C_{d_{\mathrm{ps}}}\norm{\lambda}_{d_{\mathrm{ps}}/(d_{\mathrm{ps}}+2)}N^{-2/d_{\mathrm{ps}}}$
for the rendering instance.
\textbf{Lower bound (imported, under view completeness).} The matching
lower bound with the sharp constant is Zador's theorem
itself~\cite{zador1982,graf2000foundations}; the entropy
route~\cite{kolmogorov1961entropy} gives the cruder
$\varepsilon\ge c\,N^{-1/d_{\mathrm{ps}}}$ on sets of positive density.
One bridge is used and flagged: Zador bounds the quantization error of
the source measure $\mus$, and the rendering instance inherits it only
if the view set separates the source (view completeness, the
single-source axiom read as an injectivity requirement on the target
map); the theorem statement in the main text carries this hypothesis.
The intrinsic phase-space support dimension $d_{\mathrm{ps}}$, not the ambient
dimension, governs the rate. On surface scenes $d_{\mathrm{ps}}{=}4$ (surface
position $\times$ incident-direction hemisphere), giving
$\Rcal\propto N^{-1/2}$. \qed
\end{proof}

\begin{theorem}[Convergence of solvers, and consistency]\label{thm:u67}
Gradient descent on $\Lcal=\Ccal-\Vcal$ with $L$-Lipschitz
gradient is monotone and every stationary point satisfies
$\nabla\Ccal(\theta^*)=\nabla\Vcal(\theta^*)$; sampling/Lloyd/meshing
attain their classical rates ($N^{-2/d}$ / stationary quantizer /
$N^{-1}$ on $C^2$ surfaces). Single-source
cross-view colour agreement is structural: masked visibility-cell
atoms achieve the optimal rate
$\max_j\norm{V_j-\tilde V_j}_{L^2}\le CK\,\occ^{1/2}N^{-1/2}(\log N)^c$,
and the new-view error is bounded by the training error plus an angular
regularity term and the SH/visibility capacity term.
\end{theorem}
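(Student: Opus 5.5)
The theorem bundles four claims: descent, the classical solver rates, the visibility-cell rate, and the new-view bound. I plan to prove them in that order, reusing the loss identity $\Lcal=\Ccal-\Vcal$ and the covering-rate results already stated.

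\textbf{Descent and stationarity.} I apply the standard descent lemma to $\Lcal$ with step $\eta\le 1/L$:
\[
\Lcal(\theta_{k+1})\le\Lcal(\theta_k)-\tfrac{\eta}{2}\norm{\nabla\Lcal(\theta_k)}^2 .
\]
This gives monotonicity directly. Summing over $k$ bounds $\sum_k\norm{\nabla\Lcal(\theta_k)}^2$ by $2(\Lcal(\theta_0)-\inf\Lcal)/\eta$, so gradients vanish along the iterates. Any stationary point has $\nabla\Lcal(\theta^*)=0$. Differentiating the identity $\Lcal=\Ccal-\Vcal$, which holds identically in $\theta$ because the cross term vanishes for every normalized coupling, gives $\nabla\Ccal(\theta^*)=\nabla\Vcal(\theta^*)$.

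\textbf{Classical solver rates.} Sampling is the atomic case of Theorem~\ref{thm:u4}. I place $N$ i.i.d.\ or stratified samples and use the covering-number estimate on a $d$-dimensional support; this yields $N^{-2/d}$, with the constant matching the Zador construction in the proof above up to a factor. For Lloyd, I show that each half-step does not increase $\int\min_i\norm{x-y_i}^2\dd\mus$: reassigning points to the nearest center cannot increase cost, and moving each center to the centroid of its cell minimizes that cell's cost. This is the monotonicity already claimed in Theorem~\ref{thm:u4}. A monotone sequence bounded below converges, and at a limit point the centers are stationary quantizers (centroid condition). For meshing I cite the piecewise-linear interpolation estimate on $C^2$ surfaces: shape-regular triangles of size $h$ give squared error $O(h^4)$, and $N\asymp h^{-2}$ triangles then give $N^{-1}$ in the cost \eqref{eq:p8}.

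\textbf{Visibility-cell rate and new-view bound.} I split each view's visibility function $V_j$ into a cartoon field, i.e.\ piecewise smooth with occlusion boundaries as the discontinuity curves. I then apply Theorem~\ref{thm:rate} on the visible region, whose measure fraction is $\occ$. This is the step where the factor $\occ^{1/2}$ must appear: I would restrict the $L^2$ norm to the mask and allocate the atoms by Corollary~\ref{cor:alloc}. The uniform $\max_j$ over views requires the masked atoms to be shared across views, which is the single-source axiom. I expect this sharing across views to be the main obstacle. My first attempt would show that the optimal allocation per view differs from the common one by at most a constant factor under the view-completeness hypothesis of Theorem~\ref{thm:u51}; if that fails, I would settle for a sketch with a $\log(\#\text{views})$ loss.

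For the new-view bound, I write the unseen direction $\xi$ as a nearby training direction $\xi'$ plus an angular displacement. I then apply a triangle inequality through the three-term chain of Theorem~\ref{thm:u123}. The source gap is the training error. The difference between rendering at $\xi$ and at $\xi'$ is controlled by the angular Lipschitz regularity of $\rho$ times $\abs{\xi-\xi'}$. The residual capacity is controlled by the SH truncation error $L^{-p}$ and the visibility term just established.
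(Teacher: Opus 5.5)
Your descent argument (descent lemma for the smooth $\Lcal$, then differentiating the identity $\Lcal=\Ccal-\Vcal$) and your Lloyd argument (alternating exact block minimization) are the paper's. For the other classical rates the paper simply cites Zador and DeVore.

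There are two slips in your versions of the rates. First, with $N\asymp h^{-2}$, a squared interpolation error $O(h^4)$ is $O(N^{-2})$ in the squared cost \eqref{eq:p8}, not $N^{-1}$. The classical meshing rate $N^{-1}$ is the \emph{unsquared} $L^2$ rate; that is the normalization of Theorem~\ref{thm:u8}, where triangles at $N^{-1}$ are ranked against point atoms at $N^{-1/2}$. The sampling rate $N^{-2/d}$, by contrast, is a squared distortion, so you should say which norm each rate lives in. Second, the covering-number estimate only works for stratified samples: the covering radius of $N$ i.i.d.\ points is of order $(\log N/N)^{1/d}$, so for i.i.d.\ sampling you lose a logarithm.

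For the last two claims, the paper gives only a structural remark: all views read one value field on $\PS$, so disagreement enters only through the direction coordinate and SH truncation. It explicitly leaves the visibility-cell rate and the new-view bound as unproved sketches. Your sketch of them, however, contains steps that would fail.

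\emph{The new-view step.} ``The source gap is the training error'' is false. The chain of Theorem~\ref{thm:u123} bounds image error \emph{by} $\norm{\mut-\mus}$, never the reverse. View completeness is only qualitative injectivity, so it supplies no stability constant for bounding the source gap by training-view errors. That source gap also already integrates over the unseen directions, so it is what a generalization bound must control, not an input to one. The usable form of your idea bypasses the chain. For the nearest training view $v'$, reprojected onto the pixel grid of $v$, use $\norm{I_v-c^{\gt}_v}\le\norm{I_v-I_{v'}}+\norm{I_{v'}-c^{\gt}_{v'}}+\norm{c^{\gt}_{v'}-c^{\gt}_v}$. The reprojection, through disocclusion, is where the visibility term enters.

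\emph{The visibility-cell step.} The obstacle you name for the $\max_j$ bound is not the real one, and your per-view allocation would give the wrong exponent. One view-independent atom set suffices. Run the surface-layer construction of Theorem~\ref{thm:rate} once, with all $N$ plates spread over the whole surface at thickness $s\asymp K/N$. View $j$'s mask then sees only the fraction $\occ_j$ of the misclassification volume $\asymp Ks$. Its error is therefore $\asymp(\occ_jKs)^{1/2}\asymp\occ_j^{1/2}KN^{-1/2}$, uniformly in $j$, with no comparison of per-view allocations and no use of view completeness. Concentrating the budget on one mask, as you propose, would give the factor $\occ_j$ for that view, but with atoms no other view can use. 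That is exactly the conflict you then try to repair; the exponent $1/2$ is the signature of \emph{not} reallocating.

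Finally, ``optimal'' requires a matching lower bound on the masked class, for instance a Theorem~\ref{thm:j}-type capacity argument. Neither your proposal nor the paper supplies it.
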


\begin{theorem}[A taxonomy of primitive families]\label{thm:u8}
All three primitive families embed in canonical form
$(Q,w,\rho)$; point (mass), wave (Gaussian plate $w=\alpha_n2\pi\sigma_t^2$
in the limit), field (volume compositing), with voxels/triangles as
field discretizations. On surface-dominated scenes,
\begin{equation}
\begin{aligned}
\Rcal_{\mathrm{tri}}&\sim KN^{-1}\ll
\Rcal_{\mathrm{pts/surfel/plates}}\sim KN^{-1/2}(\log N)^c\\
&\ll \Rcal_{\mathrm{iso\ voxels}}\sim KN^{-1/4}.
\end{aligned}
\label{eq:ranking}
\end{equation}
Optimal cells become covariance-ellipsoidal with
density $\rho_A\propto\lambda^{d_{\mathrm{sur}}/(d_{\mathrm{sur}}+2)}$, where $d_{\mathrm{sur}}$ is the
surface (content) dimension, distinct from the phase-space support
dimension $d_{\mathrm{ps}}$ of the Zador law; on surfaces $d_{\mathrm{sur}}{=}2$, so
$\rho_A\propto\lambda^{1/2}$.
\end{theorem}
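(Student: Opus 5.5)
The plan is to prove Theorem~\ref{thm:u8} in three parts: the canonical embedding, the rate ranking \eqref{eq:ranking}, and the optimal-cell shape.

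\textbf{Embedding.} For each family we exhibit the triple $(Q,w,\rho)$ of Definition~\ref{def:render}.

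A point primitive is a Dirac mass on a cell with $\Delta x\to0$. Its weight is its opacity, and its value is the SH colour.

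A wave primitive is a Gaussian with covariance $R\,\mathrm{diag}(\sigma_t^2,\sigma_t^2,\sigma_n^2)R^\top$. Its weight is $w=\int\alpha\,G$ along the normal. In the thin limit $\sigma_n\to0$ this integral becomes the plate mass $\alpha_n 2\pi\sigma_t^2$: integrate out the normal direction, and the tangential Gaussian contributes $2\pi\sigma_t^2$.

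A field primitive is handled by volume compositing. Its weight is $w=\sigma(x)\,\dd x$, and the coupling is the continuum limit of the compositing rule $w_i=\alpha_iT_i$. Voxels and triangles are piecewise-constant or piecewise-linear discretizations of that density.

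In each case the coupling of Theorem~\ref{thm:u123} applies unchanged, because it only needs bounded transmittance-weighted mass.

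\textbf{Ranking.} For each family we bound the $L^2$ error for a $C^2$ surface of area $A$ and curvature $K$, using $N$ cells of area $A/N$, i.e.\ side $h\asymp N^{-1/2}$.

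\emph{Triangles.} A linear interpolant has deviation $O(Kh^2)$, which gives $KN^{-1}$. This is the classical meshing rate from Theorem~\ref{thm:u67}.

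\emph{Points, surfels and plates.} A flat or constant atom misses first-order variation of appearance and normal across the cell, an error of $O(h)$, which gives $KN^{-1/2}$. The upper bound comes from the construction in Theorem~\ref{thm:rate}. The matching lower bound, with its $\log N$ factor, comes from the converse half of Theorem~\ref{thm:rate}. Equivalently it follows from Theorem~\ref{thm:u51} with $d_{\mathrm{ps}}=4$.

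\emph{Isotropic voxels.} The cells are three-dimensional, of side $h\asymp N^{-1/3}$ if they fill the volume. Only the cells meeting the surface matter, and there are about $h^{-2}$ of them, so $N\asymp h^{-2}$. This would give $h\asymp N^{-1/2}$, so the counting alone does not produce the claimed exponent.

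The intended $N^{-1/4}$ must instead come from the squared-versus-unsquared convention. An isotropic voxel cannot align with the surface, so the discontinuity across it costs $L^2$ error $\sim\sqrt{h\cdot\text{area}}$, i.e.\ $\|f-f_N\|\asymp h^{1/2}\asymp N^{-1/4}$. This is the standard nonadaptive wavelet rate for cartoon images, which we cite from DeVore~\cite{devore1998}. The anisotropic atoms avoid this loss because they align with the surface.

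To make the three rates comparable we state all of them in the same norm. The unsquared $L^2$ norm matches the convention of Theorem~\ref{thm:rate}.

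\textbf{Optimal cells.} We view the $N$ plates as a quantizer of the surface measure, which has intrinsic dimension $d_{\mathrm{sur}}=2$.

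For fixed cell volume, the Hessian expansion of the local error is minimized by cells shaped like the inverse local metric. These are ellipsoids, which is Gersho's argument in the asymptotic regime of Zador's theorem, and the moment-matching covariance $T^*$ realizes exactly this shape. Equalizing per-cell error with Lagrange multipliers gives the point density $\rho_A\propto\lambda^{d/(d+2)}$, the same computation as in the Zador upper-bound proof above. With $d=d_{\mathrm{sur}}=2$ this is $\rho_A\propto\lambda^{1/2}$.

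\textbf{Main obstacle.} The hardest step is the lower bounds that make each "$\ll$" strict. This is especially so for voxels, where we must show that no $N$-voxel field beats $N^{-1/4}$. For that I would use a Kolmogorov-width argument over a family of tilted half-spaces.
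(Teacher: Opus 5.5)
Your route matches the paper's in outline: the same three embeddings, the first-order interpolation (DeVore width) rate $N^{-1}$ for triangles, an $N^{-1/2}$ rate for value-only atoms and plates, and the misclassification band for voxels ($\asymp h^{-2}$ cut cells, so $N\asymp h^{-2}$ and error $\sqrt{Kh}\asymp N^{-1/4}$). For the middle rate the paper uses Zador at $d_{\mathrm{sur}}=2$ with the $(\log N)^c$ of Theorem~\ref{thm:rate}.

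\textbf{The common norm fails.} You claim all three rates hold in one norm, the unsquared volumetric $L^2$ norm of Theorem~\ref{thm:rate}. In that norm, a triangle mesh at Hausdorff distance $\eta\asymp Kh^2$ from the surface misclassifies a band of volume $\asymp\eta A$. Its error is therefore $\asymp\sqrt{KA}\,h\asymp N^{-1/2}$. That is the same rate as the tangential plates in the proof of Theorem~\ref{thm:rate}, which are also flat pieces with second-order deviation. Your ``$O(Kh^2)$ gives $KN^{-1}$'' is a Hausdorff (or on-surface interpolation) rate, not an $L^2$ error of the field, so in your declared norm the first $\ll$ of \eqref{eq:ranking} collapses. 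In an on-surface appearance norm the second one collapses instead: piecewise-constant voxels get $O(h)\asymp N^{-1/2}$ and tie with points. The paper makes no common-norm claim. It gives each family its natural error notion (first-order width bound, RMS quantization at $d_{\mathrm{sur}}=2$, volumetric misclassification) and reads \eqref{eq:ranking} as an ordering of these per-family rates. Your argument supports exactly that once you drop the unification. The same slip appears in your ``equivalently from Theorem~\ref{thm:u51} with $d_{\mathrm{ps}}=4$''. There $N^{-1/2}=N^{-2/d_{\mathrm{ps}}}$ is a squared-error rate, and its unsquared form $N^{-1/4}$ would put points level with voxels. This is why the statement keeps $d_{\mathrm{sur}}$ and $d_{\mathrm{ps}}$ apart.

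\textbf{The cell shape is not proved.} Only the point-density law $\rho_A\propto\lambda^{d/(d+2)}$ is an importable theorem (Graf--Luschgy). The paper explicitly reads the ellipsoidal shape as a Gersho-type prediction and does not prove it. Gersho's conjecture is not a theorem in general. Where optimal cells for Euclidean squared error are known (dimension two, Fejes T\'oth's hexagon theorem, i.e.\ precisely $d_{\mathrm{sur}}=2$), they are regular hexagons with isotropic second moments, not elongated ellipsoids. Anisotropic cells ``shaped like the inverse local metric'' arise only under a position-dependent quadratic distortion $(x-y)^{\top}Q(x)(x-y)$, such as the Hessian-weighted interpolation error of first-order elements. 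But then the optimal point density becomes $\lambda^{d/(d+2)}(\det Q)^{1/(d+2)}$, so your shape and your density come from two different error functionals. Nothing in your argument ties the moment-matched $T^*$ of Morton buckets to optimal quantizer cells either. State this part as a prediction, as the paper does.

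Finally, the strict lower bounds you single out as the main obstacle are not established in the paper beyond the converse half of Theorem~\ref{thm:rate}. Your Kolmogorov-width plan for voxels would therefore go beyond the paper's proof rather than repair it.
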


\paragraph{Proofs and imports for the two theorems above.}
\emph{Descent and stationary balance.} On the audited domain $\Ccal$
and $\Vcal$ are both $L$-smooth, hence so is $\Lcal=\Ccal-\Vcal$; the
descent lemma for the gradient step
$\theta_{t+1}=\theta_t-\eta\nabla\Lcal$ gives
$\Lcal(\theta_{t+1})\le\Lcal(\theta_t)-\eta\bigl(1-\tfrac{L\eta}2\bigr)
\norm{\nabla\Lcal(\theta_t)}^2$,
so the iterates are monotone for $\eta<2/L$ and every accumulation
point satisfies $\nabla\Ccal=\nabla\Vcal$.
\emph{Lloyd.} For atomic families the loss identity proved above
applies with counting measures: the quantization error
$\int\min_i\norm{x-y_i}^2\,\dd\mus$ is minimized alternately over the
assignment (nearest neighbour) and the centres (centroids), each step
an exact block minimization, so the iterates are monotone and
stationary quantizers are fixed points~\cite{graf2000foundations}.
\emph{Classical rates.} The sampling rate $N^{-2/d}$ is Zador's law
\cite{zador1982,graf2000foundations}; the meshing rate $N^{-1}$ on
$C^2$ surfaces is the classical finite-element width
bound~\cite{devore1998}.
\emph{Consistency (structural).} All views couple to the same source
$\mus$ and read the same value field $v$ on $\PS$; view disagreement can
enter only through the direction coordinate of $\PS$ (carried by $v$)
and the angular truncation of the SH basis, which is the stated
angular-regularity tail. The visibility-cell rate and the new-view
bound remain proof sketches: they need the visibility-cell formalism
developed in full, which we do not attempt here.
\emph{Taxonomy.} The embeddings are explicit: point $=$ Dirac source
$\mut=\sum_iw_i\delta_{x_i}$; wave $=$ Gaussian packet with the
footprint kernel of the transport proof; field $=$ the density case of
the same coupling. The ranking: value-only atoms on a surface are
Zador quantization at the content dimension $d_{\mathrm{sur}}=2$,
giving $N^{-1/2}$ (the plate logarithmic loss is the $(\log N)^c$ of
the adaptive-covering theorem); first-order elements on $C^2$ surfaces
give the width bound $N^{-1}$~\cite{devore1998}; isotropic voxels pay
the misclassification band: voxels of side $h$ cut the
$h$-neighbourhood of the surface in ${\sim}h^{-2}$ cells, so a
surface-resolving voxel budget has $N\asymp h^{-2}$ and error
$\sqrt{Kh}\asymp N^{-1/4}$. The ellipsoidal optimum follows the
classical point-density law $\rho_A\propto\lambda^{d/(d+2)}$
\cite{graf2000foundations}; the exact cell shape is Gersho-type, proved
in dimension one and conjectural beyond, and the paper accordingly
reads it as a prediction.

\begin{proof}[Proof of the transport theorem (Theorem~3 of the main paper)]
\textbf{(a) Canonical coupling.} Let $r\colon\PS\to\Pix_v$ be the
measurable ray--pixel assignment and define the finite measure
\[
\piv\bigl(A\times\{p\}\bigr)
:=\int_{A\cap r^{-1}(p)}w(x)\,G\bigl(p-\Pi(x)\bigr)\,
\abs{\det J_\Pi(x)}\,\one_V(x)\,\dd x,
\]
with pixel masses $S(p)=\piv(\PS\times\{p\})$. Both marginals hold by
construction: the pixel marginal is $\nuv(\{p\})=S(p)$, and the source
marginal is the phase-space distribution that axiom~1 of the main paper
assigns to the representation; the atomic case
$\mut=\sum_iw_i\delta_{x_i}$ is the same display over counting measure
and gives the 3DGS instance $\piv(\{x_i\}\times\{p\})\propto w_i(p)$ of
the rendering-coupling definition. Weak continuity in $\theta$: the
weights and footprints are Lipschitz in $\theta$ on the audited domain
and dominated by the bounded transmittance-weighted mass, so dominated
convergence gives $\int f\,\dd\piv(\theta_k)\to\int f\,\dd\piv(\theta)$
for every bounded continuous $f$ along $\theta_k\to\theta$ (weak
convergence; atoms move continuously in $\theta$ by the Lipschitz
regularity).
\textbf{(b) Conditional expectation.} Disintegrating $\piv$ over its
pixel marginal, $\piv(\dd x,p)=S(p)\,\tilde\kappa_v(p,\dd x)$ with
$\tilde\kappa_v(p,\dd x)$ the normalized kernel displayed in the main
paper; the conditional-expectation identity is Bayes' rule for this
disintegration, $I_{v,\theta}(p)=\int \rho\,\dd\tilde\kappa_v(p,\cdot)$.
\textbf{(c) Operator norm, sharp.} Normalize $\nu_v$ to the probability
$\tilde\nu_v=\nu_v/\mu_v$ ($\mu_v=\sum_pS(p)$). For
$P_vf(p)=\int f\,\dd\tilde\kappa_v(p,\cdot)$, Jensen and Fubini give
\[
\begin{aligned}
\norm{P_vf}^2_{L^2(\Pix_v,\nu_v)}
&=\mu_v\int\abs{P_vf}^2\,\dd\tilde\nu_v
\le\mu_v\iint f^2\,\dd\tilde\kappa_v\,\dd\tilde\nu_v\\
&=\mu_v\!\int f^2\,\dd\tilde\mu_\theta,
\end{aligned}
\]
where $\tilde\mu_\theta:=m_v/\mu_v$ is the normalized source marginal of
the coupling ($m_v$ the source marginal of $\piv$; in the conservative
case of (a), $m_v=\mut$ and $\tilde\mu_\theta=\mut/\mut(\PS)$ with
$\mu_v=\mut(\PS)$), and the norm $\norm{\cdot}_{L^2(\PS)}$ of the
theorem is taken with respect to this probability;
$f\equiv1$ attains equality
($P_v1=1_{\Pix_v}$), so $\norm{P_v}=\sqrt{\mu_v}$ is sharp.
\textbf{(d) Blending gap.} On a ray through $x$ write the saturated
compositing weight $C(x)=1-\prod_j(1-\alpha_j)$ and its linearization
$L(x)=\sum_j\alpha_j$. Bonferroni's inequalities give
\[
\abs{C(x)-L(x)}\le\sum_{j<k}\alpha_j\alpha_k
\le\tfrac12\Bigl(\sum_j\alpha_j\Bigr)^{2}\le\tfrac12M(x)^2\varepsilon^2,
\]
and the next order of the product expansion contributes
$O\bigl(M(x)^3\varepsilon^3\bigr)$, which the audited domain's cap on
covering multiplicity absorbs into the stated
$O\bigl(K'M(x)\varepsilon^3\bigr)$; carrying colours multiplies by the
compositing Lipschitz constant $K'$.
\textbf{(e) Chaining.} Decompose
$\nuv-\nu^{\gt}_v=P_v(\mut-\mus)+b_v+t_v$, with $b_v$ the compositing
deviation of (d) and $t_v$ the target-map (anti-aliasing) residual; the
triangle inequality, the sharp bound of (c) transferred through the
target map (Lipschitz constant $\ell_v$, absorbed together with
$\mu_v$ into the displayed $\sqrt{\ell_v}$), and the bound of (d)
yield the three-term chain. \qed
\end{proof}

\section{Notation and terminology}
\label{app:notation}

\begin{table*}[t]
\centering
\small
\begin{tabular}{ll}
\toprule
Symbol & Meaning \\
\midrule
$N$ & primitive budget (number of Gaussians) \\
$N^*$ & floor-knee budget located by the scan \\
$c$ & budget parameter: pixel-footprint threshold $t = c\cdot d/f$ \\
$K$ & capacity budget (per-frame working-set bound) \\
$\theta$ & local redundancy factor of a region \\
$T^*$ & opacity-weighted inertia tensor of a group \\
$W, \mathbf{c}^*$ & weighted volume and centroid of a group \\
$N^*_j \propto K_j s_j / d_j$ & per-tile budget allocation law \\
\midrule
plank & our unit of geometric bookkeeping: an oriented box of half-extents $h=3\sigma$ with opacity weight \\
primitive / splat & one Gaussian of the underlying model (used interchangeably; both are rasterized the same way) \\
anchor & Scaffold-GS-style stored point that decodes to about ten neural Gaussians \\
score export & post-hoc pruning by our contribution ranking \\
\bottomrule
\end{tabular}
\caption{Notation used throughout and the small terminology
glossary (plank vs.\ primitive vs.\ anchor).}
\label{tab:notation}
\end{table*}

\section{Measurement appendix}
\label{app:measure-details}

Records behind Section~6.1 of the main paper: the preregistered protocol and
its four checks, stated in plain narrative form below. The per-band floor
curves, the estimate-bias study, and the downstream uses are archived
with the paper's supplementary artifacts, together with all JSON
certificates.

\subsection{Protocol and records}
\label{app:measure-protocol}

The floor is only useful if it can be verified on real scenes. The floor-measurement
study is preregistered (criteria frozen before data collection) and its
verdicts were frozen before reading the data, a discipline mirrored by
our other measurements (Appendix~\ref{app:protocol}).

\subsection{Protocol}

On each scene: (i) scan budgets (MCMC training at cap
$2\times10^6$ primitives, then opacity-prune to $N$ and fine-tune) to get
the empirical floor curve $R(N)$ ($9$ budgets); (ii) \emph{floor intactness}:
no scan point crosses the fitted floor by more than
$\log_{10}(0.8)$; (iii) published operating points are placed against the
floor, re-anchored to the paper's own baseline at matched capacity ratio,
$5\%$ tolerance $\approx0.22$~dB; (iv) duality certificates
lower-bound achieved error at every (budget, view).

\section{Background and guarantees}\label{app:theory}
We collect the formal background behind the design principles of
Section~4.4 of the main paper: planks and weighted covering, the adaptive
approximation rate, the optimal budget allocation, and the rendering-side
error control. Proofs of the weighted-covering theorem, the adaptive-covering-rate
theorem and its converse, and the derivation of the budget-allocation
corollary are given here; the rendering-side error-control statements
are summarized in the subsection below.

\subsection{From primitives to planks}\label{app:theory-motiv}
A trained model is a set of planks $(\mu_i,R_i,h_i,\alpha_i)$: each
primitive occupies a box of half-extents $h_i=3\sigma_i$ oriented by $R_i$,
carrying weight $\alpha_i$. Dropping primitives means replacing one
covering by a coarser one; the two quantities that decide whether this is
safe are \emph{local covering density} (is this region over-covered, i.e.\
are its planks redundant?) and \emph{approximation error} (how much mass is
lost per removed plank?). The theorems below answer exactly these two
questions, which is why the rendering and training sides of
Sections~6 and~7 of the main paper both reduce to them.

\subsection{Weighted covering}\label{app:theory-covering}
\begin{definition}[Plank]
A \emph{plank} $V$ is a box in $\R^3$ carrying a weight $w_V:V\to[0,1]$
(effective opacity contribution); $|V|$ is its volume. A 3DGS primitive is a
plank: center $\mu$, principal axes from its rotation, half-extents
$h=3\sigma$ (three-sigma convention), weight = opacity.
\end{definition}

\begin{theorem}[Weighted covering]
Let $\T$ be an essentially distinct family of $\delta\times\delta\times1$
tubes with weights $w_T$ satisfying (i) the Katz--Tao density condition
$\Delta_{\max}(\T)\upre1$ (unweighted) and (ii) $\lambda(w)\succap1$.
Then the saturated weighted union volume
$|U|_w:=\int\min\big(1,\sum_Tw_T(x)\big)\dd x$ obeys
$|U|_w\ \gtrsim\ \lambda(w)^{K+1}\,\delta^{\varepsilon}\sum_T\int_Tw_T$,
with $K=K(\varepsilon)$ the constant of the unweighted estimate:
collapsing the family forfeits no mass beyond the $\varepsilon$-loss of
the unweighted theorem and constants depending only on $\varepsilon$.
\end{theorem}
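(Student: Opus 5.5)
The plan is to reduce the weighted statement to part~(i) of Theorem~\ref{thm:collapse} (the Wang--Zahl estimate) by passing from each weight to a superlevel shading. Fix a threshold $\eta:=\lambda(w)/2$ and set $Y(T):=\{x\in T:\ w_T(x)\ge\eta\}$. Since $0\le w_T\le 1$, splitting the integral at $\eta$ gives
\[
\lambda(w)\,\abs{T}\le\int_T w_T\le \abs{Y(T)}+\eta\,\abs{T},
\]
so $\abs{Y(T)}\ge\tfrac12\lambda(w)\abs{T}$. The shadings therefore satisfy the hypothesis of part~(i) with $\lambda_0=\lambda(w)/2$. The hypotheses on $\T$ itself (essential distinctness, the convex Wolff axioms \eqref{eq:ktcw} with $C\upre1$, and the prism bound) are exactly the unweighted ones assumed in the statement, so part~(i) applies verbatim and gives
\[
\abs{\textstyle\bigcup_T Y(T)}\ge\delta^{\varepsilon}\,(\lambda(w)/2)^{K}\sum_T\abs{T}.
\]

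Next I compare the saturated weighted volume with this union. For $x\in\bigcup_T Y(T)$ some tube has $w_T(x)\ge\eta$, so $\min(1,\sum_T w_T(x))\ge\min(1,\eta)=\eta$. Hence
\[
|U|_w\ge\eta\,\abs{\textstyle\bigcup_T Y(T)}\ge 2^{-K-1}\lambda(w)^{K+1}\delta^{\varepsilon}\sum_T\abs{T}.
\]
Finally, $w_T\le1$ gives $\sum_T\int_T w_T\le\sum_T\abs{T}$, so the right-hand side dominates $2^{-K-1}\lambda(w)^{K+1}\delta^{\varepsilon}\sum_T\int_T w_T$. The loss constant $2^{-K-1}$ depends only on $\varepsilon$ through $K=K(\varepsilon)$, which is the claimed "constants depending only on $\varepsilon$".

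The exponent $K+1$ is where the thresholding costs one extra power of $\lambda$, which matches the statement. I should double-check that no better exponent is claimed, since one could alternatively keep $\lambda_0$ per tube to sharpen the bound.

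The main obstacle is bookkeeping of hypotheses rather than analysis. Part~(i) needs $\abs{Y(T)}\ge\lambda_0\abs{T}$ with a single $\lambda_0$, and $\lambda(w)$ is defined as a lower bound on the per-tube mass uniformly over tubes, so this is fine. Two further points need care. First, the condition $\lambda(w)\succap1$ is only needed so that $\lambda_0^K$ is not degenerate; the argument actually works for any $\lambda(w)\in(0,1]$. Second, the Katz--Tao condition written as $\Delta_{\max}(\T)\upre1$ must be identified with \eqref{eq:ktcw}, together with the prism condition of part~(i). I would state explicitly that $\Delta_{\max}(\T)\upre1$ is shorthand for both conditions, since the proof cannot supply the prism bound from the weights. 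If that identification is not allowed, I would add the prism bound as an explicit hypothesis.
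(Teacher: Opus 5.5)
Your proposal is correct and follows the paper's proof essentially step for step. Both threshold at $\lambda(w)/2$, use the superlevel sets $Y(T)$ as a shading with $\lambda_0=\lambda(w)/2$, bound $|U|_w\ge\tfrac{\lambda(w)}{2}\bigl|\bigcup_T Y(T)\bigr|$, and invoke the Wang--Zahl estimate. The only difference is the last step: the paper applies the unweighted estimate with $\sum_T|Y(T)|$ and then uses the mass comparison $\int_T w_T\le 2|Y(T)|$, whereas you keep $\sum_T|T|$ and use $w_T\le 1$. Your remark that the prism bound must be read into the hypothesis $\Delta_{\max}(\T)\upre1$ is a point the paper's proof passes over silently.
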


The theorem is the Wang--Zahl~\cite{wz2025} union-volume estimate in its
weighted (continuous-opacity) form: the proof below invokes their
Theorem~1.2, the covering estimate behind the recent resolution of the
three-dimensional Kakeya set conjecture (in the current arXiv version of
\cite{wz2025} the shaded union-volume variant appears as Cor.~1.10). Its
role here is to certify when a group of primitives can be
\emph{collapsed} into fewer representative planks without forfeiting
more than an $\varepsilon$-fraction of volumetric mass. A one-line
corollary caps regional over-packing: for every convex $W$,
$\sum_{T\subset W}\abs{T}\le C\Delta_{\max}(\T)\,\abs{W}$ (sum the count
bound of the density condition over tubes contained in $W$ and multiply
by $\abs{T}$), and this is what keeps the scale-band decomposition of
Section~5.2 of the main paper bounded. Both quantities ($\theta$,
$\Delta_{\max}$) are computable from the factoring algorithm;
the $\theta$ map used by the training-side stop rule
(Section~5.1 of the main paper) is exactly this redundancy signal.
\begin{proof}[Proof (thresholding + union comparison)]
\textbf{Step 1 (thresholding).} Let $s_0:=\lambda(w)/2$ and
$Y(T):=\{x\in T:w_T(x)\ge s_0\}$. By layer-cake,
$\int_Tw_T=\int_0^1|\{w_T>s\}|\dd s$, the low part is
$\le s_0|T|=\frac{\lambda(w)}2|T|$, so the high part is
$\ge\frac{\lambda(w)}2|T|$; hence $|Y(T)|\ge\frac{\lambda(w)}2|T|$.
\textbf{Step 2 (mass comparison).} Pointwise $w_T\le\one_{Y(T)}+s_0$, so
$\int_Tw_T\le|Y(T)|+\frac12\int_Tw_T$, i.e.\ $\int_Tw_T\le2|Y(T)|$.
\textbf{Step 3 (union comparison).} On $U(Y)=\cup_TY(T)$ we have
$\min(1,\sum_Tw_T)\ge s_0$, so $|U|_w\ge s_0|U(Y)|$.
\textbf{Step 4 (invoke the unweighted estimate [WZ25, Thm.~1.2]).}
$(\T,Y)$ satisfies the binary hypotheses with $\lambda_0=\lambda(w)/2$,
so $|U(Y)|\ge\delta^{\varepsilon}\lambda_0^{K}\sum_T|Y(T)|$; chaining
Steps 2--3 and $|Y(T)|\ge\tfrac{\lambda(w)}2|T|$ gives
$|U|_w\ \gtrsim\ \lambda(w)^{K+1}\delta^{\varepsilon}\sum_T\int_Tw_T$.\qed
\end{proof}

\subsection{Adaptive covering rates}\label{app:theory-rate}
\begin{theorem}[Adaptive covering rate]
For $f\in\mathcal{C}_K$ (piecewise $C^2/H^2$ cartoon fields) and a
multiscale dictionary of plates and tubes,\footnote{Terminology: plates
(thin slabs) and tubes are the dictionary atoms of the rate proof; the
bookkeeping unit of the deployment side (Section~5.2 of the main paper) is the
\emph{plank} (an oriented box of half-extents $h{=}3\sigma$,
Appendix~\ref{app:theory-covering}). The three granularities serve
different roles in theory and system and are not identified.}
there is an $N$-atom combination
with $\norm{f-f_N}_{L^2}\le CK\,N^{-1/2}(\log N)^c$.
\end{theorem}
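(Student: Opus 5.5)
The plan is to build the $N$-atom approximant by the standard cartoon-class argument: split $f$ into a smooth-region part and an edge part, treat each with a different piece of the plate/tube dictionary, and balance the two budgets. Here $f\in\mathcal{C}_K$ means $f=f_1+f_2\one_D$, where $f_1,f_2$ are $C^2$ (or $H^2$) with norm $\le K$ and $\partial D$ is a $C^2$ surface of bounded curvature. We work at a dyadic finest scale $2^{-J}$ and fix the target $N\asymp 2^{2J}$ up to logarithms. The approximant will be greedy (best-$N$-term) in a multiscale frame. Its atoms are plates of size $2^{-j}\times2^{-j}\times2^{-2j}$, aligned with the local tangent plane of $\partial D$, together with tubes for the smooth part. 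The plates carry the parabolic (curvelet/shearlet-type) scaling adapted to $C^2$ surfaces.

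\textbf{Step 1: smooth part.} Away from a $2^{-J}$-neighbourhood of $\partial D$, approximate $f_1$ and $f_2$ by piecewise-linear interpolants on a dyadic partition, realised as combinations of isotropic atoms at scale $\le2^{-J}$. Using $H^2$ regularity and the Bramble--Hilbert estimate, a uniform partition with $M$ cells has $L^2$ error $\lesssim K M^{-2/3}$. This is well below $N^{-1/2}$ once $M\asymp N$, so the smooth region costs $O(N)$ atoms and is not the bottleneck.

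\textbf{Step 2: edge part.} At each scale $j\le J$, cover $\partial D$ by $\asymp 2^{2j}$ patches of diameter $2^{-j}$. On such a patch, $C^2$ regularity makes the surface deviate from its tangent plane by $O(2^{-2j})$, so the plate $2^{-j}\times2^{-j}\times2^{-2j}$ captures it. The coefficient of $f$ against an $L^2$-normalised plate aligned with the surface decays like $K\,2^{-2j}$ (its volume$^{1/2}$ is $2^{-2j}$). Plates misaligned by an angle $\gg2^{-j}$ have rapidly decaying coefficients, by integration by parts across the smooth surface; this is the usual almost-orthogonality estimate. The aligned-plate count per scale is therefore $\asymp2^{2j}$, and the coefficients are sorted with weak-$\ell^p$ decay. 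Summing, the $n$-th largest coefficient behaves like $K n^{-1}(\log n)^{c'}$. The best $N$-term tail is then $\bigl(\sum_{n>N}n^{-2}\bigr)^{1/2}K(\log N)^{c'}\asymp K N^{-1/2}(\log N)^c$. The log factor comes from the $\log N$ scales, each contributing comparable mass, plus the angular directions at each scale.

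\textbf{Step 3: assemble.} Take $f_N$ to be the Step 1 approximant plus the $N$ largest edge-plate terms. The triangle inequality gives $\norm{f-f_N}_{L^2}\le CK N^{-1/2}(\log N)^c$ after renaming $2N\to N$. Tubes are needed only where $\partial D$ has creases, i.e.\ where two plates meet along a curve; such a curve set is one-dimensional and costs $O(2^{j})$ atoms per scale. Under the piecewise-$C^2$ hypothesis this is a lower-order term.

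\textbf{Main obstacle.} The hard step is the coefficient-decay and frame estimate in Step 2. We must show that the plate system is a frame with bounded constants, or at least that greedy selection plus a dual synthesis achieves the tail bound. We must also show that misaligned plates contribute negligibly in the sorted sequence. I would import this from the three-dimensional shearlet/curvelet sparsity theorem for cartoon-like functions (Guo--Labate, building on Cand\`es--Donoho), which gives exactly $N^{-1}(\log N)^{c}$ for the squared error. I would then note that each shearlet element is, up to Schwartz tails, one of our plates. If that import is not accepted, the fallback is a direct construction: use an adaptive anisotropic triangulation with $C^2$ surface patches, in the style of DeVore~\cite{devore1998}. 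Placing one plate per patch along $\partial D$ then gives the rate without frame theory, at the cost of a less explicit log exponent $c$.
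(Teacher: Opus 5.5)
Your Step~1 matches the paper's smooth-part estimate. Step~2, however, does not prove the statement as posed, and it breaks at the sentence ``each shearlet element is, up to Schwartz tails, one of our plates.'' Step~2 rests on three facts: aligned coefficients of size $K2^{-2j}$, misaligned coefficients negligible by integration by parts across $\partial D$, and the tail bound $\norm{f-f_N}_{L^2}^2\le\sum_{n>N}\abs{c_n}^2$. All three hold because shearlets oscillate in the normal direction (vanishing moments) and form a Parseval frame.

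The plates of the statement are thin slabs with neither property. An $L^2$-normalised slab lying in a smooth region of $f$ has coefficient of size about $2^{-2j}\abs{f}$ whatever its orientation. So the analysis coefficients do not decay, and no synthesis bound turns a coefficient tail into an approximation error. Guo--Labate therefore gives you an $N$-term \emph{shearlet} approximant, not a plate approximant. To convert it, you would have to re-expand each selected shearlet in plates to per-atom accuracy of order $N^{-1/2}$ (up to logarithms, since the $N$ errors accumulate). You would also have to show this costs only polylogarithmically many plates per shearlet. For slab-shaped atoms the naive cost is polynomial in $N$, which destroys the rate, and the proposal does not address this.

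The repair is your fallback, which is the route the paper takes. The estimate it needs is elementary, not the frame/coefficient-decay estimate you flagged as the main obstacle. Work at the finest scale only. Cover the $s$-neighbourhood of $\partial D$ by $N_1\asymp K/s$ tangential plates of tangential width $\sqrt s$ and thickness $s$; these are your $2^{-J}\times2^{-J}\times2^{-2J}$ plates with $s=2^{-2J}$. Over one patch the $C^2$ surface leaves its tangent plane by $O(s)$. So the volume on which the plates misclassify inside versus outside $D$ is $\lesssim(K/s)\cdot s\cdot s=Ks$, and since $f$ is bounded this part of the error is $\lesssim\sqrt{Ks}=KN_1^{-1/2}$. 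Add your Step~1 bound $KN_0^{-2/3}$ and balance with $N_0\approx N^{3/4}$ and $N_1=N-O(N^{3/4})$. The total is $\lesssim KN^{-1/2}$, within the stated bound, with no greedy selection, sorted coefficients, or sum over scales.

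One further point needs to be made explicit in this direct construction: how the cube atoms and the plates share the cells of side $M^{-1/3}$ that meet $\partial D$. Those cells are not ``away from a $2^{-J}$-neighbourhood.'' As written, Steps~1--3 never define the splitting $f=f_{\mathrm{smooth}}+f_{\mathrm{edge}}$ that your triangle inequality in Step~3 relies on.
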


The construction balances cube atoms (smooth part, error $CKN_0^{-2/3}$)
with tangential plates (surface layer, error $CKN_1^{-1/2}$) at the optimal
split $N_0\approx N^{3/4}$, $N_1=N-O(N^{3/4})$. The same construction with
our moment-matched planks achieves the predicted rate on synthetic scenes:
measured log--log slopes $-0.90$ to $-1.09$ vs.\ the theoretical $-1$
(5/5 validation passes; Fig.~\ref{fig:slopes}).

\begin{figure}[h]
\centering
\includegraphics[width=0.6\linewidth]{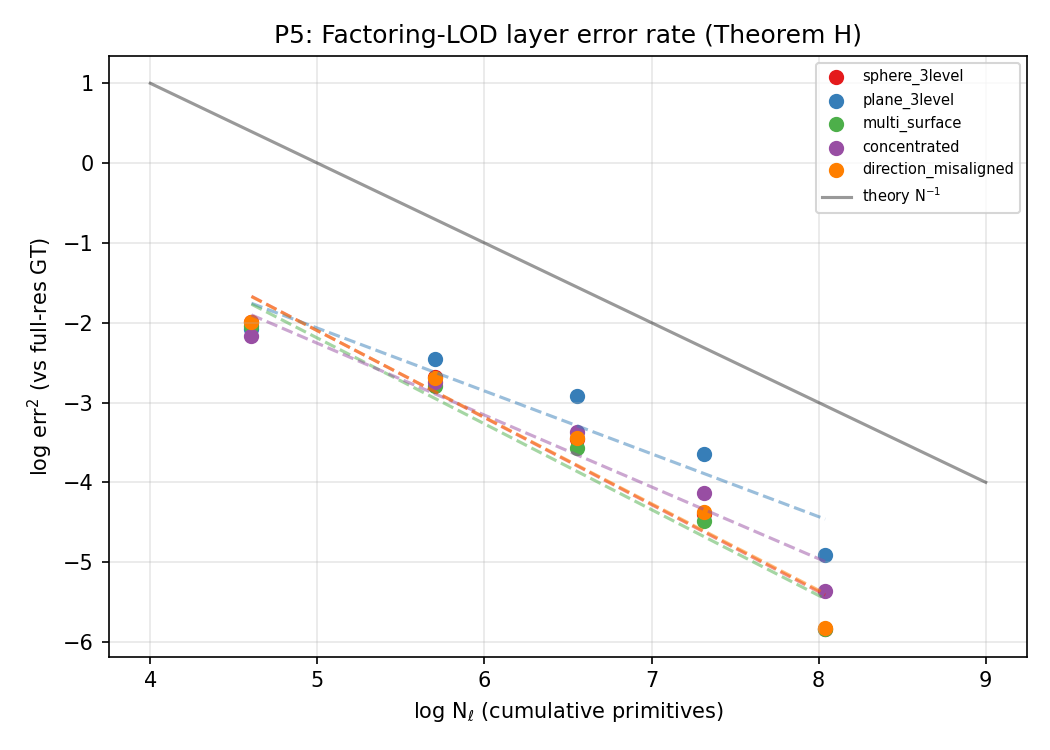}
\caption{Layer error on five synthetic scenes (sphere, plane,
multi-surface, concentrated, direction-misaligned): measured log--log
slopes $-0.90$ to $-1.09$ vs.\ the theoretical $-1$
(err$^2\sim N^{-1}$), 5/5 validation passes.}
\label{fig:slopes}
\end{figure}

\begin{lemma}[Scale truncation]\label{lem:lb0}
For $f\in\mathcal{C}_K$ (piecewise $C^2$) and target precision
$\varepsilon$, atoms of volume $v<v_{\min}:=(\varepsilon/C_6K)^2$ can be
discarded from any $\varepsilon$-approximation with loss at most
$\varepsilon/2$.
\end{lemma}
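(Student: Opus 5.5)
The lemma needs an estimate of how much of $f$ a small atom can carry, then a summation over all discarded atoms. I would fix a target $\varepsilon$-approximation $f_N=\sum_{k}a_k\phi_k$ of $f\in\mathcal{C}_K$, built from the multiscale plate/tube dictionary of the adaptive covering rate theorem. I would split the atoms into the retained set $\{k: |\operatorname{supp}\phi_k|\ge v_{\min}\}$ and the small set $S=\{k:|\operatorname{supp}\phi_k|<v_{\min}\}$. The goal is to show $\norm{\sum_{k\in S}a_k\phi_k}_{L^2}\le\varepsilon/2$. The triangle inequality then leaves a truncated approximation with error at most $\norm{f-f_N}+\varepsilon/2$, which is the stated loss.

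\textbf{Single-atom bound.} I would use the normalization of the dictionary, in which each coefficient is an $L^\infty$-bounded local average of $f$ or of its jump, so that $\abs{a_k}\norm{\phi_k}_\infty\le C K$. This is the step where the cartoon class enters: piecewise $C^2$ with constant $K$ bounds both the values and the jumps. It gives
\[
\norm{a_k\phi_k}_{L^2}\le CK\,\abs{\operatorname{supp}\phi_k}^{1/2}<CK\,v_{\min}^{1/2}=\frac{C}{C_6}\,\varepsilon .
\]
Choosing $C_6\ge 2C$ already handles a single discarded atom.

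\textbf{Summing over $S$ (main obstacle).} A naive sum over $S$ would lose a factor $\#S$. To avoid this I would use the bounded-overlap property of the multiscale dictionary: at each scale the atoms have finite overlap, and by the Katz--Tao packing corollary after Theorem~\ref{thm:collapse} the total volume packed into a convex region is at most $C\Delta_{\max}$ times its volume. This controls the local multiplicity $M(x)$ of the small atoms. Then
\[
\Bigl\|\sum_{k\in S}a_k\phi_k\Bigr\|_{L^2}^2\le \Bigl(\sup_x M(x)\Bigr)\,C^2K^2\,\Bigl|\bigcup_{k\in S}\operatorname{supp}\phi_k\Bigr|.
\]
The remaining task is to bound the union of small supports. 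Small atoms only arise in the surface layer: cube atoms there number $N_0\approx N^{3/4}$ with size bounded below, and small plates lie in a thin neighbourhood of the discontinuity set. I would bound the union by $v_{\min}$ times the number of scales, using the geometric decay of the band budgets $m_j\propto 2^{-j\gamma}$ from Corollary~\ref{cor:alloc} so the sum over scales is a convergent geometric series. Absorbing the multiplicity, the series constant and $C$ into $C_6$ would give the bound $\varepsilon/2$.

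I expect this summation to be the delicate point. If the union bound fails, I would instead discard only atoms whose supports are dominated by retained atoms and re-project onto those, paying the same $CKv^{1/2}$ per scale band.
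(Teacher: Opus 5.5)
Your argument is aimed at the literal claim: delete the small set $S$ from a given $\varepsilon$-approximation and lose at most $\varepsilon/2$. The step that carries all the weight is the bound on $\bigl|\bigcup_{k\in S}\operatorname{supp}\phi_k\bigr|$ by a constant times $v_{\min}$, and that bound is false. Nothing limits how many small atoms an approximation uses, or how much volume they cover together. Take $f=\one_{\{x_3>0\}}$ on $B_1$ and tile the upper half by disjoint dyadic cubes of volume below $v_{\min}$; these are in the multiscale dictionary and are essentially distinct. Their sum reproduces $f$ exactly. Every atom is small, the multiplicity is $1$, and every coefficient is $1$, so your single-atom bound and your multiplicity bound both hold. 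Yet discarding $S$ loses $\norm{f}_{L^2}\asymp 1$, which is far more than $\varepsilon/2$. So no summation argument can establish the deletion claim as you set it up, and neither of your tools helps:
\begin{itemize}
\item The Katz--Tao corollary only caps the total volume of single-scale tubes \emph{contained} in a convex set. It gives no pointwise bound on $M(x)$ for a multiscale family, and no bound at all on the aggregate volume of the small atoms.
\item The decay $m_j\propto 2^{-j\gamma}$ of Corollary~\ref{cor:alloc} describes the optimal allocation, not an arbitrary approximation. Even granted, $\sum_j m_j v_j< v_{\min}\sum_j m_j$ is $v_{\min}$ times the number of small atoms, not times the number of scales.
\end{itemize}
Your claim that small atoms arise only in the surface layer is likewise a property of the construction, not of an arbitrary approximation.

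The single-atom step has a separate gap. The bound $\abs{a_k}\norm{\phi_k}_\infty\le CK$ holds for the local-average construction behind Theorem~\ref{thm:rate}. But $\norm{f-f_N}_{L^2}\le\varepsilon$ does not control the individual coefficients of an arbitrary combination, since large coefficients can cancel across overlapping atoms. This matters because the lemma is used in Step~2 of Theorem~\ref{thm:j} over \emph{all} $N$-term combinations. The only version compatible with the example above, and the one Step~2 actually needs, is a replacement statement: any $\varepsilon$-approximation can be traded for one using only atoms of volume at least $v_{\min}$, with comparable atom count and extra error at most $\varepsilon/2$. Proving that means re-fitting $f$ using the piecewise-$C^2$ structure of $f$ itself, not coefficient bounds on $f_N$. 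That is your fallback, and it is exactly the part left undone.

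For comparison, the paper gives no proof either. It defers to DeVore's nonlinear-width theory and warns that a per-atom argument is insufficient. It names the piecewise-$C^2$ structure, a packing-element decomposition, and essential distinctness as the correct route. Your main line rests on the per-atom estimate and offers no substitute for that decomposition.
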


This is the precise permission our scheduler uses: atoms whose projected
footprint is below the pixel scale can be dropped or collapsed with
bounded loss, the license for the
pixel-footprint threshold $t=c\cdot d/f$ of Section~5.3 of the main paper, and
equivalently the license behind the rate-based stopping criterion of the
training side (once pruning stops improving the rate, the budget is
exhausted). Status: a standard nonlinear-width result~\cite{devore1998};
we claim no new proof. (The naive per-atom Poincar\'e oscillation
argument fails under the $H^1$-only class: if $f$ itself is a fine
atom, truncation is invalid; the correct route uses the
piecewise-$C^2$ structure, the packing-element decomposition, and
essential distinctness.)

\begin{proof}[Proof of Theorem~H (upper bound)]
For $f\in\mathcal{C}_K$ (piecewise $C^2/H^2$):
\textbf{(a) Smooth part.} Tile $B_1$ with cubes of diameter $s$
($N_0=s^{-3}$); the second-order Poincar\'e inequality per cube gives
error $CKN_0^{-2/3}$. \textbf{(b) Surface layer.} Cover the
$s$-neighbourhood of the $C^2$ surface by $N_1=CK/s$ tangential plates;
the misclassification volume gives error $\le C\sqrt{Ks}=CKN_1^{-1/2}$.
\textbf{(c) Budget.} Balance $N_0^{-2/3}$ with $N_1^{-1/2}$:
$N_1=N_0^{4/3}$, i.e.\ $N_0\approx N^{3/4}$, $N_1=N-O(N^{3/4})$; total
error $\le CKN^{-1/2}(\log N)^c$. \textbf{(d)} Direction mismatch by
$\beta$ degrades scales $s<\beta^2$. Our tree realizes the construction
with moment-matched planks; the $(\log N)^c$ factor is absorbed by the
banding (each band's members have comparable scale). \qed
\end{proof}

\begin{theorem}[Rate lower bound]\label{thm:j}
Let $\mathbb{D}$ be an essentially distinct multiscale plank dictionary
in $B_1$. There exist absolute constants $c_0>0$ and $N_0\ge2$ such
that for every $N\ge N_0$ there is $f^*\in\mathcal{C}_K$ with
$\norm{f^*-f_N}_{L^2}\ge c_0\,K\,(N\log_2N)^{-1/2}$ for every $N$-atom
combination $f_N$ from $\mathbb{D}$.
\end{theorem}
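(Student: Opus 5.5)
The plan is a counting (metric-entropy) argument against a packing of hard cartoon fields, in the spirit of the classical lower bounds for nonlinear $N$-term approximation~\cite{devore1998}.

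\textbf{Step 1: the packing.} Fix a scale $s\asymp 1/N$. Inside $B_1$, choose a family of $m\asymp s^{-2}$ disjoint surface patches of side $\sqrt s$. Each patch $j$ carries a bump $\phi_j$ of height $K$ supported on a slab of thickness $s$. For every sign pattern $\eta\in\{0,1\}^m$, set $f_\eta=\sum_j\eta_j\phi_j$. Each $f_\eta$ is a union of $C^2$ pieces with jumps of size $K$, so after rescaling constants it lies in $\mathcal{C}_K$. Distinct patterns are separated in $L^2$: by Varshamov--Gilbert we can pick $2^{cm}$ patterns pairwise differing in at least $m/8$ places, and each such pair satisfies
\[
\norm{f_\eta-f_{\eta'}}_{L^2}\gtrsim K\bigl(m\cdot s^{2}\bigr)^{1/2}.
\]
We tune $m$ and $s$ so that this separation equals $4\epsilon_N$, with $\epsilon_N:=c_0K(N\log_2N)^{-1/2}$.

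\textbf{Step 2: counting the approximants.} We bound the number of distinguishable $N$-atom combinations. Essential distinctness together with the Katz--Tao packing cap (the corollary of Theorem~\ref{thm:collapse}) shows that the relevant part of $\mathbb{D}$, namely atoms at scales $\ge$ a polynomially small cutoff that meet $B_1$, has cardinality $\le N^{C}$. Atoms below that cutoff are discarded at cost $\le\epsilon_N/2$ by the scale truncation Lemma~\ref{lem:lb0}. The support choice then contributes at most $\binom{N^C}{N}\le 2^{CN\log_2N}$ options. Coefficients can be quantized to a grid of mesh $\epsilon_N/N$ on a bounded range, since the coefficient sizes are controlled by the packing cap. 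This contributes another $2^{CN\log_2 N}$ options. The result is a $2\epsilon_N$-net of all $N$-term approximants of size $2^{C'N\log_2N}$.

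\textbf{Step 3: pigeonhole.} If every $f_\eta$ in the packing had an $N$-term approximant within $\epsilon_N$, then each would lie within $3\epsilon_N$ of a net point. Since distinct packing functions are $4\epsilon_N$-apart, no net point can serve two of them, so the map from patterns to net points is injective. This requires $2^{cm}\le 2^{C'N\log_2N}$. Choosing $m$ a large multiple of $N\log_2N$ violates this inequality, so some $f^*=f_\eta$ admits no $N$-term approximant within $\epsilon_N$, which is the claim.

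\textbf{Main obstacle.} The hard part is making the exponents in Steps 1 and 3 consistent. Pushing $m$ up to $N\log_2N$ shrinks the per-patch energy, and the separation must still come out as $K(N\log_2N)^{-1/2}$ rather than something smaller. This forces the thickness and area of the bumps to be tied to $m$ so that $\norm{\phi_j}_{L^2}^2\asymp K^2/m^{2}$ while the curvature stays bounded. I would first try slabs of thickness $s\asymp 1/m$ on a fixed smooth surface partitioned into $m$ pieces of area $1/m$. In this configuration the fixed-surface layer is exactly the regime where plates achieve $N^{-1/2}$ in the upper bound, so the lower bound is sharp up to the logarithm. The other delicate point is verifying the polynomial dictionary-size bound from essential distinctness alone; that is where the Katz--Tao cap is used.
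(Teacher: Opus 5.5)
You follow the paper's argument: a packing of $2^{c(K/\varepsilon)^2}$ perturbations of one jump surface, played against a net of all $N$-term combinations, and closed by the capacity collision at $\varepsilon\asymp K(N\log_2N)^{-1/2}$. The net is built, as in the paper, from the scale-truncation lemma, an atom count from essential distinctness, and coefficient quantization. The outline is sound, but four steps need repair as written.

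\textbf{(i) Step~1 parameters.} Your first parametrization ($m\asymp s^{-2}$ patches of side $\sqrt s$) has jump-set area $\asymp s^{-1}$ and separation $\asymp K$, so those fields are not uniformly in $\mathcal C_K$. Use only the configuration of your last paragraph: $m=A\,N\log_2N$ patches of diameter $m^{-1/2}$ on a fixed surface, displaced by $1/m$. Make the displacement smooth so the curvature stays $O(1)$. This gives separation $\gtrsim Km^{-1/2}$.

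\textbf{(ii) Pigeonhole constants.} With your constants the pigeonhole fails. Two fields each within $3\epsilon_N$ of one net point can be $6\epsilon_N$ apart, which $4\epsilon_N$-separation does not forbid, and the $\epsilon_N/2$ truncation loss is not counted. With a net of radius $\epsilon_N/2$, every field is within $\epsilon_N+\tfrac12\epsilon_N+\tfrac12\epsilon_N=2\epsilon_N$ of its net point, so separation $5\epsilon_N$ gives injectivity. Fix constants in this order: the net-entropy constant $C'$, then $A>C'/c$ ($c$ the Varshamov--Gilbert rate), then $c_0\lesssim A^{-1/2}$, then $N_0$ large enough to absorb the $\log(1/c_0)$ terms.

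\textbf{(iii) Coefficient range.} ``Coefficient sizes are controlled by the packing cap'' is false. Essential distinctness allows exact linear dependencies: a box $V$ and its eight dyadic children $U_k$ are pairwise essentially distinct, yet $\mathbf 1_V=\sum_k\mathbf 1_{U_k}$. You need no coefficient bound. For each support, the truncated approximant lies in a subspace of dimension at most $N$ and has norm at most $\norm{f_\eta}+\tfrac32\epsilon_N\lesssim K$. A ball of radius $R$ in such a subspace has a $\rho$-net of cardinality $(1+2R/\rho)^N$, which is $O(N\log N)$ bits here.

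\textbf{(iv) Dictionary count.} The Katz--Tao cap is a hypothesis on single-scale tube families, not something $\mathbb D$ is assumed to satisfy, and you do not need it. Atoms of volume $\ge v_{\min}$ in $B_1$ have all sides $\gtrsim v_{\min}$. Two planks whose centres, orientations and side lengths agree up to a small fraction of the relevant scales overlap in more than half their volume. So each such parameter cell holds at most one atom, and there are $\lesssim v_{\min}^{-C}$ relevant atoms. This is the paper's ``multiplicity $C_0$ per direction cap $\times$ scale layer'' count.

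Where your packing differs from the paper's, yours is the better-calibrated one. The paper shifts patches of diameter $s$ by $s$ and relies on fragmentation walls to stay in $\mathcal C_K$. It keeps only $(K/\varepsilon)^2$ of the $K/s^2\asymp K^3/\varepsilon^4$ bits this produces; using them all would give a rate $N^{-1/4}$, contradicting its own $N^{-1/2}$ upper bound. Your parabolic bumps (thickness $=$ diameter$^2$) produce exactly the $(K/\varepsilon)^2$ bits the collision needs. They also keep the displaced surface $C^2$ with bounded curvature, the regime in which the tangential-plate upper bound is proved.
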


\begin{proof}[Proof (capacity argument)]
\textbf{Step 1 (packing).} Take a cartoon field whose jump set is a
$C^2$ sheet of area ${\sim}K$ and fragment it into ${\sim}K/s^2$
patches of diameter $s$; for a sign vector $\sigma$ shift patch $Q$ by
$\sigma_Qs$ along its normal. Every $f_\sigma$ remains in
$\mathcal{C}_K$ (the fragmentation walls add area ${\sim}K$), and two
configurations differing on a constant fraction of the patches have
squared distance ${\sim}(K/s^2)\,s^3=Ks$; choosing
$s\asymp\varepsilon^2/K$ (so $\sqrt{Ks}\ge2\varepsilon$) gives
$\log M_{\mathrm{pack}}\ge c_5\,K/s^2\ge c_5(K/\varepsilon)^2$.
\textbf{Step 2 (covering).} By Lemma~\ref{lem:lb0} only atoms of volume
$\ge v_{\min}$ matter; essential distinctness caps the multiplicity at
$C_0$ per (direction cap $\times$ scale layer), so the dictionary has
$\le C_8(K/\varepsilon)^2\log(1/\delta)$ relevant atoms; quantizing the
$N$ coefficients at step $\eta=\varepsilon/(2B^{1/2}\sqrt{N})$ gives an
$\varepsilon$-net of all $N$-term combinations with
$\log M_{\mathrm{cov}}\le C_{11}N\log(NK/\varepsilon)$ (the truncation
makes $\log(1/\delta)$ absorbable).
\textbf{Step 3 (capacity collision).} Covering forces
$M_{\mathrm{cov}}\ge M_{\mathrm{pack}}$, i.e.
$C_{11}N\log(NK/\varepsilon)\ge c_5(K/\varepsilon)^2$; setting
$\varepsilon^*=c_{12}K(N\log_2N)^{-1/2}$ with
$c_{12}=\sqrt{c_5/2C_{11}}$ contradicts this for all $N\ge N_0$
(absolute), so some $f^*$ admits no $N$-term $\varepsilon^*$-approximation.
\textbf{Step 4 (applicability).} Outside the balanced regime the claimed
bound is smaller than the diameter of $\mathcal{C}_K$ and the statement
holds trivially; $c_0$ is the minimum over the two regimes. \qed
\end{proof}

The upper and lower bounds differ only in the power of $\log N$ (dyadic
layer bookkeeping vs.\ coefficient quantization in the capacity
collision); the exact exponent is not pinned down, and no functional
form used in this paper depends on it.

\subsection{Optimal budget allocation}\label{app:theory-budget}
\begin{corollary}[Budget allocation]
Minimizing $C_1KN^{-1/2}+C_2K'L^{-p}$ (geometric $+$ appearance error)
under the parameter budget $M\asymp N(L+1)^2$ yields
$N^*\asymp M^{p/(p+1)}$ and $L^*\asymp M^{1/(2(p+1))}$.
\end{corollary}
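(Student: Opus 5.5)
This is a two-variable constrained minimization, and I will handle it with Lagrange multipliers on a continuous relaxation. I treat $N>0$ and $L>0$ as real variables and replace the budget by $N(L+1)^2=M$. Only the exponents matter, so in the regime $L\ge1$ I may use $(L+1)^2\asymp L^2$ and work with $N L^2=M$ up to absolute constants. Substituting $N=M L^{-2}$ turns the objective into a function of $L$ alone:
\[
E(L)=C_1K M^{-1/2}L+C_2K'L^{-p}.
\]
This is a convex function of $L>0$: a linearly increasing term plus a decreasing power. Its unique stationary point therefore is the minimizer.

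Setting $E'(L)=0$ gives
\[
C_1KM^{-1/2}=pC_2K'L^{-p-1},
\]
so $L^*=\bigl(pC_2K'/(C_1K)\bigr)^{1/(p+1)}M^{1/(2(p+1))}$. Hence $L^*\asymp M^{1/(2(p+1))}$ for fixed regularity constants $K,K'$. Back-substituting gives
\[
N^*=M(L^*)^{-2}\asymp M^{1-1/(p+1)}=M^{p/(p+1)},
\]
which is the claimed split. As a check, at the optimum the two error terms are balanced up to the factor $p$. Both equal $M^{-p/(2(p+1))}$ up to constants, which is consistent with the rate $N^{-1/2}$ of the adaptive covering rate theorem evaluated at $N^*$.

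\textbf{Main obstacle: justifying the relaxation.} Three points need care.
\begin{itemize}
\item \emph{The $L\ge1$ regime.} I must check that $L^*\ge1$ for $M$ large, which holds since $L^*\to\infty$. Otherwise $(L+1)^2$ is not comparable to $L^2$; for small $M$ the appearance term saturates and the statement only needs to hold asymptotically.
\item \emph{Integer rounding.} Rounding $N$ and $L$ to integers changes $E$ by at most a constant factor, because $E$ is polynomial and the optimum sits at $L^*\gg1$.
\item \emph{Meaning of the $\asymp$ conclusion.} The displayed error model is itself only valid up to constants, via the adaptive covering rate theorem and the SH truncation term. So I read the result as saying that any minimizer of the true error lies within constant factors of $(N^*,L^*)$. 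I would show this by a two-sided comparison: if $L\ge\Lambda L^*$ or $L\le L^*/\Lambda$, then $E(L)\ge c(\Lambda)E(L^*)$ with $c(\Lambda)\to\infty$. This follows from convexity and the power-law shape, and it absorbs the constant-factor slack in both error terms.
\end{itemize}

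For the scale-band form stated in Corollary~\ref{cor:alloc} of the main text, $m_j\propto2^{-j\gamma}$, I would repeat the Lagrangian argument over bands. The band error would be taken as $\propto K_js_j m_j^{-1/2}$, with $s_j=2^js_0$, and the constraint as $\sum m_j=M$. I am unsure whether the per-band count or error weight scales as a power of $2^{j}$ with the sign the paper intends; the same argument gives $m_j\propto 2^{\pm j\gamma}$, and the sign depends on that scaling. I would pin this down from the per-band count $N_1=CK/s$ in the proof of the adaptive covering rate theorem before committing to the exponent $\gamma$.
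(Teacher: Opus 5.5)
Your argument is the paper's own: substitute $N=M/L^2$ in the regime $L\gg1$, solve the first-order condition to get $L^*=(pC_2K'/C_1K)^{1/(p+1)}M^{1/(2(p+1))}$ and $N^*=M/(L^*)^2\asymp M^{p/(p+1)}$, and invoke $E''>0$ for uniqueness; your remarks on the relaxation, integer rounding, and constant-factor robustness are extra care that the paper does not spell out. Your closing paragraph on the band law $m_j\propto2^{-j\gamma}$ goes beyond this statement, and the paper asserts that clause without derivation, so nothing there is needed here.
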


Specialized to a hierarchy with scale bands $j=0,\dots,J$ whose members
have characteristic scale $s_j=2^j s_0$, the allocation assigns band budgets
$m_j\propto 2^{-j\gamma}$ for a scene-dependent $\gamma$: coarser bands
(which cover more volume per plank) receive exponentially more budget.
Our scheduler instantiates this allocation \emph{per view} through the
pixel-footprint rule (Section~5.3 of the main paper): the budget of a band is
exactly the set of its members whose projected scale exceeds the
resolvability threshold.
Minimize $E(N,L)=C_1KN^{-1/2}+C_2K'L^{-p}$ under $NL^2\asymp M$
($L\gg1$). Substituting $N=M/L^2$: $E(L)=C_1KM^{-1/2}L+C_2K'L^{-p}$;
the first-order condition gives
$L^*=(pC_2K'/C_1K)^{1/(p+1)}M^{1/(2(p+1))}$ and
$N^*=M/(L^*)^2$, and $E''>0$ confirms the unique minimum.

\subsection{Rendering-side error control}\label{app:theory-render}
Replacing primitives by equivalent planks changes the rendered image; the
end-to-end certificate bounds this change in the audited domain:
per-view image error is controlled by the field error transferred through
the rendering operator term plus a model-operator gap classified by a
saturation-front decomposition. In practice we do not rely on the
(loose) worst-case constants: the tree additionally stores per-cluster
\emph{residual statistics}, geometric spread $\sigma_{\mathrm{geo}}$ and
SH-residual ratio $\rho_{\mathrm{sh}}$, which act as an empirical error
proxy for the switch decision (Section~5.3 of the main paper). We verify on
real scenes that the certificate remains valid (never violated in
measurements) but loose; this is stated as a limitation
(Section~7 of the main paper).

\section{Experimental protocol}\label{app:protocol}

\textbf{Implementation keys (main paper \S5.1--5.2).} The training-side
scan and capped training live in a single script; its configuration keys,
for reproducibility: BCT\_INIT\_FRAC (initial primitive count as
a fraction of the cap $N^*$), BCT\_REFINE\_STOP\_FRAC (refine-stop
fraction, both additionally scaled by the measured redundancy
$\theta_{\mathrm{waste}}$), prune\_rate$=0.25$ (per-step pruning
rate inside $\theta$-flagged regions), and agg\_cap$=256$
(maximum equivalent planks per (leaf, band) bucket).

\textbf{Data sources.} lego/chair/ship: Blender NeRF-Synthetic;
garden: Mip-NeRF 360 (images\_4); MatrixCity Aerial: official
CityGaussianV2 pretrained checkpoint
(MatrixCity\_Aerial.zip, 8,558,267 Gaussian2D primitives,
sh degree 2) and the block-formatted dataset (block\_1 + block\_2,
1,226 frames with images).

\textbf{Camera chain (MatrixCity).} Poses follow the official pipeline:
rot\_mat in the block transforms.json is
camera-to-world; apply the OpenGL$\to$COLMAP axis flip
($c2w[:3,1:3]\times=-1$), then $w2c=\mathrm{inv}(c2w)$. Resolution:
1.2$\times$ downsampling to 1600$\times$900; $f_x=(W/2)/\tan(\mathrm{fov}/2)$
with $\mathrm{fov}=45^\circ$ $\Rightarrow$ $f_x\approx1931.37$,
$c_x=800,c_y=450$. Test views: every 8th frame of the merged sequence
(154 views; 24-view uniform subset for paired comparisons). GPS rendering
uses the vertical fov $2\arctan(H/2f_x)=26.2313^\circ$ (its
{-}{-}fovy is \emph{vertical}). The block transforms.json
contains duplicate frame\_index values (255 duplicates); all
cameras are taken from the single export
(cameras\_test.json) to guarantee GT--pose pairing.

\textbf{2D$\to$3D conversion.} Gaussian2D surfels ($\sigma\in\R^2$) are
converted to 3D planks with thickness
$h_3=0.02\cdot\min(h_1,h_2)$; \emph{every} comparison cell (full, LOD,
GPS, random) uses the same conversion and the same renderer
(gsplat 1.5.3, or GPS with SH zero-padded to degree 3 and scale/opacity
in log/logit domain).

\textbf{Metrics and hardware.} PSNR against GT photographs (LANCZOS
downsampled to the protocol resolution); paired subsets report median
alongside mean. All GPU work an A100 GPU-80GB (GPU 2), CUDA 12.6,
PyTorch 2.x, gsplat 1.5.3; GPS commit 32b1315 with headless EGL patches
for server rendering. Memory: model data = fp32 bytes per primitive
(152 B gsplat sh2; 236 B GPS sh3); process peaks via
nvidia-smi {-}{-}query-compute-apps (GPS full: 2364 MiB) and
torch.cuda.max\_memory\_allocated (LOD cells).

\textbf{Random-subsampling control.} Global random subset per budget
(seed 0, primitive counts matched to the LOD tier averages), exported
through the identical PLY path and rendered by GPS under the identical
camera set.

\textbf{City-scale pruning and real-time protocol
(Block\_all / Block\_A).} Block\_all is a $19{,}932{,}287$-primitive
model of the merged block\_1$+$block\_2 region; Block\_A is a
$7.94$M-primitive block model. $\theta$-pruning removes primitives by
ascending opacity inside $\theta$-flagged (over-covered) regions to a
target count (offline, numpy; identical signal to the training-side pruning
step). Self-PSNR references are rendered with the same CUDA rasterizer
on the A100 (five spherical views at distances 25/12/5 around the scene
center, $\mathrm{fov}_x{=}40^\circ$, $800^2$, white background, fp32
before RGBA quantization); the quality reference scan
(opacity-ranked subsets vs.\ the full model) uses gsplat on both sides.
The C++ rasterizer and gsplat disagree by a roughly constant
$\approx$11.6 dB on identical inputs (rasterization implementation
differences, independent of primitive count), which is why every
quality comparison keeps the same renderer on both sides.
Pruning baselines (LightGaussian, RAP) are applied to the same Block\_A
model at matched primitive counts and evaluated with gsplat (DC-only,
$1000^2$, 5 held-out views) against real photographs. The
contribution-scored export uses the same protocol: renders are produced
by a verbatim port of the baseline rendering script (validated by
reproducing the Vanilla anchor exactly: $13.85$/$0.519$/$0.553$), with
SSIM/LPIPS from the identical evaluation code. The
scoring pass itself is training-view-only and touches no test view.

\textbf{Realtime protocol.} Two measurement setups. (a)
\emph{Interactive city-scale protocol}: RTX 4080 SUPER, native
$1920\times1080$, full SH, headless. The official Block\_all test
trajectory is the CityGaussianV2 evaluation split (207 views;
test.txt of block\_all\_colmap\_eval, camera chain
transform\_matrix~$\cdot\,\mathrm{diag}(1,1,-1)$); orbit probes
use a far view at distance 25. The v3 retrained model ($6.89$M
primitives, $1.7$~GB) is measured on this trajectory (quality on the
reference gsplat path, speed on both rasterizers), while the
capacity-ladder and $800^2$ rows use the earlier v2 tree ($7.07$M) under
the identical protocol. Timing: our CUDA rasterizer uses stage-summed
CUDA timings (60-frame window, first 10 dropped); the gsplat reference
uses torch-synchronized per-frame timing (first 10 dropped); Octree-GS
is re-rendered on the same 207 views from its official checkpoint (its
loader natively downsamples to $1600\times900$) with CUDA-synchronized
per-view averages. (b) \emph{Commodity 8~GB reference}: RTX 3060 Ti,
$800^2$, DC-only compact path, 30-frame headless average, on the
original $19.93$M Block\_all model. The rasterizer is derived from
an existing splatting-rasterizer codebase, while the LOD layer (.factree format,
GPU tier selection, gather kernels) was implemented for this work.

\textbf{Code and data.} The tree builder, schedulers (Python reference
and CUDA), and evaluation scripts will be released upon publication; all
theorems used in the main text are proved in Appendices~\ref{app:sec-theory}
and~\ref{app:theory}.

\section{Seed robustness of Budget-Centered Training}
\label{app:seed-robustness}

All Budget-Centered Training rows of Table~1 of the main paper are
reported as mean$\pm$std over three seeds (s0/s1/s2) under a single,
re-measured protocol (CityGaussian pipeline, MCMC cap at the budget,
SfM init of 100k atoms, 30k iterations). Per-scene values:

\begin{table}[h]
\centering
\small
\begin{tabular}{lrrrr}
\toprule
scene@budget & s0 & s1 & s2 & mean$\pm$std \\
\midrule
counter@536k & 28.93 & 28.98 & 28.96 & 28.95$\pm$0.02 \\
counter@560k & 28.96 & 28.99 & 29.00 & 28.98$\pm$0.02 \\
garden@1650k & 27.12 & 27.12 & 27.21 & 27.15$\pm$0.04 \\
garden@2209k & 27.26 & 27.18 & 27.32 & 27.25$\pm$0.06 \\
kitchen@1M & 31.84 & 31.99 & 32.08 & 31.97$\pm$0.10 \\
kitchen@1131k & 32.28 & 32.10 & 32.16 & 32.18$\pm$0.07 \\
flowers@1330k & 20.56 & 20.61 & 20.58 & 20.58$\pm$0.02 \\
flowers@1525k & 20.62 & 20.59 & 20.60 & 20.60$\pm$0.01 \\
room@529k & 31.89 & 32.00 & 31.94 & 31.94$\pm$0.04 \\
bonsai@601k & 32.13 & 32.10 & 32.14 & 32.12$\pm$0.02 \\
bonsai@640k & 32.18 & 32.11 & 32.18 & 32.15$\pm$0.03 \\
\bottomrule
\end{tabular}
\caption{Three-seed BCT results (test PSNR, dB, val split of the
training pipeline). Seed spread is $\pm$0.01--0.10~dB across all
eleven cells.}
\label{tab:seed3}
\end{table}

Two protocol notes. \textbf{(i) Configuration drift.} Rows measured
under an earlier configuration of the same pipeline (published in the
main table before this re-measurement) read 0.4--1.7~dB lower on
kitchen and room; the current-protocol values above supersede them,
and the main-paper table uses exactly these numbers.
\textbf{(ii) flowers data.} flowers uses the COLMAP-format copy that
the floor study used (173 images, COLMAP split); the released
360\_v2.zip does not contain a flowers directory, so the two data
sources are not interchangeable and we state the source explicitly.

\section{Run-to-run stability of capped MCMC}
\label{app:mcmc-stability}

During the seed-completion campaign we observed occasional training
collapses under capped MCMC (final test PSNR $\approx$8--15~dB instead
of $\approx$30--32~dB, with the loss converged normally): 2 of 12
runs in one batch (garden@1650k s2, kitchen@1131k s1), consistent with
the run-to-run nondeterminism already disclosed for room@529k in the
main table (three-run median $\pm$0.2~dB). Same-seed retries recover
the expected quality (kitchen@1131k s1: 14.67 on retry). Under the
formal pipeline, 0 of 24 runs collapsed. Policy: any cell whose three
seeds disagree by more than 1~dB is re-run once and the median is
reported; both policies mirror the main-paper footnote for room@529k.

\section{Tree hyper-parameter sensitivity}
\label{app:tree-abl}

The factoring tree's two defaults (aggregate cap 256, leaf capacity
4096) were validated by a $3\times3$ sweep on garden
(agg\_cap $\in$ \{128, 256, 512\} $\times$ leaf
$\in$ \{2048, 4096, 8192\}): the full-model sanity PSNR across all
nine combinations stays within 20.43--20.66~dB (spread 0.23~dB), with
the best thickness fraction identical (0.02) in every cell. The
scheduler defaults are therefore insensitive at the model level.

\section{Timing of BCT vs.\ the traditional route}
\label{app:timing}

Measured on room@529k (one A100, neighbor jobs present on the same
device; ratios are same-card): BCT is a single run of
$78\pm7$~min over three seeds. The traditional route for the same
budget -- train at 1M (125~min, measured), prune to 529k, fine-tune --
totals $\approx$175~min, i.e.\ $2.2\times$ BCT's wall-clock, while
reaching lower quality ($\approx$30.5 vs.\ 31.9~dB under the
re-measured protocol). The main-paper phrase ``one 78-minute run
instead of a 175-minute train--prune--fine-tune pipeline'' refers to
this same-card measurement.

\section{Scheduler enhancements and export}
\label{app:sched-enhance}

This section collects the deployment-side enhancements of the
pixel-footprint scheduler (Section~5.3 of the main paper): adaptive
selection, the renderer-agnostic export path, and render-aware
scoring.

\textbf{Engineering notes (moved from the main text).} Construction of
the tree on $8.56$M primitives takes $1.5$~s with a $19.4$~GiB peak on
an A100 GPU. The compact binary stores raw
primitives in Morton order, per-band equivalent planks, and SH first
moments of all degrees; per frame, the coverage-preserving prefix cut is
traversed on GPU and the selected prefix is gathered by a per-primitive
kernel directly into the rasterizer buffers (fused with SH evaluation
and opacity compensation), with no CPU-side collection or
host-to-device upload: $0.8$~ms preprocess including selection$+$gather
on the $7.07$M tree at $1920\times1080$. The GPU selection reproduces
the offline Python reference exactly (identical per-view tier counts on
5/5 views at $c{=}1$ and 10/10 at $c{=}8$). The DC-only compact path
stores $3$ floats per primitive and selects $13.1$M primitives at
$3.9$~GB on 8~GB GPUs. On the official 207-view trajectory at capacity
$K{=}2$M the per-frame selected count changes by a median of $1.1\%$
between consecutive views; the large-count outliers coincide with
view-direction reversals, where the leaves entering or leaving the
selection are back-facing for the new view and are never rasterized.

The serial experiment of Section~6.2 of the main paper (room, 34 held-out views, LOD
render protocol) is summarized below; per-view data and the script are
archived with the paper artifacts.

\begin{table*}[t]
\centering
\small
\begin{tabular}{llrrr}
\toprule
root model & tier & gt-PSNR (dB) & rendered primitives \\
\midrule
\multirow{6}{*}{\shortstack[l]{full model\\(1.66M primitives)}}
 & full render & 34.94 & 1{,}657{,}466 \\
 & LOD $c{=}1$ & 34.89 & 1{,}654{,}484 \\
 & LOD $c{=}2$ & 33.29 & 1{,}542{,}467 \\
 & LOD $c{=}4$ & 30.93 & 1{,}354{,}716 \\
 & LOD $c{=}8$ & 26.92 & 1{,}003{,}828 \\
 & LOD $c{=}16$ & 22.83 & 524{,}330 \\
\midrule
\multirow{6}{*}{\shortstack[l]{budget-centered model\\(529k primitives)}}
 & full render & 34.35 & 529{,}000 \\
 & LOD $c{=}1$ & 34.19 & 517{,}599 \\
 & LOD $c{=}2$ & 33.11 & 455{,}635 \\
 & LOD $c{=}4$ & 30.04 & 350{,}328 \\
 & LOD $c{=}8$ & 25.67 & 209{,}254 \\
\bottomrule
\end{tabular}
\caption{Serial composition on room: both roots deployed through the
same tree and scheduler at matched rendered budgets. The
budget-centered root passes through the tree essentially unchanged
($-0.16$~dB at its own budget), while the full root collapses by
$12.1$~dB at the same count.}
\label{tab:serial-room}
\end{table*}

For adaptive (scene-content-aware) selection we additionally store per
cluster two residuals: the geometric spread
$\sigma_{\mathrm{geo}}=\sqrt{\operatorname{tr}(T^*)/W}$ (world units,
projected to pixels by $\sigma\cdot f/d$) and the SH residual ratio
$\rho_{\mathrm{sh}}$ (cluster-internal SH variance vs.\ global RMS). These
proxy the switch error analyzed in the error-control section above; on
garden they improve the fixed-threshold LOD by $+0.49$~dB at a
comparable primitive count (1.05$\times$; Table~2 of the main paper).

The scheduler emits a standard primitive stream
(means, quaternions, scales, opacities, SH), consumable by any splatting
rasterizer (gsplat, CityGaussian, etc.). For cross-renderer evaluation we
export each tier selection to standard 3DGS PLY files (SH zero-padded to
degree 3, scale/opacity in log/logit domain) and render them with the GPS
renderer~\cite{rijsdijk2026gps} (Section~5.3 of the main paper).

A purely geometric redundancy signal
has a blind spot: a primitive can be geometrically redundant yet
visually load-bearing (it occludes its neighbors, or carries a colour no
neighbor reproduces), or geometrically isolated yet invisible (fully
occluded in every training view). Aggressive pruning exposes exactly
this gap: a geometry-only score removes primitives that a
contribution-aware score would keep (Table~S7); the
underlying occlusion term is an open item in the error control of
the supplementary material.

For pure primitive-count budgets,
the same export path supports a render-aware ranking. We score every
primitive with the LightGaussian
significance~\cite{fan2024light} (the accumulated per-pixel
$T\alpha$ over all training views, where $T$ is the front-to-back
transmittance) via a one-pass forward scan over the training cameras
($234$~s for 4565 views at $7.94$M primitives on an RTX 4080 SUPER). Our
CUDA accumulation replicates the reference rasterizer's per-pixel
semantics exactly (validated by the telescoping identity
$\sum_i s_i=\sum_p(1-T_{\mathrm{final}})$, matched to $10^{-5}$ relative
on random scenes). Any budget point is then a top-$N$ re-export of the
same trained model. On Block\_A, $24.3\%$ of the trained primitives
score exactly zero, never visible from any training view, and the
resulting budget axis stays at Vanilla quality down to $25\%$ of the
primitives (Section~6.5 of the main paper).

\section{Extended experimental results}
\label{app:ext-results}

The tables and figures in this section extend the experimental report of Section~6 of the main paper: outdoor BCT runs, the lego ceiling study, multi-scene generalization, the training-coupled-LOD comparison, Block\_A matched-budget pruning, the published-frontier and universal-floor figures, and the remaining perception/orbit/visual comparisons.

\begin{figure}[h]
\centering
\includegraphics[width=0.85\linewidth]{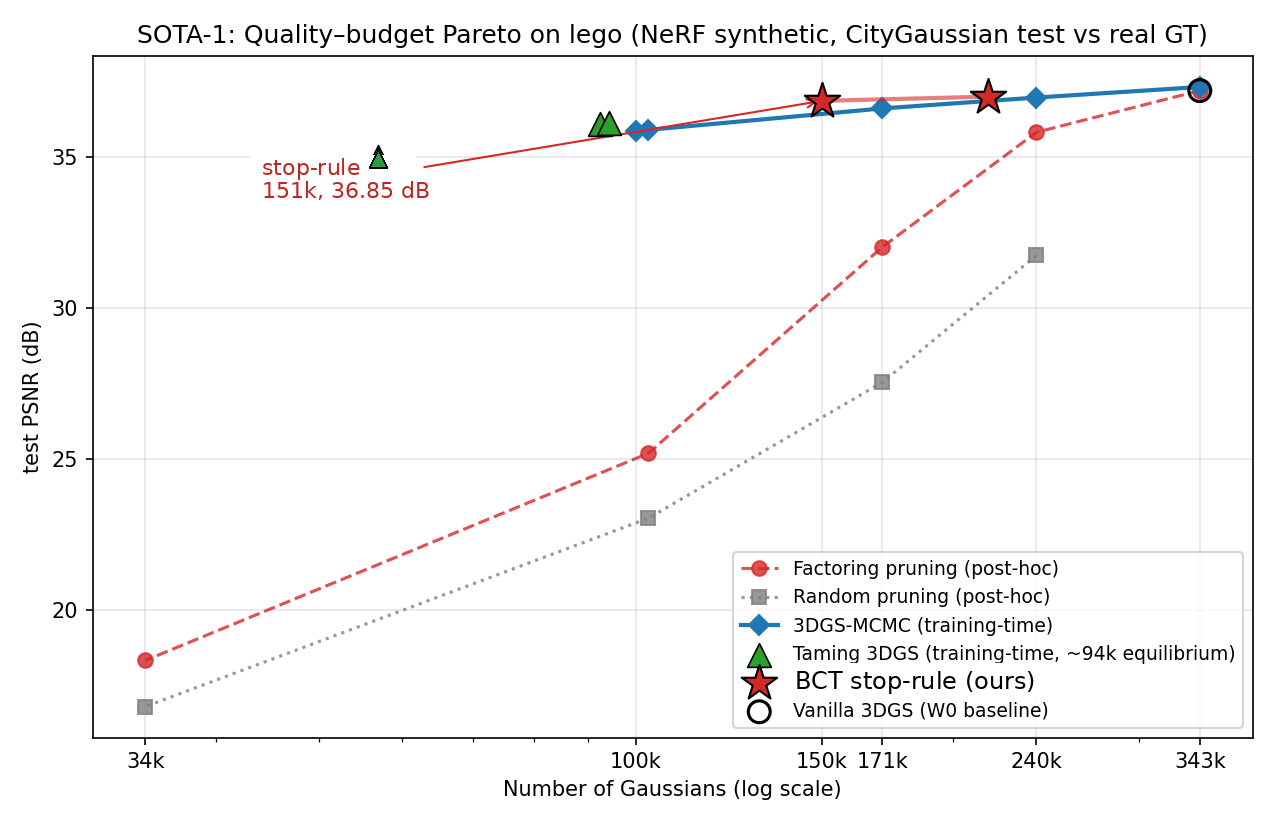}
\caption{lego primitive-budget Pareto: the BCT stop rule dominates the
MCMC budget curve; Vanilla is a single point at 339k (the W0 baseline
of our training-dynamics study).}
\label{fig:pareto_lego}
\end{figure}

\begin{figure*}[t]
\centering
\includegraphics[width=0.99\textwidth]{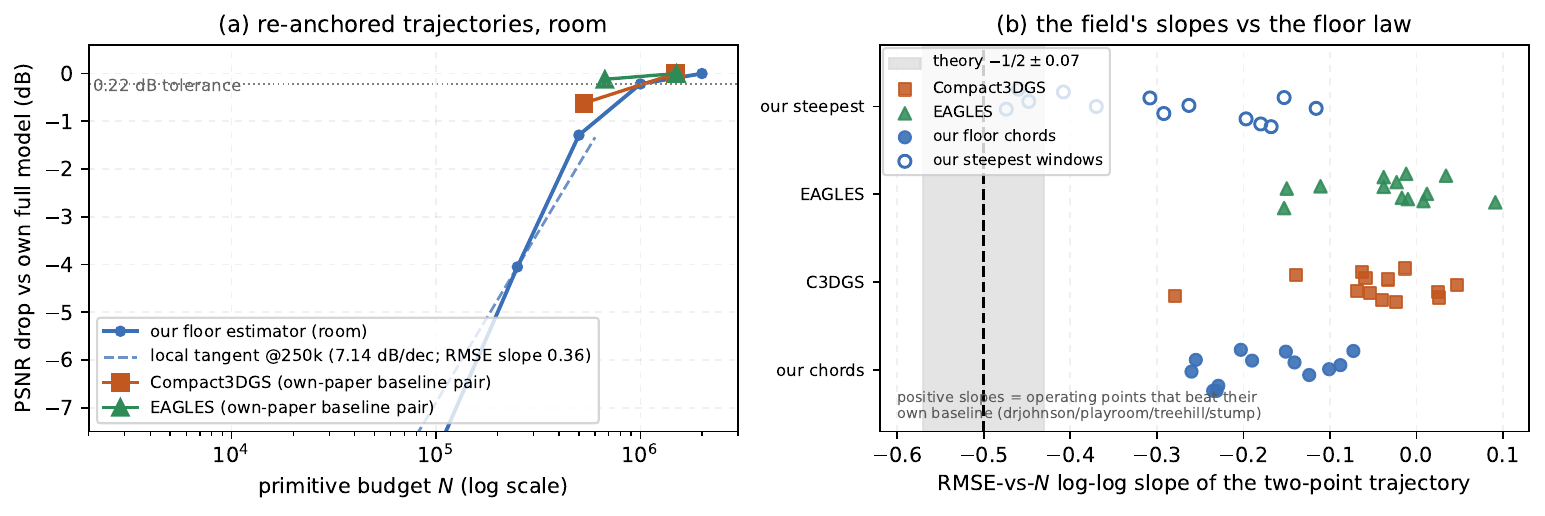}
\caption{The published frontier vs.\ the floor: (a) room, re-anchored to
each paper's own full model, against the published same-paper pairs; (b)
the RMSE-vs-$N$ slope of every surveyed two-point trajectory: the
surveyed compressors' slopes cluster near zero (mean $-0.04$), an order of magnitude
away from the floor law.}
\label{fig:frontier}
\end{figure*}

\begin{figure*}[t]
\centering
\includegraphics[width=0.9\textwidth]{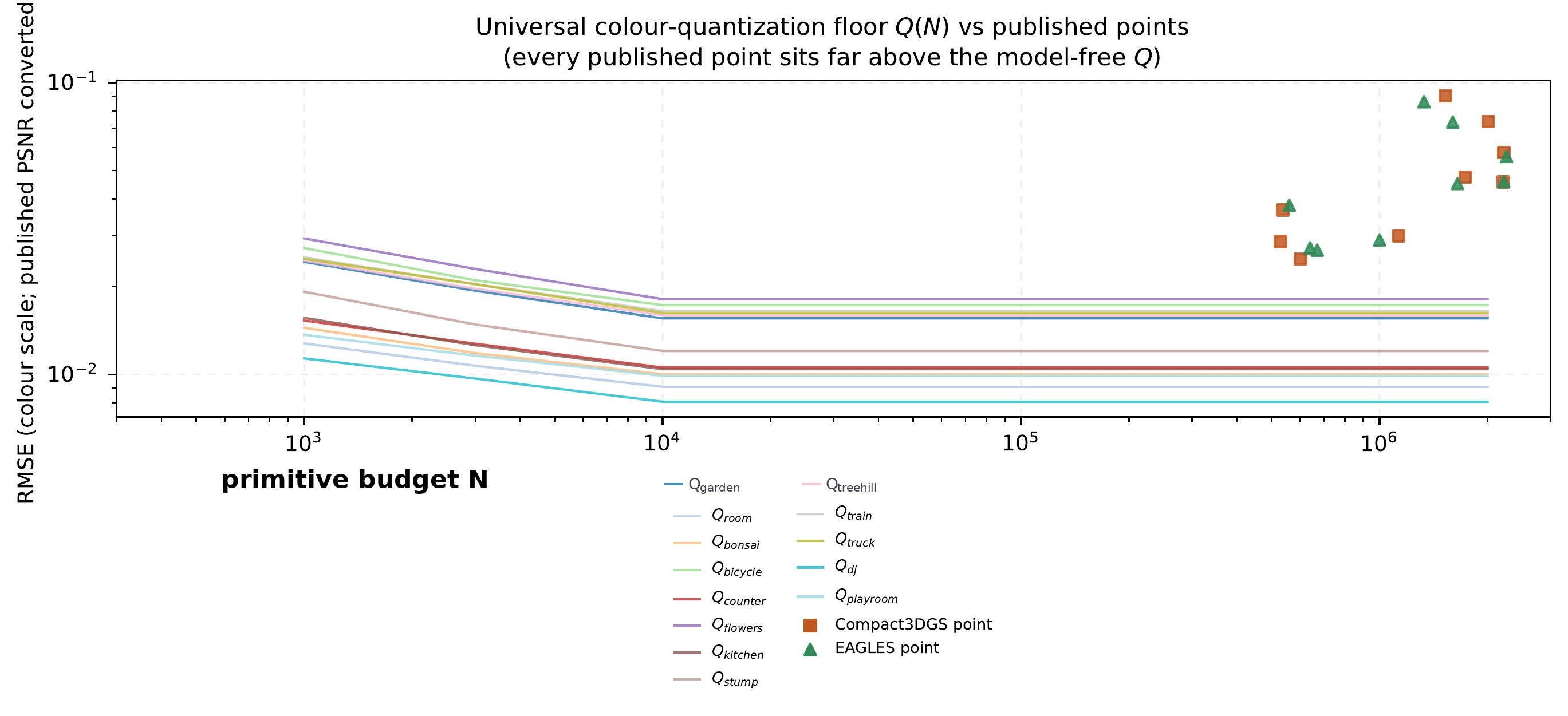}
\caption{The universal (model-free) floor: image-only colour
quantization vs.\ every surveyed operating point; all lie above the
band.}
\label{fig:universal}
\end{figure*}

\begin{figure*}[t]
\centering
\includegraphics[width=\linewidth]{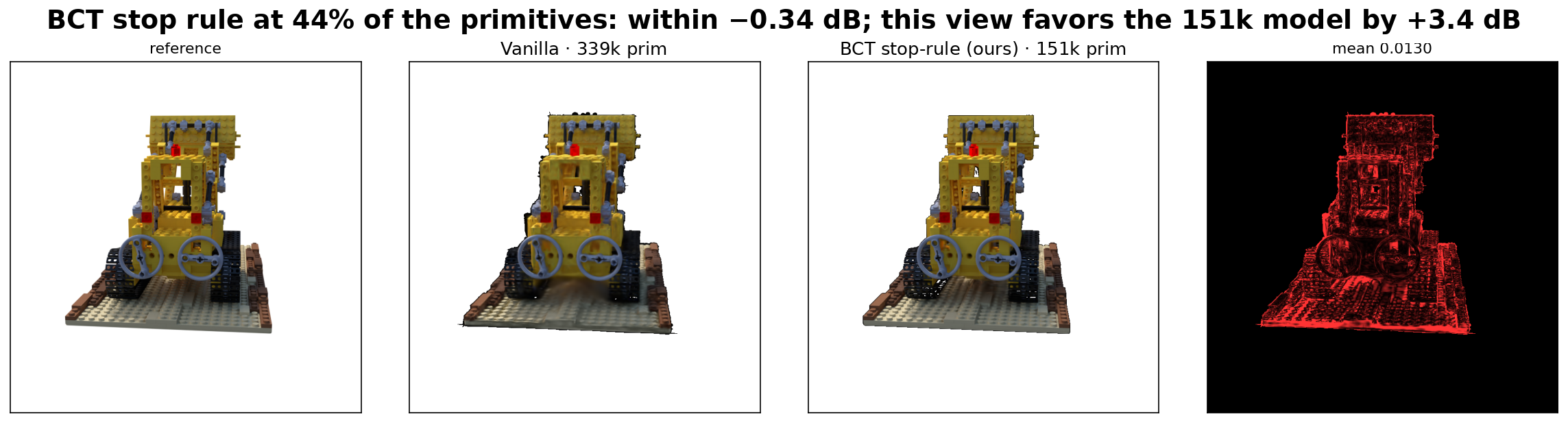}
\caption{lego perception check: reference, Vanilla, stop-rule 151k,
and $|\mathrm{diff}|\times10$ between the two renders. At this single
held-out view the 151k model is $+3.4$~dB over Vanilla (the full
test-set average is $-0.34$~dB, Table~\ref{tab:lego}); residuals
concentrate on structure edges.}
\label{fig:lego_perceptual}
\end{figure*}

\begin{figure*}[t]
\centering
\includegraphics[width=\linewidth]{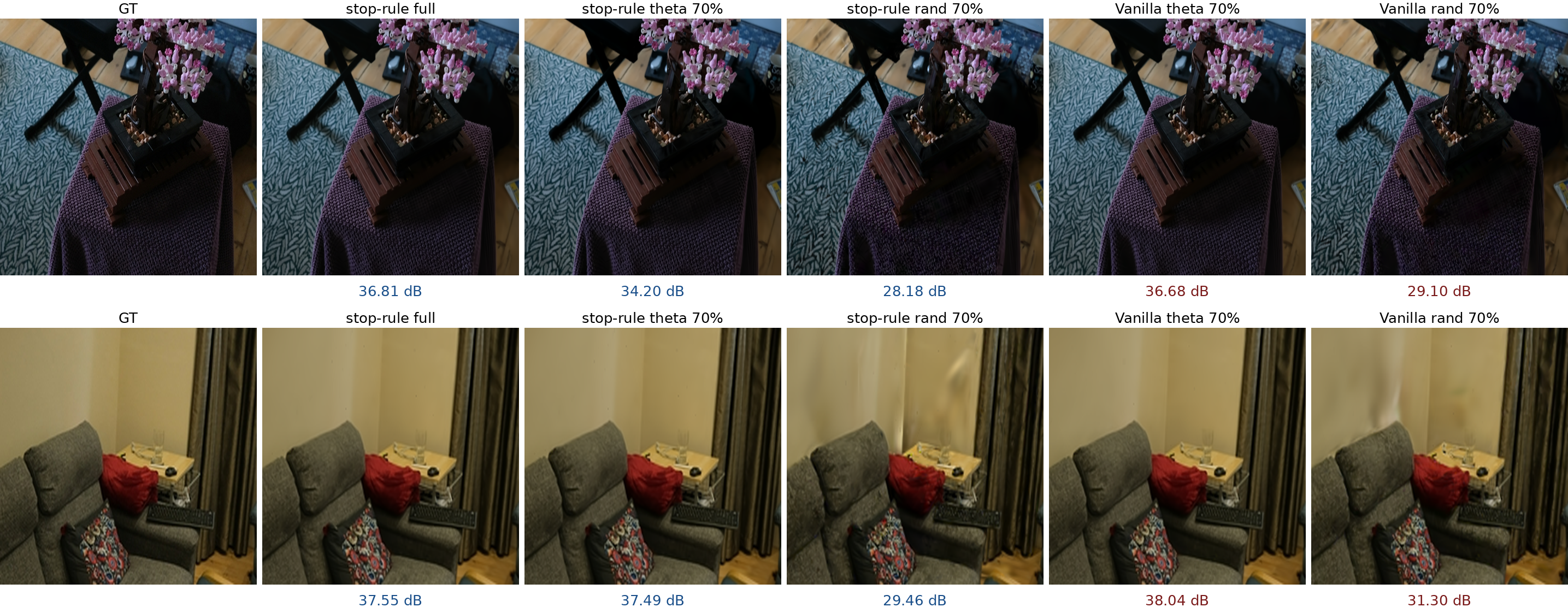}
\caption{Mip-NeRF 360 indoor scenes, held-out view, 70\% budget
(gsplat re-render of the same checkpoints; per-view PSNR beneath each
tile). Top: bonsai; bottom: room. Reference photo; the stop-rule
model in full, pruned to 70\% by $\theta$ guidance vs.\ random; and the
Vanilla model pruned the same two ways. The visible gap (foliage on
bonsai, sofa on room) is consistent with the test-set margins over
random at matched budget ($+3.0$~dB on bonsai, $+6.1$~dB on room; the
per-view margins on this view are larger).}
\label{fig:theta_prune}
\end{figure*}

\begin{figure*}[t]
\centering
\includegraphics[width=\linewidth]{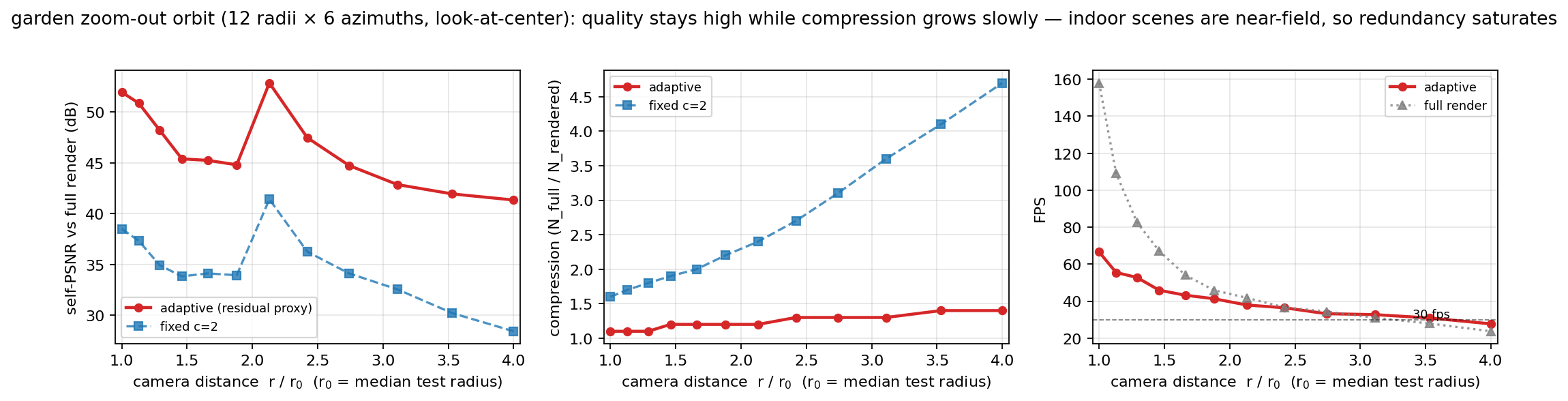}
\caption{garden zoom-out orbit ($12$ radii $\times$ $6$ azimuths,
look-at-center; self-PSNR vs.\ the full render at the same camera).
Adaptive selection stays $41$--$53$~dB while its compression saturates at
$\sim$1.4$\times$; the fixed $c{=}2$ schedule compresses more (up to
4.7$\times$) but loses fidelity. Indoor scenes are near-field: sub-pixel
redundancy exhausts quickly, matching the theory and the table above;
the right panel shows the resulting frame rate.}
\label{fig:garden_orbit}
\end{figure*}

\begin{figure*}[t]
\centering
\includegraphics[width=\linewidth]{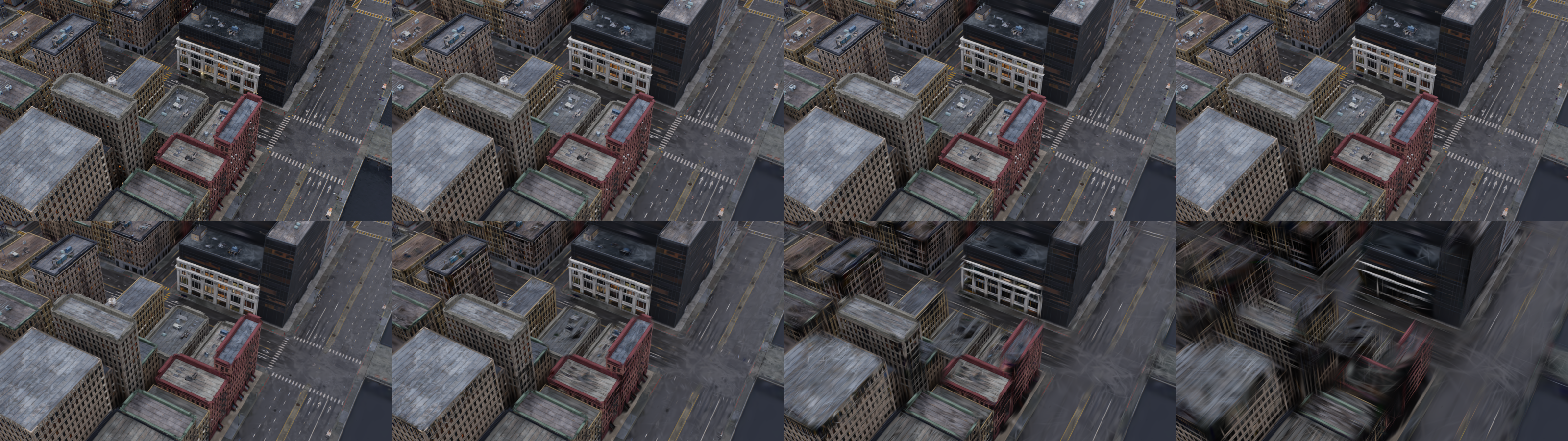}
\caption{MatrixCity Aerial, two held-out views (top/bottom rows), left
to right: reference photo, full render,
LOD $c{=}16$ (23\% of primitives, $-4.7$ dB), and LOD $c{=}32$ (8.9\%,
$-7.6$ dB); the budget parameter removes detail gradually instead of
dropping primitives blindly.}
\label{fig:mc_visual}
\end{figure*}

\begin{figure*}[t]
\centering
\includegraphics[width=\linewidth]{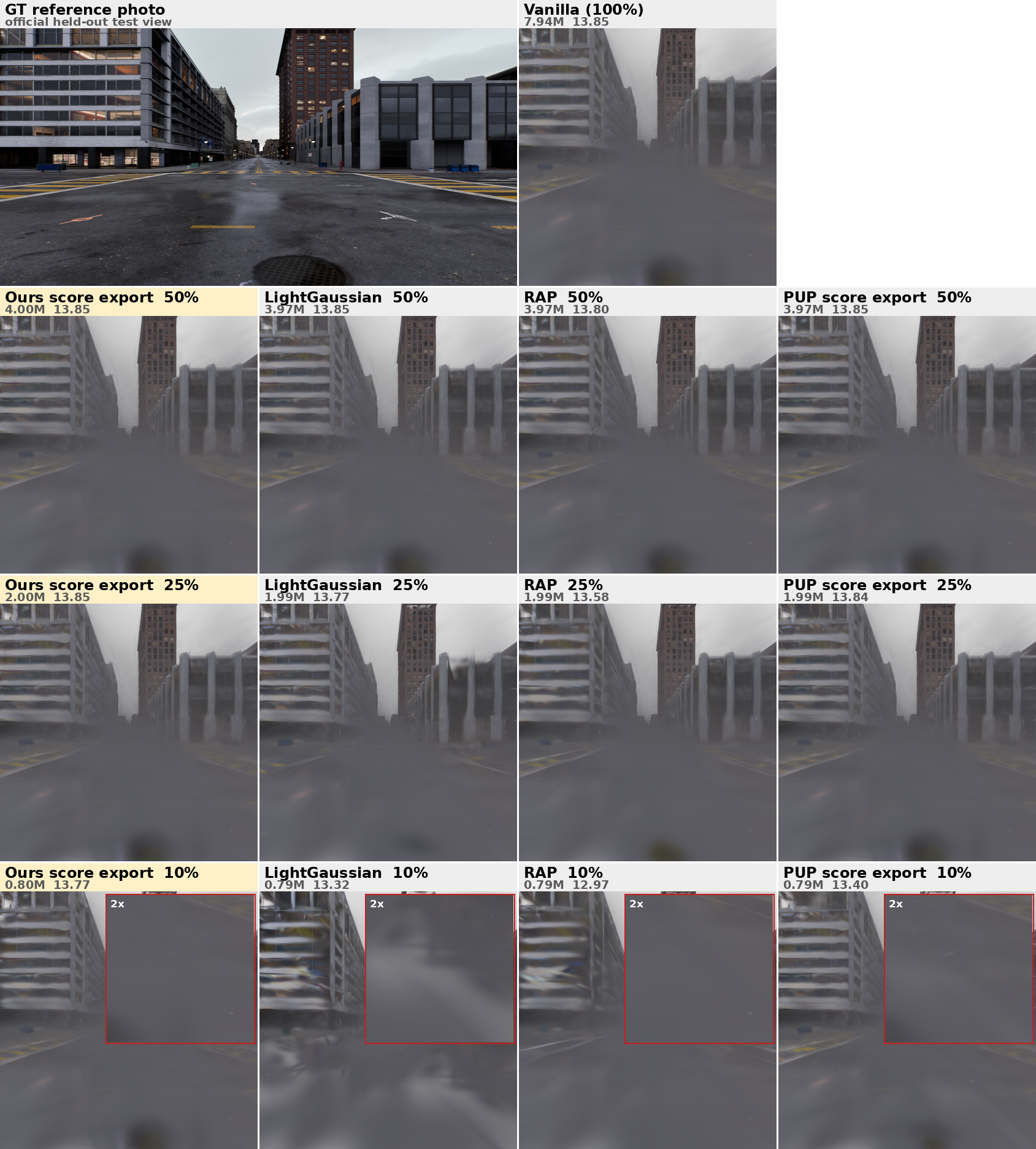}
\caption{Block\_A at matched budgets (official held-out view, PSNR
and primitives below each title): the GT reference and the Vanilla
anchor on top; below, the three budget rows of
Table~\ref{tab:blocka}: our score export (highlighted) against
LightGaussian, RAP, and the PUP score export (our faithful
diagonal-Fisher reimplementation) at 50\%/25\%/10\% of the primitives.
Note the score export's flat curve: no visible difference from Vanilla
down to 25\%, and at 10\% it keeps street markings legible where
LightGaussian blurs and RAP collapses.}
\label{fig:pl4_visual}
\end{figure*}

\begin{figure*}[t]
\centering
\includegraphics[width=\linewidth]{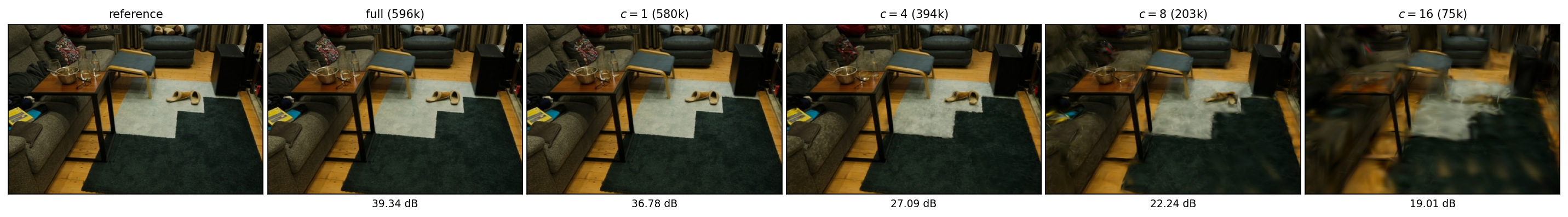}
\caption{The same factoring-tree LOD on an indoor scene (room, Mip-NeRF
360): one held-out view at $c\in\{1,4,8,16\}$ plus the reference
(numbers beneath the tiles: PSNR against the full render on this crop).
Indoor sub-pixel redundancy saturates quickly, so the ladder compresses
less than the city-scale case and visible degradation sets in earlier
(at $c{=}8$ here), consistent with the near-field saturation of
Fig.~\ref{fig:garden_orbit}.}
\label{fig:room_lod_ladder}
\end{figure*}

\begin{figure*}[t]
\centering
\includegraphics[width=\linewidth]{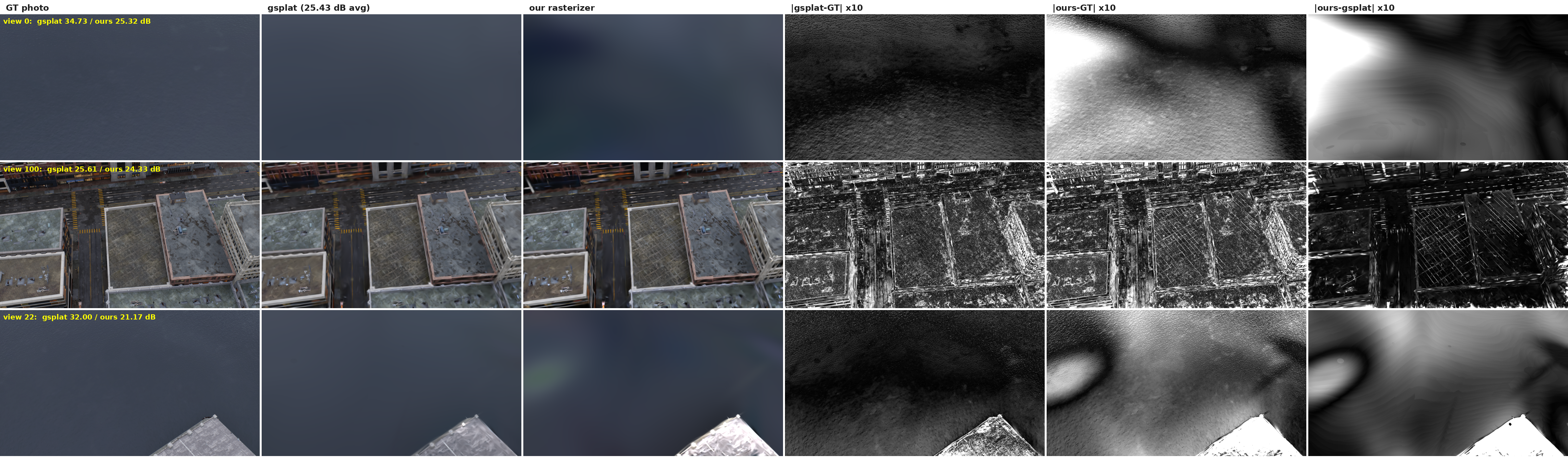}
\caption{Where the cross-renderer gap lives (official 207-view
protocol, v3 model, native $1920\times1080$, full SH). Three views---a
clean aerial view (top), a typical street view (middle), and the
worst-case haze view (bottom): GT photo, the gsplat reference, our CUDA
rasterizer, and absolute-difference maps ($\times 10$). The
ours-vs-gsplat difference concentrates in the low-altitude haze band
(giant low-opacity splats), while building geometry and upper regions
agree; the same concentration appears in both renderers' GT residuals.
This is why every model-level number in this paper is measured with each
side's reference rasterizer, and our rasterizer is reported on the speed
axis (Table~4 of the main paper).}
\label{fig:renderer_gap}
\end{figure*}

\begin{table*}[t]
\centering
\small
\begin{tabular}{lllrrr}
\toprule
scene & budget & schedule & PSNR~dB & ours-prune & published \\
\midrule
bicycle & 2222k & orig & \textbf{23.78} & $23.76$ & 24.77 (C3DGS) \\
bicycle & 2260k & orig & 23.72 & $23.71$ & 25.04 (EAGLES) \\
\midrule
stump & 1732k & orig & 23.75 & $23.81$ & 26.46 (C3DGS) \\
stump & 2220k & orig & \textbf{23.86} & $23.86$ & 26.78 (EAGLES) \\
\midrule
treehill & 2006k & orig & 21.85 & $21.99$ & 22.65 (C3DGS) \\
treehill & 1600k & orig & 21.80 & $21.90$ & 22.69 (EAGLES) \\
\bottomrule
\end{tabular}
\caption{Outdoor extension of Table~1 of the main paper. BCT tracks the
same-ruler prune scan within $\pm0.15$~dB on every point (training at
the budget reproduces the floor family), while the published points sit
above it by $0.8$--$2.9$~dB (stump: $2.7$--$2.9$~dB). The offset is
\emph{not} better use of the budget: it is the redundancy quantified in
Section~6.1 of the main paper (frontier slope $\approx-0.04$).}
\label{tab:bct-ext}
\end{table*}

\begin{table*}[t]
\centering
\small
\begin{tabular}{lrrrrl}
\toprule
Method & \#primitives & PSNR$\uparrow$ & SSIM & LPIPS & vs.\ Vanilla \\
\midrule
Vanilla 3DGS & $\sim$339k & 37.19$\pm$0.02 & 0.9860 & 0.0090 & --- \\
MCMC@171k~\cite{kheradmand2024mcmc} & 171k & 36.62$\pm$0.03 & 0.9853 & 0.0117 & $-0.57$ \\
Taming@94k~\cite{mallick2024taming} & $\sim$92k & 36.11 & --- & --- & $-1.08$ \\
EDC~\cite{deng2024edc} & $\sim$334k & 37.18$\pm$0.01 & --- & --- & $-0.01$ \\
H3DGS~\cite{kerbl2024h3dgs} & $\sim$436k & 37.39$\pm$0.01 & --- & --- & $+0.20$ \\
\textbf{BCT stop-rule (ours)} & $\sim$151k & \textbf{36.85$\pm$0.01} & 0.9845 & \textbf{0.0107} & $-0.34$ \\
\bottomrule
\end{tabular}
\caption{lego, 3 seeds for PSNR; SSIM/LPIPS at matched near-$N$ runs
(per-report protocol). The BCT stop rule (Section~5.1 of the main paper) reaches within $0.34$~dB of Vanilla
quality at 44\% of the primitives and beats MCMC at 12\% fewer primitives;
Taming needs 39\% fewer primitives but loses 0.74~dB. Density-controller
baselines (EDC, H3DGS, our re-runs under the identical protocol) do not
expose a primitive budget: EDC converges to Vanilla-like counts, H3DGS
trades \emph{more} primitives ($+29\%$) for $+0.2$~dB; our rule is the
only method on the budget axis.}
\label{tab:lego}
\end{table*}

\begin{table*}[t]
\centering
\small
\begin{tabular}{lrrrl}
\toprule
Scene & Method & \#primitives & PSNR & \% of vanilla budget \\
\midrule
chair & Vanilla & 491k & 36.39 & --- \\
 & \textbf{stop-rule} & \textbf{177k} & \textbf{36.06} & \textbf{36\%} \\
ship & Vanilla & 259k & 32.05 & --- \\
 & \textbf{stop-rule} & \textbf{124k} & \textbf{31.83} & \textbf{48\%} \\
bonsai & Vanilla & 1258k & 32.64 & --- \\
 & \textbf{stop-rule} & \textbf{825k} & \textbf{32.46$\pm$0.09} & \textbf{66\%} \\
room & Vanilla & 1108k & 32.66 & --- \\
 & \textbf{stop-rule} & \textbf{600k} & \textbf{32.61$\pm$0.04} & \textbf{54\%} \\
\bottomrule
\end{tabular}
\caption{Multi-scene generalization: $-0.06$ to $-0.33$~dB at 36--66\% of
the vanilla budget (34--64\% saved). Synthetic scenes (chair, ship) and
Mip-NeRF 360 indoor scenes (bonsai, room; held-out protocol; 3 seeds for
the stop rule on bonsai/room, mean$\pm$std, primitive counts are seed means).}
\label{tab:multiscene}
\end{table*}

\begin{table*}[t]
\centering
\small
\begin{tabular}{lllll}
\toprule
Scene & original & 50\% score / opacity & 25\% score / opacity \\
\midrule
lego & 36.85 & \textbf{29.68} / 25.87 & \textbf{22.96} / 20.26 \\
bonsai & 32.37 & \textbf{31.82} / 26.48 & \textbf{26.35} / 21.20 \\
room & 32.56 & \textbf{32.42} / 29.71 & \textbf{29.48} / 21.62 \\
\bottomrule
\end{tabular}
\caption{Contribution-scored vs.\ opacity-ordered post-hoc pruning
(PSNR, dB, matched primitive counts at 50\%/25\% of the model;
CityGaussian test protocol, held-out views). The score export is
$+2.7$ to $+7.9$~dB better at every scene and budget.}
\label{tab:prune_multiscene}
\end{table*}

\begin{table*}[t]
\centering
\small
\resizebox{\linewidth}{!}{%
\begin{tabular}{lllllll}
\toprule
system & coupling & model & budget axis & quality @ budget & fps @ $1080$p & VRAM peak \\
\midrule
Octree-GS & training & own (8.01M anchors) & level (5) & $27.23$@$145$k vis.\ / $19.47$@$29$k vis. & $22.9\to5.5$ & 30.4 GiB \\
CG V2 LOD & training & Aerial (8.56M) & block level & n/a post-hoc$^\dagger$ & n/a & n/a \\
\textbf{Factoring LOD} & post-hoc & Aerial (8.56M) & $c$ (continuous) & $28.4\to17.5$ dB & $49.5$--$76.8$ (ladder) & \textbf{4.4 GiB} \\
\bottomrule
\end{tabular}}
\caption{Training-coupled LOD vs.\ our post-hoc layer on MatrixCity.
Octree-GS: our re-render of the official Block\_all checkpoint under a
136-view held-out split (levels 0--4; full model $=27.23$ dB at $145$k
visible primitives; authors report $26.99$ dB under their own split);
per-level CUDA-synchronized averages
$22.95$/$11.19$/$6.37$/$5.50$/$5.47$ fps at levels 0--4, same machine
(RTX 4080 SUPER), native $1920\times1080$. Its ``visible
primitives'' are anchors: the checkpoint stores $8.01$M of them,
each decoding to $10$ neural Gaussians (${\sim}80$M decodable,
$2.5$~GB), while per-view LOD activates only ${\sim}94$k anchors
(${\sim}0.9$M Gaussians); the quality column spans the ends of its
level curve ($27.23$ dB at level 4). Our row: the official-protocol 207-view
trajectory at the full $6.89$M set, $71.0$~fps mean (p5 $55.2$) on
our CUDA rasterizer ($54.3$~fps on the reference gsplat path),
full SH;
the capacity ladder spans $49.5$--$76.8$~fps at $1080p$
(Table~4 of the main paper). $^\dagger$CityGaussianV2's runtime LOD requires
partitioned training artifacts (partitions.pt plus one trained
checkpoint per partition block) that the officially released single
Aerial checkpoint does not include
(Section~6.4 of the main paper); our layer consumes that same checkpoint
directly and spans the measured $c$ curve of Table~3 of the main paper.
VRAM peak (nvidia-smi sampled at 1600$\times$900, same
Block\_all views, each side's official path, RTX 4080 SUPER 32\,GB):
Octree-GS transient peak 30.4\,GiB (steady-state $\approx$15.9\,GiB) vs.\
ours 4.4\,GiB, $6.9\times$ lower at peak. Ours is the gsplat reference
path with full SH.}
\label{tab:e1}
\end{table*}

\begin{table*}[t]
\centering
\small
\begin{tabular}{llrrrr}
\toprule
budget & method & \#primitives & PSNR & SSIM & LPIPS$\downarrow$ \\
\midrule
100\% & Vanilla & 7.94M & 13.85 & 0.519 & 0.553 \\
\midrule
50\% & LightGaussian & 3.97M & \textbf{13.85} & \textbf{0.520} & \textbf{0.552} \\
50\% & RAP & 3.97M & 13.80 & 0.519 & 0.558 \\
50\% & PUP score export & 3.97M & 13.85 & 0.519 & 0.553 \\
50\% & score export (ours) & 4.00M & \textbf{13.85} & 0.519 & 0.553 \\
50\% & $\theta$+opacity (ours) & 4.00M & 13.44 & 0.513 & 0.530 \\
\midrule
25\% & LightGaussian & 1.99M & 13.77 & 0.518 & \textbf{0.550} \\
25\% & RAP & 1.99M & 13.58 & 0.514 & 0.573 \\
25\% & PUP score export & 1.99M & 13.84 & 0.519 & 0.553 \\
25\% & score export (ours) & 2.00M & \textbf{13.85} & \textbf{0.519} & 0.553 \\
25\% & $\theta$+opacity (ours) & 2.00M & 11.14 & 0.470 & \textbf{0.518} \\
\midrule
10\% & RAP & 0.79M & 12.97 & 0.503 & 0.601 \\
10\% & PUP score export & 0.79M & 13.40 & 0.509 & 0.557 \\
10\% & LightGaussian & 0.79M & 13.32 & 0.511 & 0.563 \\
10\% & score export (ours) & 0.80M & \textbf{13.77} & \textbf{0.515} & 0.557 \\
10\% & $\theta$+opacity (ours) & 0.80M & 8.63 & 0.405 & \textbf{0.547} \\
\bottomrule
\end{tabular}
\caption{Block\_A pruning at matched budgets (gsplat, DC-only,
$1000^2$, 5 held-out views vs.\ real photos). Absolute values are low
because evaluation is DC-only against photographs; only relative
comparisons are meaningful. The score export (Section~5.3 of the main paper)
is Vanilla-identical at $50\%$/$25\%$ and dominates PSNR/SSIM at
$10\%$; deeper tiers: $13.49$ ($5\%$), $12.97$ ($2.5\%$), $12.53$
($1.3\%$). $\theta$+opacity leads LPIPS at extreme budgets. PUP row: our
single-pass diagonal-Fisher reimplementation of the PUP~3D-GS score
(their full pipeline adds multi-round finetuning).}
\label{tab:blocka}
\end{table*}

\begin{figure*}[t]
\centering
\includegraphics[width=\linewidth]{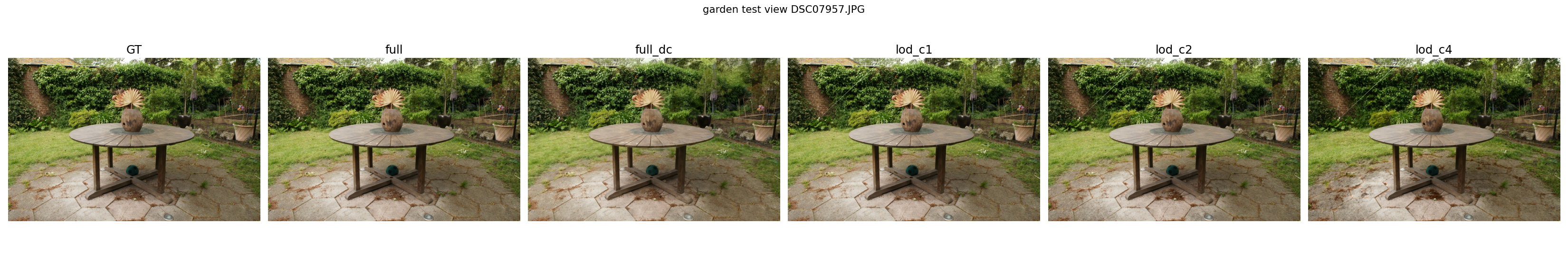}
\caption{garden, held-out trajectory view: reference photo, full render,
full render with DC-only color, and the LOD tiers of Table~2 of the main paper. DC-only aggregation costs $\sim$5 dB
(Section~6.7 of the main paper) while the first-moment SH tiers stay
close to the full render.}
\label{fig:garden_lod}
\end{figure*}

\begin{figure*}[t]
\centering
\includegraphics[width=\linewidth]{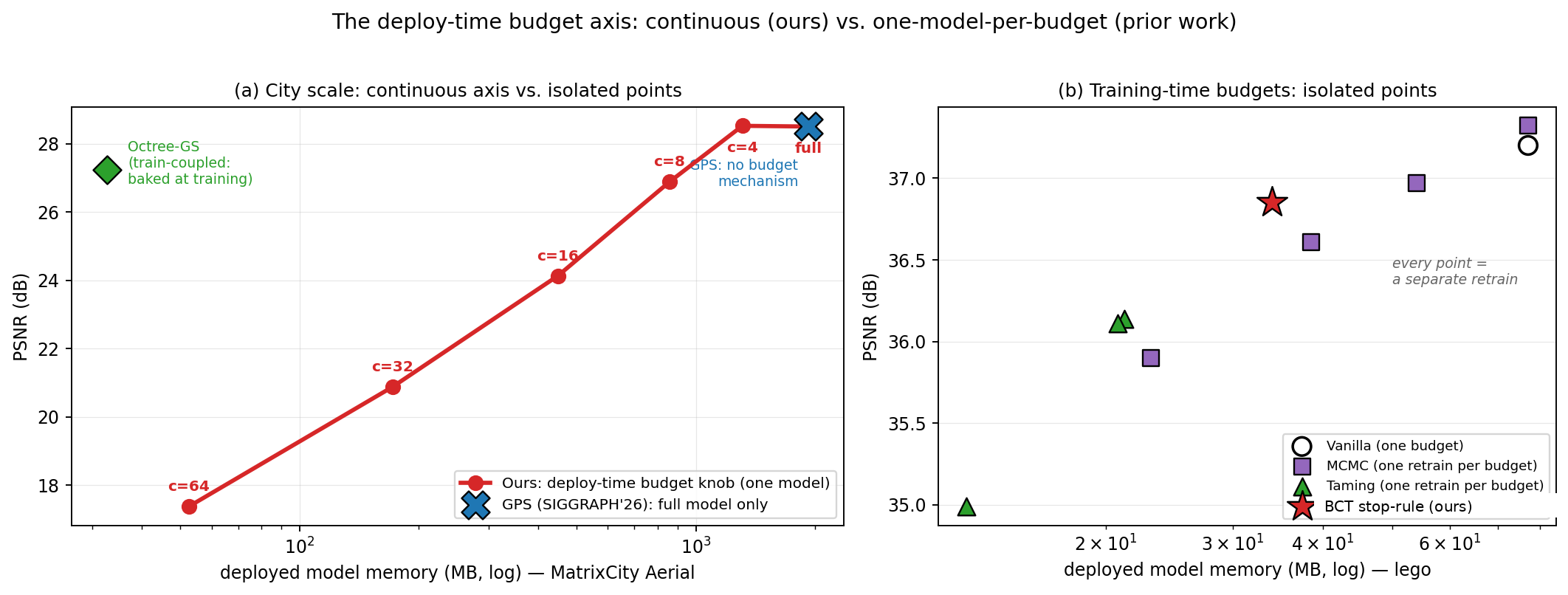}
\caption{The deploy-time budget axis: continuous (ours) vs.\
one-model-per-budget (prior work). (a)~MatrixCity Aerial: our single
model sweeps a continuous memory--quality curve; GPS renders the full
model only, and Octree-GS's LOD is baked at training time.
(b)~Training-time budgets (lego): MCMC and Taming produce one fixed-budget
model per retrain; the BCT stop rule reaches MCMC-level quality with one training run
at a smaller budget.}
\label{fig:sota_convergence}
\end{figure*}

\begin{figure*}[t]
\centering
\includegraphics[width=0.98\linewidth]{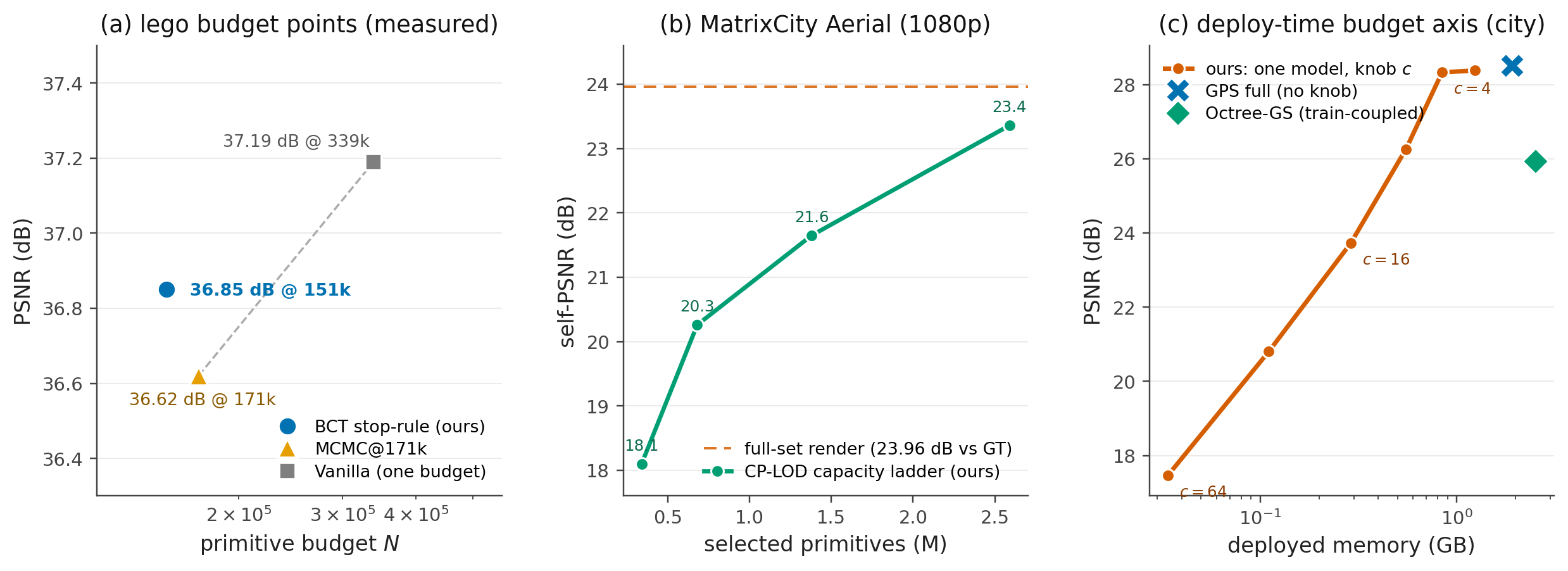}
\caption{Quality--budget behaviour across three axes. (a) training-time
primitive budgets on lego (measured points,
Table~\ref{tab:lego}): at 151k the BCT stop rule beats MCMC's own
budget point (171k) with $12\%$ fewer primitives; Vanilla is a single
point at 339k. (b) deploy-time capacity ladder on MatrixCity
Aerial (1080p): self-PSNR rises smoothly with the selected set toward the
full-model reference (dashed, vs.\ GT on this view). (c) the deploy-time
budget parameter (city scale, gsplat model sizes): one
trained model traces a continuous quality--memory curve; GPS offers only
its full model and Octree-GS is baked at training time.}
\label{fig:budget-axes}
\end{figure*}

\subsection{Memory--quality Pareto vs.\ GPS on MatrixCity}
\label{app:m4-pareto}

\begin{figure}[h]
\centering
\includegraphics[width=0.85\linewidth]{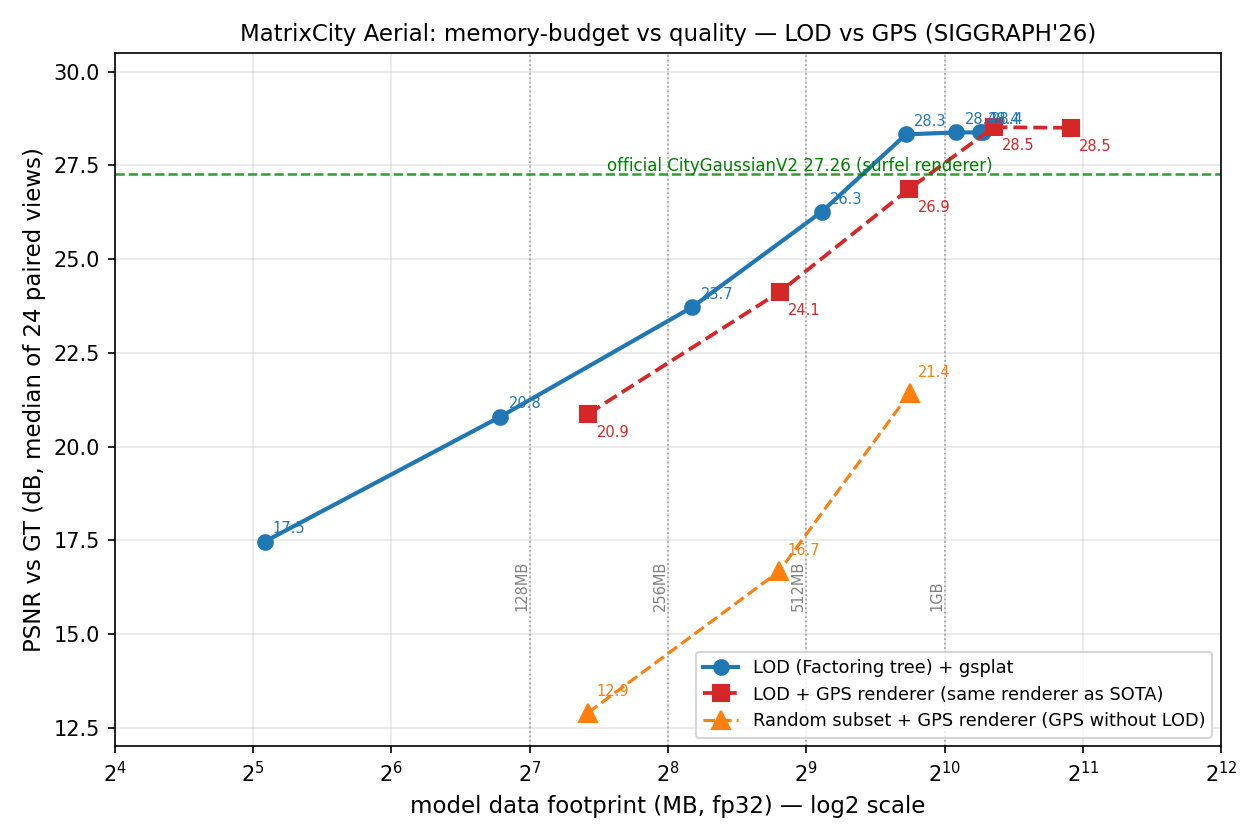}
\caption{Memory--quality Pareto on MatrixCity Aerial: our LOD curve, the same LOD exports through the GPS renderer, and GPS random subsampling.}
\label{fig:pareto_mc}
\end{figure}

Fig.~\ref{fig:pareto_mc}, Fig.~\ref{fig:sota_convergence}, and
Table~3 of the main paper establish the
central claim. \textbf{(a) A continuous budget--quality curve exists:}
1240 MB $\to$ 34 MB at 28.4 $\to$ 17.5 dB, near-lossless at $c\le4$ ($-0.05$~dB).
\textbf{(b) Renderer-agnostic:} exporting each LOD tier to the GPS renderer
reproduces the same curve within 0.5 dB, and $c{=}4$ is lossless under
\emph{both} renderers; the LOD layer carries the quality, not the
rasterizer. \textbf{(c) Same-budget dominance over random
subsampling:} at 44.6\%/23.2\%/8.9\% primitive budgets the LOD selection beats
random subsampling by $+5.5/+7.5/+8.0$~dB (median); at 8.9\% random
sampling is unusable (12.9 dB) while LOD keeps 20.9 dB. \textbf{(d) Memory:}
matching the $\sim$20.8 dB operating point needs 110.6 MB (ours) vs.\
1926 MB (GPS full), $17\times$ savings. This comparison holds everything
constant except the budget mechanism: same trained model, same renderer
(GPS), same cameras, same GT, and the LOD tier is re-rendered through GPS
itself (Section~5 of the main paper); the only difference is that our
axis lets the deployment pick 110.6 MB where GPS offers only the full
model (visual comparison in Fig.~\ref{fig:mc_visual}). Model data is the
fp32 splat bytes of the exported per-budget PLY (152 B/prim 3DGS,
236 B/prim GPS sh3; GPS full process peak 2364 MiB; byte layout
verified against an exported $c{=}8$ PLY).

\begin{table*}[t]
\centering\small
\caption{Floor curves across the 13 public scenes: full-model PSNR, full-range
chord slope (registered negative), steepest three-band window slope
(brackets $-1/2$), floor intactness (all pass).}
\label{tab:eu4scenes}
\begin{tabular}{lrrl}
\toprule
scene & $R_{\mathrm{full}}$ (dB) & chord & steepest \\
\midrule
garden   & 26.54 & $-0.155$ & $-0.26$ \hfill\proved \\
room     & 31.59 & $-0.258$ & $-0.47$ \hfill\proved \\
bonsai   & 32.37 & $-0.275$ & $-0.49$ \hfill\proved \\
bicycle  & 23.75 & $-0.124$ & $-0.20$ \hfill\proved \\
counter  & 29.27 & $-0.231$ & $-0.37$ \hfill\proved \\
flowers  & 20.13 & $-0.101$ & $-0.15$ \hfill\proved \\
kitchen  & 31.02 & $-0.235$ & $-0.41$ \hfill\proved \\
stump    & 23.85 & $-0.088$ & $-0.17$ \hfill\proved \\
treehill & 21.87 & $-0.073$ & $-0.12$ \hfill\proved \\
train    & 22.27 & $-0.141$ & $-0.18$ \hfill\proved \\
truck    & 25.22 & $-0.190$ & $-0.29$ \hfill\proved \\
drjohnson& 28.50 & $-0.203$ & $-0.31$ \hfill\proved \\
playroom & 29.19 & $-0.229$ & $-0.46$ \hfill\proved \\
\bottomrule
\end{tabular}
\end{table*}

\begin{table*}[t]
\centering
\small
\begin{tabular}{lrrr}
\toprule
step & primitives & time & peak GPU \\
\midrule
tree construction & 8.56M & 1.5 s & 19.4 GiB \\
LOD select + render (per view) & $0.1$--$8.6$M & 6--172 ms & 3.3--4.5 GiB \\
full render (gsplat, per view) & 8.56M & 3--13 ms & --- \\
\bottomrule
\end{tabular}
\caption{Offline evaluation protocol: MatrixCity on an A100 GPU (GPU 2,
80 GB), Python scheduler + gsplat. Tree construction is once per model;
per-view selection in Python dominates offline LOD latency. The real-time
CUDA path removes exactly this cost: $0.8$\,ms per-frame preprocess
(selection$+$gather) included in the $71.0$~fps full-set $1080p$ official
trajectory of our CUDA rasterizer (Table~4 of the main paper,
Section~6.6 of the main paper).}
\label{tab:eff}
\end{table*}

\end{document}